\newtheorem{corollary}{\color{blue}Corollary}
\newtheorem{proposition}{\color{blue}Proposition}
\newtheorem{lemma}{\color{blue}Lemma}[section]
\newtheorem{assumption}{\color{blue}Assumption}
\newtheorem{definition}{\color{blue}Definition}
\newenvironment{proof}[1][Proof]{\noindent\textbf{#1.} }{\ \rule{0.5em}{0.5em}}
\newcolumntype{L}[1]{>{\raggedright\let\newline\\arraybackslash\hspace{0pt}}m{#1}}
\newcolumntype{C}[1]{>{\centering\let\newline\\arraybackslash\hspace{0pt}}m{#1}}
\newcolumntype{R}[1]{>{\raggedleft\let\newline\\arraybackslash\hspace{0pt}}m{#1}}
\pgfplotsset{compat=1.18}
\DeclareMathOperator\supp{supp}
\begin{document}

\title{Persuasion in Lemons Markets\footnote{We are deeply grateful to Piotr Dworczak, Jeff Ely and Alessandro Pavan for their guidance. All errors remain our own.}}
\author{Andrea Di Giovan Paolo\footnote{Department of Economics, Northwestern University. Email: \href{mailto:andreadigiovanpaolo@u.northwestern.edu}{andreadigiovanpaolo@u.northwestern.edu}} \and Jose Higueras\footnote{Department of Economics, Northwestern University. Email: \href{mailto:josehiguerascorona2025@u.northwestern.edu}{josehiguerascorona2025@u.northwestern.edu}}}
\date{September 23, 2025}

\maketitle
 \vspace{-2em}
 \begin{center}
     \Large \textbf{Preliminary Draft}
 \end{center}
\vspace{2em}
\begin{abstract}
We study information disclosure in competitive markets with adverse selection. Sellers privately observe product quality, with higher quality entailing higher production costs, while buyers trade at the market-clearing price after observing a public signal. Because sellers’ participation in trade conveys information about quality, the designer faces endogenous constraints in the set of posteriors that she can induce. We reformulate the designer’s problem as a martingale optimal transport exercise with an additional condition that rules out further information transmission through sellers' participation decisions, and characterize the optimal signals. When the designer maximizes trade volume, the solution features negative-assortative matching of inefficient and efficient sellers. When the objective is a weighted combination of price and surplus, optimal signals preserve this structure as long as the weight on the price is high enough, otherwise they fully reveal low-quality types while pooling middle types with high-quality sellers.

\end{abstract}

\newpage

\section{Introduction}\label{sec:introduction}

In markets with adverse selection, it is often valuable for policymakers and intermediaries to promote market thickness or higher trade prices. This can be achieved either through direct intervention in the market (e.g., \citealp{tiroleOvercomingAdverseSelection2012}), or indirectly by controlling the information available to market participants, which is the focus of this paper. An important case where such intervention is desirable is when trade generates positive externalities, so that broader participation in the market is socially valuable, even if some transactions might be privately inefficient. For example, \cite{herkenhoff_how_2024} show that relaxing household credit constraints increases job-finding rates and improves worker–firm matching. In financial-stability policy stress-test disclosures affect both the severity of adverse selection and the extent of interbank risk-sharing, and fully informative signals can be suboptimal precisely because they undermine insurance incentives \citep{goldsteinStressTestsInformation2018}. Labor markets provide another example, where expanding participation by disadvantaged or riskier workers can generate social benefits beyond the surplus in the individual firm-worker match \citep{pallais_inefficient_2014}. A second set of circumstances that motivate interventions arises when the designer’s payoff depends directly on market outcomes such as the equilibrium price or, more generally, a convex combination of price and surplus. Online labor platforms, for example, observe rich worker performance histories and earn commissions as a fraction of wages; full disclosure of productivity could depress wages and reduce matches, lowering both participation and platform revenue. At the same time, platforms may wish to leave some rents to sellers to reduce the risk of migration to competing platforms. 



Motivated by these considerations, we analyze a general framework of public information disclosure in a competitive market with adverse selection à la \cite{akerlofMarketLemonsQuality1970}. Each seller privately observes the quality of her product and incurs a production cost that increases with quality\footnote{Similar techniques to the ones presented in this paper allow to study the case with decreasing costs, in which the market exhibits \textit{favorable} selection: the sellers most eager to trade are those with higher-quality products.}. A designer—who values either the weighted volume of trade or some combination of the market price and social surplus—commits in advance to a public signal that determines what information about quality is disclosed. After observing the signal, buyers decide whether to purchase at the market-clearing price. In the competitive equilibrium, the price must equal the expected quality of the product conditional on \textit{all} information available to market participants (see, e.g., \citealp{mas1995microeconomic}; \citealp{azevedoPerfectCompetitionMarkets2017a}). This includes not only the public signal but also the fact that only sellers whose production cost is below the price are willing to trade. Hence, by shaping buyers’ beliefs, each signal affects the price, which in turn determines which sellers participate, leading to additional updates in beliefs and price. The designer therefore faces an information-design problem with a fixed-point constraint that pins down the market-clearing price.

We show that this can be reformulated as a martingale optimal transport problem with a free marginal (see \citealp{kolotilinPersuasionMatchingOptimal2025}), with an additional constraint---which we call the ``prices-as-means constraint" (\ref{price})--- ruling out the double updating of beliefs described above (Proposition \ref{prop:obedience}). This, together with martingale and Bayes plausibility constraints, characterizes the set of implementable posteriors. The intuition for the new constraint is simple. Recall that the designer cares about the volume of trade, that is, she gets a payoff of $0$ whenever trade does not occur. If she were to induce any posterior mean $x$ different from the price, this would imply that some sellers in the market segment characterized by an average quality $x$ are not trading. The designer can achieve a (weakly) better outcome by fully revealing the types that are not trading. The price in the original segment would stay the same, and the average quality would now equal the price. Moreover, looking at the sellers' who are now being fully revealed, if they carry products whose value to the buyer is higher than the production cost, this would result in more trades taking place, increasing the designer's payoff. Otherwise, these sellers would not trade and the designer's payoff would remain unchanged. Thus, we can without loss of optimality focus on signals that induce prices as posteriors means whenever trade takes place, and avoid the complexities of dealing with the fixed point. 

We first analyze the benchmark in which the cost function partitions types into two contiguous intervals separated by a cutoff. Types below the cutoff are \textit{inefficient}, in the sense that their cost lies above the buyer’s valuation, while types above the cutoff are \textit{efficient}. In this setting, when the designer cares more about higher types trading, the optimal signal is unique and takes a ``reveal–pool" form (Proposition \ref{prop:alpha_increasing}). The lowest-quality sellers are fully revealed and do not trade, while every other inefficient type is paired with exactly one efficient type so that the resulting price equals the efficient type’s cost. The optimal matching is negative assortative: higher efficient types are paired with lower inefficient types. If the designer instead places greater weight on trade by lower types, under a regularity condition, the optimal signal takes a ``pool–reveal–pool" form: very low and very high (those closest to the cutoff) inefficient types trade, while some intermediate types are revealed. Whenever types are pooled, the signal follows the negative assortative structure described above, with pairs of inefficient and efficient types matched together to a mean equal to the cost of the efficient type (Proposition \ref{prop:alpha_convex}). 

Finally, when the designer’s objective is a convex combination of price and surplus, the negative-assortative benchmark remains uniquely optimal provided the weight on price is sufficiently high. As the designer places more weight on surplus, the set of optimal signals expands. A signal is optimal if and only if it fully reveals types below a cutoff, and arbitrarily pools all remaining inefficient types with efficient types, subject only to the condition that each pool’s mean equals the cost of its highest type (Proposition \ref{prop:revenue-surplus}). 

We consider two extensions (Propositions \ref{prop:extension_multiple_crossings} and Corollary \ref{coro:extension_gainsatthebottom}). If the cost function is such that gains from trade arise at the bottom of the quality distribution rather than the top, the uniquely optimal policy is full revelation. If instead the cost function partitions the type space in multiple disjoint efficient and inefficient regions, the analysis applies within each inefficient-efficient pair of regions: the optimal policy repeats the reveal–then–pool construction, yielding separate negative-assortative pools.

\paragraph{Related literature.}
Our paper contributes to the literature on information design (\citealp{rayo_optimal_2010}; \citealp{kamenica_bayesian_2011}), focusing on settings where receivers are exposed to additional information through subsequent interactions with privately informed players (\citealp{bergemannBayesCorrelatedEquilibrium2016}). Specifically, we consider information disclosure in a competitive lemons market, in which buyers can learn information about the quality of a product from sellers' decisions to trade at the market-clearing price. Technically, the problem we study belongs to the class of \textit{martingale optimal transport} problems with a free marginal (see \citealp{dworczakPersuasionDuality2024}; \citealp*{kolotilinPersuasionMatchingOptimal2025}). \cite{kolotilinPersuasionMatchingOptimal2025}, in particular, identify conditions under which signals that involve negative assortative matching are optimal. Our approach departs from this literature in two respects. First, our objective function does not satisfy the regularity conditions assumed in these papers. Second, we introduce an additional constraint that rules out double updating of beliefs (\ref{price}), and effectively imposes a support restriction on the optimal coupling. These differences mean that, even after re-framing our problem as an optimal transport exercise, we cannot use the main technical result in \cite{kolotilinPersuasionMatchingOptimal2025} (Theorem 1) to characterize the optimal signal.
\smallskip

\noindent This work is also related to the broad literature on disclosure and intervention in markets with adverse selection, going back to \citet{akerlofMarketLemonsQuality1970}. \cite{tiroleOvercomingAdverseSelection2012} studies how a regulator can ``jump start" a frozen market (that is, increase market thickness) by offering to buy potentially toxic assets from firms. The regulator is constrained by the fact that sellers, who are informed about the true value of their assets, strategically decide whether to self-select into the government program or the market. We also study a case in which the designer does not fully control the final allocation and has to interact with a competitive market, but focus instead on how public \textit{information} policies can increase market thickness when trade generates positive externalities. The closest paper to ours is \citet{goldsteinStressTestsInformation2018}, who study in a discrete-type setting how stress-test disclosures in financial markets affect both the severity of adverse selection and the scope for risk sharing. Our analysis builds on an optimal transport formulation, where complementary slackness under strong duality provides a sharp characterization of the support of the primal solution. This allows us to identify the set of candidate optimal signals and then construct dual multipliers that verify their optimality. The approach makes it possible to analyze, in a continuous-type framework, applications and market structures beyond the financial-stability context, including settings where the designer values a combination of market price and seller surplus, or when the cost structure is such that gains from trade are not monotonic in the seller's type.

Finally, our paper is related to the literature studying how different market segmentations or information structures affect prices and output (\citealp{bergemannLimitsPriceDiscrimination2015}; \citealp{kartik_lemonade_2025}). \citet{kartik_lemonade_2025}, in particular, analyze an adverse selection setting where the seller makes a take-it-or-leave-it offer to the buyer and characterize all possible outcomes as the information available to each side varies. We also study the effect of information in lemons markets. Our focus, however, is on identifying the information structure that maximizes market thickness and a combination of price and producer surplus. Moreover, we examine a different trade protocol, assuming that the price is determined competitively in equilibrium rather than by a monopolistic seller.

The rest of the paper is organized as follows. Section \ref{sec:model} introduces the model and notation. Section \ref{sec:optimaltransport} shows how the problem can be reduced to an optimal transport problem. Section \ref{sec:main_results} presents the main results, characterizing the optimal signal for the case where the designer maximizes weighted volume of trade and a convex combination of price and seller surplus. Section \ref{sec:extensions} discusses some extensions. Finally, Section \ref{sec:conclusion} concludes.

\section{Model}\label{sec:model}

\paragraph{Preliminaries.} In what follows, for any $x \in X$, we use $\delta_{x}$ to denote the Dirac measure concentrated on ${x}$. For any joint distribution $\pi \in \Delta(\Theta \times X)$, we let $\pi_{\theta} \in \Delta(X)$ denote the conditional distribution over $X$ given $\theta$; we define $\pi_{x} \in \Delta(\Theta)$ analogously. Finally, whenever we say that a joint distribution $\pi \in \Delta(\Theta \times X)$ is the \textit{unique} solution to some optimization problem with a marginal constraint over $\Theta$ given by $F$, we mean that for any other solution $\pi' \in \Delta(\Theta \times X)$ satisfying the same marginal constraint, we have $\pi_{\theta} = \pi'_{\theta}$ for $F$-almost every $\theta \in \Theta$.

\subsection{General Framework}

We study a standard competitive “lemons'' market à la \cite{akerlofMarketLemonsQuality1970}, with a unit mass of sellers and free entry of buyers. Each seller privately observes her type $\theta\in\Theta=[0,1]$, drawn from a distribution $F:\Theta \to [0,1]$ with strictly positive continuously differentiable density $f:\Theta \to (0,\infty)$. Producing an indivisible good of quality $\theta$ costs the seller $c(\theta)$, where $c:\Theta\to(0,\infty)$ is strictly positive and continuously differentiable. Buyers have unit demand. If trade occurs at price $p$, the buyer’s payoff is: 
\[
u_b=\theta-p,
\] 
and the seller’s payoff is: 
\[
u_s=p-c(\theta),
\] 
while both buyers and sellers have an outside option worth zero. Free entry of buyers pins down the \textit{competitive equilibrium price} $p^*$ through the fixed-point condition (see Definition 13.B.1 in \citealp{mas1995microeconomic}):
\begin{equation*}
    p^* = \mathbb{E}\bigl[\theta \mid c(\theta)\le p^*\bigr],
\end{equation*}
whenever the fixed point $p^*$ exists, trade occurs at price $p^*$. Otherwise, we assume market breakdown and set $p^* =0$. At price $p^*$, all types with $c(\theta) \leq p^*$ engage in trade.
The expected quality of traded products is then $\mathbb{E}\left[\theta \mid c(\theta) \leq p^*\right]$, which equals the price paid, $p^*$. Thus, consumer surplus is zero, which is the necessary condition for market clearing under free entry of buyers.

We introduce in this framework an information designer who observes data on $\theta$ and can commit to any \textit{public signal structure} $\sigma:\Theta\to\Delta(S)$, where the signal space $S$ is rich enough to ensure that the designer faces no exogenous constraints on communication. After a \textit{signal realization} $s \in S$, the competitive equilibrium price $p(s)$ is given by:
\begin{equation}\label{eq:fixed point constraint}
    p(s) = \mathbb{E}_{\sigma}\bigl[\theta \mid c(\theta)\le p(s),\,s\bigr], \tag{$EQ_s$}
\end{equation}
if one exists, and otherwise there is no trade after $s$, and $p(s)=0$. Here the expectation is taken with respect to the posterior over types induced by $\sigma$.

While the designer could potentially eliminate adverse selection in the market by providing a fully informative signal, the focus of our analysis is on the many situations where it is important to consider alternative objectives than simple efficiency. We illustrate this with four examples:
\begin{itemize}
    \item \textbf{Credit scoring and financial inclusion.} When policymakers value aggregate borrowing to expand financial inclusion, too fine-grained credit scores may result in excessively low credit access. 
    \item \textbf{Freelancing platforms.} Online labor platforms (e.g., Upwork) match freelancers with firms and often have more accurate information about worker productivity. Since these platforms earn a percentage of workers' wages, they have an incentive to manipulate the information available to prospective employers to induce higher wages.
    \item \textbf{Bank stress tests.} When regulators release results from stress tests, this can reduce adverse selection but also distort banks' incentives to insure each other against idiosyncratic risks \textit{ex-ante}. \citet{goldsteinStressTestsInformation2018} show that in such settings, welfare maximization entails maximizing the total probability of trade, weighted by a function of the type of the bank.
    \item \textbf{University grading.} A university may design its grading scheme to maximize the probability that its students are hired, without internalizing the overall effect on social welfare.
\end{itemize}

These examples motivate two classes of designer objectives. First, a \emph{weighted volume of trade} objective, given by:
\begin{equation}\label{eq:objective_trade}
    v_1(\theta,p) \;=\; \alpha(\theta) \cdot \,\mathbf{1}_{\{p\ge c(\theta)\}},
\end{equation}
where $\alpha:\Theta\to(0,\infty)$ is a positive and bounded function allowing us to place different weights on agent types, thus reflecting redistributive concerns. Second, a \emph{convex combination of price and producer surplus}\footnote{Under the assumption of a competitive market, consumer surplus is $0$ in the aggregate, thus the objective is equivalent to maximizing a combination of price and \textit{social} surplus.}:
\begin{equation}\label{eq:objective_price}
    v_2(\theta,p) \;=\;\bigl[p-(1-\beta)c(\theta)\bigr] \cdot \mathbf{1}_{\{p\ge c(\theta)\}},
\end{equation}
with $\beta\in[0,1]$ determining the relative weight on revenue versus efficiency. When $\beta=0$, the objective reduces to efficiency and full revelation is optimal. Our aim is instead to characterize the optimal information structure for any $\beta>0$.

We will maintain the following assumption throughout the paper, unless stated otherwise:
\begin{assumption}[Gains at the top]\label{assn:increasing_gainstop}
    The cost function $c:\Theta \to (0,\infty)$ is strictly increasing. Moreover, there exists a $\theta^* \in (0,1)$ such that $c(\theta^*)=\theta^*$, $c(\theta)>\theta$ for $\theta<\theta^*$ and $c(\theta)<\theta$ for $\theta>\theta^*$.
\end{assumption}
This case captures situations where only high-type sellers are willing to part with the object at a price equal to $\theta$. Under Assumption~\ref{assn:increasing_gainstop}, we will sometimes refer to types $\theta>\theta^*$ as \emph{efficient} and to types $\theta<\theta^*$ as \emph{inefficient}.

The assumption that $c$ is strictly increasing is economically motivated: it captures the presence of adverse selection in the market, i.e., sellers with lower types have lower reservation values and are thus more eager to trade. For example, in the canonical automobile market of \citet{akerlofMarketLemonsQuality1970}, a lower type corresponds to a car of lower quality and therefore to a lower willingness to accept from the seller.

Assumption \ref{assn:increasing_gainstop} also requires that $c$ intersects the $45^\circ$ line exactly once, and from above. We relax both assumptions in Section \ref{sec:extensions}. 

Importantly, this rules out situations in which there are \emph{gains from trade at the bottom} of the type distribution rather than \emph{gains at the top}, i.e., the case where there is still a unique intersection $\theta^*$, but types below $\theta^*$ are efficient ($\theta>c(\theta)$) while types above are inefficient ($\theta<c(\theta)$). We show in Corollary~\ref{coro:extension_gainsatthebottom} that when there are gains of trade at the bottom, the designer’s problem becomes trivial: it is optimal to fully reveal all sellers’ types regardless of her objective. Intuitively, pooling inefficient and efficient types lowers the average type below the inefficient one, so inefficient types can never be induced to trade.

Restricting to a single intersection simplifies exposition. As we show in Section~\ref{sec:extensions}, when $c$ intersects the $45^\circ$ line finitely many times, the optimal information structure strongly resembles the single-intersection case, but the analysis becomes more cumbersome because one must keep track of multiple intersection points.

\section{Information design problem}\label{sec:optimaltransport}
Since in competitive markets with adverse selection the agents' decision to engage in trade can convey information about the state of the world, the information designer is constrained in the set of posteriors she can generate. Importantly, the constraint is endogenous, as it depends on the induced equilibrium price, which in turn depends on the signal. We now show that the designer's problem reduces to a mean persuasion problem, with an additional condition ensuring that market participants cannot learn more information from the downstream market interactions. This greatly simplifies the problem, as it allows us to avoid explicitly taking into account the updating process leading to the fixed-point (\ref{eq:fixed point constraint}).

To see this, consider a simple example in which there are only two types of sellers, $\theta_L=0$ and $\theta_H=1$, with $c_L=1/8$ and $c_H=1/2$. There is a fraction $3/4$ of low types. Consider the signal $\sigma$ given by:
\begin{center}
\begin{tabular}{ c c c }
 & $s_1$ & $s_2$ \\
 $\theta_L$ & $7/9$ & $2/9$ \\ 
 $\theta_H$ & $1/3$ & $2/3$ \\  
\end{tabular}
\end{center}
This generates the posterior means $x_1=\mathbb{E}_{\sigma}[\theta \mid s_1]=\frac{1}{8}$ (with probability $\frac{2}{3}$) and $x_2=\mathbb{E}_{\sigma}[\theta \mid s_2]=1/2$ (with probability $\frac{1}{3}$). Upon observing $s_1$, the market internalizes the fact that only low-type sellers are willing to trade, leading to an additional update of beliefs to a posterior mean (and a price of) $x'_1=0$. When $s_2$ realizes, instead, both worker types are willing to work at a price of $x_2$. Thus there is no additional updating of beliefs and the market clears at a price of $x_2$. It is obvious in this simple example that inducing a mean $x \in [c_L,c_H)$ (leading to ``double updating") is wasteful, as it involves sacrificing some high types who end up not trading. We can therefore improve on this signal by taking the high types in the support of $\sigma(\cdot\,|\,s_1)$ and fully revealing them. This allows us to relabel the signal realizations that result in trade so that each is ``unbiased'', meaning it coincides with the posterior mean it induces. In particular, consider $\sigma'$ defined as follows:
\begin{center}
\begin{tabular}{ c c c c }
 & $x_1=0$ & $x_2=\frac{1}{2}$ & $x_3=1$ \\
 $\theta_L$ & $7/9$ & $2/9$ & $0$ \\ 
 $\theta_H$ & $0$ & $2/3$ & $1/3$ \\  
\end{tabular}
\end{center}
The new signal $\sigma'$ is unbiased and clearly improves on the previous one under both types of designer objectives (\ref{eq:objective_trade}) and (\ref{eq:objective_price}): the high types who were previously pooled in $s_1$ are now being revealed in $x_3$ and thus get to trade at a price of $1$, thus increasing the average price, volume of trade and surplus. When $x_1$ and $x_2$ realize the equilibrium price is still $0$ and $1/2$, respectively.

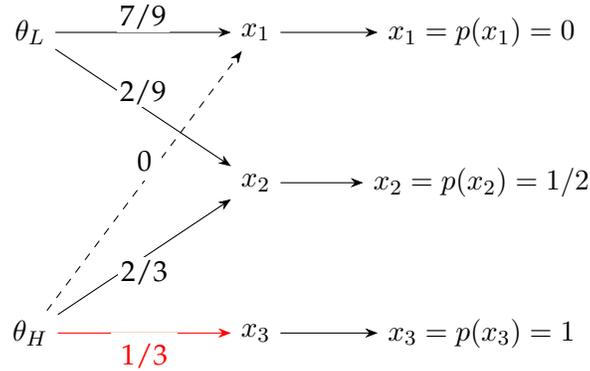
\begin{figure}[H]
  \centering
  \begin{tikzpicture}[>=Stealth]
    \node (A) at (0,4) {$\theta_L$};
    \node (D) at (0,0) {$\theta_H$};

    \node (B) at (3,4) {$x_1$};
    \node (C) at (3,2) {$x_2$};
    \node (E) at (3,0) {$x_3$};

    \node (F) at (6,4) {$x_1=p(x_1)=0$};
    \node (G) at (6,2) {$x_2=p(x_2)=1/2$};
    \node (H) at (6,0) {$x_3=p(x_3)=1$};

    \draw[->, inner sep=1pt] (A) -- node[above, fill=white] {7/9} (B);
    \draw[->, inner sep=1pt] (A) -- node[above, fill=white] {2/9} (C);
    \draw[dashed, ->] (D) -- node[above, fill=white] {0}   (B);
    \draw[->, inner sep=1pt] (D) -- node[below, fill=white] {2/3} (C);
    \draw[red,->] (D) -- node[below, fill=white] {1/3} (E);

    \draw[->] (B) -- (F);
    \draw[->] (C) -- (G);
    \draw[->] (E) -- (H);
  \end{tikzpicture}
  \caption{Construction of $\sigma'$ from $\sigma$}
  \label{fig:double_updating}
\end{figure}
The only additional technicality is that to fully formulate the problem as a mean persuasion problem we need to make sure that the fixed point always exists, even when the signal leads to market breakdown or no trade. To achieve this we define an auxiliary cost function:
\begin{equation*}\label{eq:costHat}
    \hat{c}(\theta)=\min\{\theta, c(\theta)\}.
\end{equation*}
See that whenever an inefficient type (i.e., one such that $\theta<c(\theta)$) is fully revealed, trade does not occur. However, under this auxiliary cost function the fixed-point condition is well defined as $\theta=\mathbb{E}_{\sigma}[\theta \mid\theta, \, \hat{c}(\theta) \leq \theta]$. 

As Figure \ref{fig:double_updating} illustrates, $\sigma'$ has two important properties: it is unbiased, and it eliminates double updating. Specifically, the posterior mean induced by $\sigma'$ coincides with the market-clearing price whenever there is no market breakdown. These observations are formalized in the following proposition:

\begin{proposition}[Prices as posterior means]\label{prop:obedience}
    Without loss of optimality, we can restrict attention to signal structures $\sigma$ that satisfy:
    \[
        x = \mathbb{E}_{\sigma}[\theta \mid x] = \mathbb{E}_{\sigma}[\theta \mid x,\; \hat{c}(\theta) \leq x].
    \]
\end{proposition}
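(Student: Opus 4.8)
The plan is to turn the informal "reveal–the–non–traders" argument of the two-type example into a general construction that maps any signal $\sigma$ to a new signal $\sigma'$ with (weakly) higher designer payoff and satisfying the two displayed identities. First I would set up notation: given $\sigma$, let $x$ index the signal realizations by the posterior mean they induce (pooling realizations with the same induced mean), and recall that at realization $x$ the downstream equilibrium price $p(x)$ solves the fixed point in \eqref{eq:fixed point constraint}, i.e. $p(x) = \mathbb{E}_\sigma[\theta \mid x, \hat c(\theta)\le p(x)]$, using the auxiliary cost $\hat c$ so existence is never an issue. The key elementary observation is that $p(x)\le x$ always — selecting on $\hat c(\theta)\le p(x)$ can only remove higher types — with equality exactly when every type in the support of $\pi_x$ has $\hat c(\theta)\le p(x)$, i.e. when there is no further updating.

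Next I would define the improved signal $\sigma'$ realization by realization. Fix a realization $x$ with $x > p(x)$ (if $x=p(x)$ leave it untouched). Split the conditional type distribution $\pi_x$ into the "traders" $T_x = \{\theta : \hat c(\theta) \le p(x)\}$ and the "non-traders" $N_x = \{\theta: \hat c(\theta) > p(x)\}$. Replace the single realization $x$ by: (i) a realization at mean $p(x)$ carrying exactly the mass $\pi_x|_{T_x}$ — its posterior mean is $p(x)$ by the fixed-point equation, and since all its types satisfy $\hat c(\theta)\le p(x)$ its downstream price is again $p(x)$, so it is unbiased and free of double updating; and (ii) full revelation of every type in $N_x$, i.e. send each such $\theta$ to the degenerate realization $\delta_\theta$. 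For a degenerate realization at $\theta$ the posterior mean is $\theta$ and the equilibrium price under $\hat c$ is $\theta$ as well (the fixed point $\theta = \mathbb{E}[\theta\mid\theta,\hat c(\theta)\le\theta]$ always holds), so these realizations are also unbiased with no double updating. I would then check that $\sigma'$ is a valid signal structure: it is a Markov kernel, and the marginal over $\Theta$ is preserved because we have only relabeled where the mass of each $\theta$ goes (every $\theta\in T_x$ keeps its contribution to the $p(x)$-realization; every $\theta\in N_x$ is moved to $\delta_\theta$); Bayes plausibility / the martingale property over posterior means is inherited since we split $x$ into $p(x)$ and points $\{\theta\}_{\theta\in N_x}$ whose $\pi_x$-average is exactly $x$.

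It then remains to show the payoff does not decrease, for each of the two objectives \eqref{eq:objective_trade} and \eqref{eq:objective_price}. At the untouched realizations and at the new $p(x)$-realization the set of traders and the price are unchanged relative to $\sigma$, so the contribution is identical. At the newly created degenerate realizations $\delta_\theta$ with $\theta\in N_x$: under $\sigma$ such a $\theta$ contributed $0$ to the objective (it was not trading, since $\hat c(\theta)>p(x)$ means $c(\theta)>p(x)$, the true price); under $\sigma'$ it contributes $v_i(\theta,\theta)\,\mathbf 1_{\{\theta\ge c(\theta)\}}\ge 0$ — zero if $\theta$ is inefficient, and $\alpha(\theta)>0$ or $p-(1-\beta)c(\theta) = \theta-(1-\beta)c(\theta) \ge \beta\theta>0$ if $\theta\ge c(\theta)$. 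Either way the change is nonnegative, so summing/integrating over all split realizations gives the designer a weakly higher payoff under $\sigma'$, which by construction satisfies $x=\mathbb{E}_{\sigma'}[\theta\mid x]=\mathbb{E}_{\sigma'}[\theta\mid x,\hat c(\theta)\le x]$ at every realization.

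The main obstacle I anticipate is purely measure-theoretic bookkeeping rather than conceptual: making the "split each realization $x$ into $p(x)$ plus full revelation of $N_x$" operation rigorous when the signal space and the set of realizations are a continuum — one has to define $\sigma'$ as a kernel via a disintegration of $\sigma$, verify measurability of $x\mapsto p(x)$ and of the trader/non-trader decomposition, and confirm that the resulting object still has the right $\Theta$-marginal $F$. A secondary subtlety is handling market breakdown ($p(x)=0$) uniformly; this is exactly what the auxiliary cost $\hat c$ is designed to absorb, since under $\hat c$ a degenerate realization always has a well-defined fixed point equal to its own type, so "no trade" realizations are simply the inefficient degenerate ones and cause no special case in the argument.
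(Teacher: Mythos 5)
Your proposal is correct and is exactly the formalization of the argument the paper itself sketches via the two-type example and then declares ``trivially follows from the arguments sketched above.'' Splitting each realization into the trader pool at $p(x)$ plus full revelation of the non-traders, noting $p(x)\le x$ because $\hat c$ is increasing, and checking the per-type payoff change is nonnegative under both objectives is precisely the intended route; the measure-theoretic caveats you flag are real but are the reason the paper leaves the proof informal rather than a gap in your reasoning.
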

We omit the proof of Proposition \ref{prop:obedience} as it trivially follows from the arguments sketched above.

The main advantage of Proposition \ref{prop:obedience} is that it allows us to formulate the designer's problem as an optimal transport problem, in the spirit of \citet{kolotilinPersuasionMatchingOptimal2023} and \citet{dworczakPersuasionDuality2024}, where instead of choosing arbitrary signal structures, the designer selects a joint distribution $\pi \in \Delta(\Theta \times X)$ over types $\Theta$ and posterior means $X \equiv [0,1]$ that are \textit{feasible}, i.e., that satisfy the following conditions:

\begin{align}
        \int_A \int_X d\pi(\theta, x) &= \int_A f(\theta)\,d\theta \quad \text{for all measurable } A \subseteq \Theta \tag{$BP$} \label{BP} \\
        \int_\Theta \int_B (x - \theta) \, d\pi(\theta, x) &= 0 \quad \text{for all measurable } B \subseteq X \tag{$M$} \label{martingale} \\
        \int_\Theta \int_B \mathbf{1}_{\{ \hat{c}(\theta) > x \}} \, d\pi(\theta, x) &= 0 \quad \text{for all measurable } B \subseteq X \tag{$PM$} \label{price}
    \end{align}
The first condition (\ref{BP}) is the standard \textit{Bayes' plausibility} constraint. The second is a \textit{martingale} constraint, requiring that, conditional on any $x$, the posterior distribution over types, $\pi_x$, has mean $x$. Finally, the last is an \textit{prices-as-means} constraint, which ensures that beliefs are not updated after the market observes the public signal $x$, implying that $x$ is also the price at which the market clears whenever trade occurs.

Moreover, observe that when we restrict attention to feasible joint distributions, any posterior mean $x < \theta^*$ necessarily leads to market breakdown. To see this, consider for instance the mean $x' < \theta^*$ in Figure~\ref{fig:example_increasing_cost}. By the martingale condition~(\ref{martingale}), any pooling of types that produces a posterior mean of $x'$ must include some $\theta > x'$, say $\theta''$. However, this violates the prices-as-means constraint~(\ref{price}), since $c(\theta'') > c(x') > x'$, where the first inequality follows from $c$ being increasing, and the second from $x' < \theta^*$.

\begin{figure}[H]
    \centering
    \includegraphics[width=0.55\linewidth]{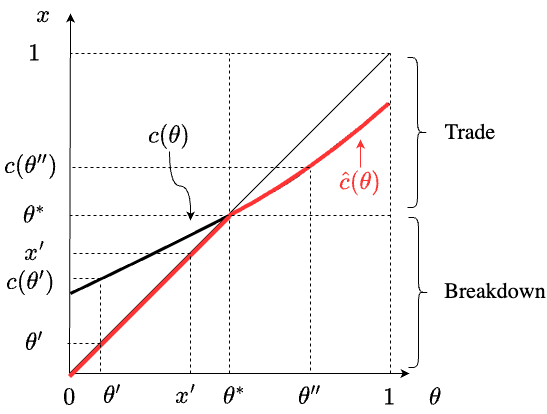}
    \caption{Example of cost function}
    \label{fig:example_increasing_cost}
\end{figure}

Thus, whenever a feasible joint distribution generates a posterior mean $x < \theta^*$, the only type that can be matched with $x$ is $\theta = x$, i.e.\ $\supp\,\pi_{x}=\{x\}$. Equivalently, type $\theta=x$ is revealed conditional on $x$.

Conversely, whenever a feasible joint distribution generates a mean $x \geq \theta^*$, trade occurs at price $x$. Indeed, the prices-as-means constraint~(\ref{price}) requires all types in the support of $\pi_x$ to satisfy $\hat{c}(\theta) \leq x$. If $\theta$ is efficient, then $c(\theta)=\hat{c}(\theta) \leq x$, so $\theta$ is willing to trade at price $x$. If $\theta$ is inefficient, then $c(\theta) < \theta^* \leq x$, and again $\theta$ is willing to trade at price $x$.

This observation is formalized in the following corollary: 

\begin{corollary}\label{cor:no_means_below_theta_star}
    Let $\pi$ be a feasible joint distribution. Then, for $\pi$-almost every $x < \theta^*$, we have $\supp\,\pi_x = \{x\}$ and market breakdown occurs. Conversely, for $\pi$-almost every $x \geq \theta^*$, trade occurs at price $x$. 
\end{corollary}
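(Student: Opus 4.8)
The plan is to disintegrate $\pi$ along its $X$-marginal $\pi_X$, turn the integral constraints (\ref{martingale}) and (\ref{price}) into almost-sure pointwise statements about the conditionals $\pi_x$, and then treat the regions $\{x<\theta^*\}$ and $\{x\geq\theta^*\}$ separately --- essentially formalizing the informal argument given just before the corollary. Since $\Theta\times X=[0,1]^2$ is a standard Borel space, $\pi$ admits a disintegration $d\pi(\theta,x)=d\pi_x(\theta)\,d\pi_X(x)$. Testing (\ref{martingale}) against $B=\{x:\mathbb{E}_{\pi_x}[\theta]>x\}$ and against its complement yields $\mathbb{E}_{\pi_x}[\theta]=x$ for $\pi_X$-a.e.\ $x$; testing (\ref{price}) against $B=\{x:\pi_x(\{\theta:\hat c(\theta)>x\})>0\}$ yields $\pi_x(\{\theta:\hat c(\theta)>x\})=0$ for $\pi_X$-a.e.\ $x$. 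Every subsequent step fixes an $x$ in the intersection of these two full-measure sets.

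For $x<\theta^*$, I would show $\pi_x=\delta_x$ by contradiction. If $\pi_x\neq\delta_x$ then $\pi_x$ must assign positive mass to $\{\theta>x\}$ (otherwise $\pi_x$ would be supported on $[0,x]$ and, having mean $x$, would equal $\delta_x$); and for any such $\theta$, monotonicity of $c$ together with Assumption~\ref{assn:increasing_gainstop} gives $c(\theta)\geq c(x)>x$, while also $\theta>x$, so $\hat c(\theta)=\min\{\theta,c(\theta)\}>x$ --- contradicting the conclusion of the disintegration step. Hence $\supp\pi_x=\{x\}$, i.e.\ type $\theta=x$ is revealed conditional on $x$; and since $x<\theta^*$ implies $c(x)>x$, this type is inefficient, so --- as observed right after the definition of $\hat c$ --- trade does not occur, and we record market breakdown at $x$.

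For $x\geq\theta^*$, I would first strengthen the constraint $\hat c(\theta)\leq x$ (which holds for $\pi_x$-a.e.\ $\theta$) to $c(\theta)\leq x$: if $c(\theta)\leq\theta$ this is immediate since then $\hat c(\theta)=c(\theta)$; if instead $c(\theta)>\theta$, then $\theta<\theta^*$, so monotonicity of $c$ gives $c(\theta)\leq c(\theta^*)=\theta^*\leq x$. Thus $\{\theta:c(\theta)\leq x\}$ has full $\pi_x$-measure, so conditioning on it leaves the mean at $\mathbb{E}_{\pi_x}[\theta]=x$; that is, $p=x$ solves the fixed-point equation (\ref{eq:fixed point constraint}), and trade occurs at price $x$.

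The argument is short, so the only points that require care are the measure-theoretic bookkeeping of the first step --- converting the ``for all measurable $B$'' constraints into almost-everywhere pointwise identities and intersecting the two resulting full-measure sets --- and, in the last step, the fact that (\ref{eq:fixed point constraint}) could a priori admit more than one solution; handling this cleanly is precisely what the auxiliary cost $\hat c$ and the prices-as-means constraint (\ref{price}) are for, since they guarantee that $x$ is the relevant market-clearing price whenever trade takes place.
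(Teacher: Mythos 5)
Your proof is correct and follows essentially the same approach as the paper's informal argument in the paragraph preceding the corollary (the paper does not give a separate formal proof); you simply formalize the measure-theoretic bookkeeping via disintegration and pointwise almost-everywhere identities, which the paper leaves implicit.
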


By Corollary \ref{cor:no_means_below_theta_star}, without loss of generality, we can write the designer’s objective in the weighted volume of trade case and in the convex combination of price and producer surplus as follows:
\begin{align*}
    v_1(\theta,x) \;&=\; \alpha(\theta) \cdot \,\mathbf{1}_{\{x\ge \theta^*\}},\\
    v_2(\theta,x) \;&=\;[x-(1-\beta)c(\theta)] \cdot \,\mathbf{1}_{\{x\ge \theta^*\}}.
\end{align*}

Then, the designer's problem can be written as:
\begin{align*}
         \max_{\pi \in \Delta(\Theta \times X)} \; &\int_0^1\int_0^1 v_i(\theta,x) \, d\pi(\theta, x) \tag{$P$} \label{primal}\\
         &\text{s.t \ref{BP}, \ref{martingale} and \ref{price}.}
\end{align*}
As noted in the introduction, from a technical point of view, this belongs to a class of problems known as \textit{martingale optimal transport}, but with the important difference that the target marginal measure is free (see, for instance, \citealp{kolotilinPersuasionMatchingOptimal2023} and \citealp{dworczakPersuasionDuality2024}). We impose an additional constraint, (\ref{price}), which ensures that any feasible $\pi$ does not generate a mean $x$ using types $\theta$ which end up not trading in the induced equilibrium. In other words, for every $\theta$ in the support of $\pi_x$ we have that $\hat{c}(\theta) \leq x$.

Our problem can also be interpreted as a multi-receiver game, in which one of the receivers (the buyer) is uninformed, and is simply trying to match the state, and the other receiver (the seller) is fully informed. The information asymmetry between the players implies that the uninformed agent can potentially garner information from the downstream interactions. According to this interpretation, the constraint (\ref{price}) is meant to capture obedience by the informed receiver, in the spirit of the Bayes correlated equilibrium of \cite{bergemannBayesCorrelatedEquilibrium2016}.

Additionally, to avoid trivial solutions to problem \ref{primal}, we impose the following assumption, which rules out feasible joint distributions achieving full trade:

\begin{assumption}[Infeasibility of full trade]\label{assn:no_full_trade}
    There is no feasible joint distribution $\pi \in \Delta(\Theta \times X)$ such that:
    \[
    \int_0^1\int_0^1\mathbf{1}_{\{x \geq \theta^*\}}\,d\pi(\theta,x)=1.
    \]
\end{assumption}

To derive most of our results we leverage duality. In particular, the dual problem to \ref{primal} is to find measurable functions $q:X\to \mathbb{R}$, $m:X\to \mathbb{R}$ and $w:\Theta\to \mathbb{R}$ that solve:
\begin{align*}
        \min_{w,q,p} \; &\int_0^1w(\theta)f(\theta)\;d\theta \tag{$D$} \label{dual} \\
        &\text{s.t} \quad w(\theta)\geq v_i(\theta,x)+q(x)(x-\theta)+m(x)\,\mathbf{1}_{\{ \hat{c}(\theta) > x \}}, \quad \text{for all $(\theta,x) \in \Theta \times X$.} \tag{$ZP$} \label{ZP}
\end{align*}
The interpretation of the dual formulation is as follows: $w(\theta)$ is the shadow price of type $\theta$, while $q(x)$ and $m(x)$ represent the values of relaxing the martingale constraint~(\ref{martingale}) and the prices-as-means constraint~(\ref{price}), respectively, at posterior mean $x$. Equation~\ref{ZP} then states that the shadow price of type $\theta$ is no less than the designer’s value from assigning type $\theta$ to any posterior mean $x$. This value consists of the designer’s objective $v_i(x,\theta)$, plus $q(x)$ multiplied by the degree to which constraint~(\ref{martingale}) is relaxed at $x$, and $m(x)$ multiplied by the degree to which constraint~(\ref{price}) is relaxed at $x$.

We say that there is \textit{no duality gap} if the value of the primal problem \ref{primal} equals the value of the dual problem \ref{dual}. We say that there is \textit{primal and dual attainment} if solutions exist for the primal and dual problems, respectively. Finally, we use the term \textit{strong duality} to refer to the case where there is both primal and dual attainment and no duality gap.

We will rely on the following two standard duality results in our analysis:

\begin{lemma}\label{cor:strong-duality}
     Suppose that the functions $(w, q, p)$ satisfy condition~\ref{ZP}, and that the joint distribution $\pi$ is feasible. Moreover, suppose that the value of the primal problem~(\ref{primal}) under $\pi$ equals the value of the dual problem~(\ref{dual}) under $(w, q, p)$. Then $\pi$ solves~\ref{primal}, $(w, q, p)$ solves~\ref{dual}, and strong duality holds.
\end{lemma}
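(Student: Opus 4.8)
The plan is the standard ``weak duality together with a matching pair'' argument. I would first prove weak duality in the form needed here, and then observe that a primal feasible point and a dual feasible point with equal objective values are necessarily both optimal, with no gap between them.

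\emph{Step 1: weak duality.} Fix any feasible $\pi'$ and any triple of dual variables $(w',q',m')$ satisfying \ref{ZP} (the third function is written $m$, as in \ref{ZP}; the statement of the lemma calls it $p$). Integrating the pointwise inequality \ref{ZP} against $d\pi'(\theta,x)$ over $\Theta \times X$, the left-hand side becomes $\int_\Theta\!\int_X w'(\theta)\,d\pi'(\theta,x)=\int_0^1 w'(\theta)f(\theta)\,d\theta$ by the Bayes-plausibility constraint \ref{BP}, i.e.\ exactly the dual objective \ref{dual} at $(w',q',m')$. On the right-hand side, the term $\int q'(x)(x-\theta)\,d\pi'$ vanishes by the martingale constraint \ref{martingale}, and the term $\int m'(x)\mathbf{1}_{\{\hat{c}(\theta)>x\}}\,d\pi'$ vanishes by the prices-as-means constraint \ref{price}; in both cases one upgrades the ``for every measurable $B$'' form of the constraint to integration against the multiplier by approximating it with simple functions and passing to the limit by dominated convergence, which is legitimate once attention is restricted to triples for which the dual objective and these two cross terms are well defined (and this is automatic for the bounded multipliers constructed later). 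What remains on the right is $\int v_i(\theta,x)\,d\pi'$, the primal objective \ref{primal} at $\pi'$. Hence the primal value at $\pi'$ is at most the dual value at $(w',q',m')$.

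\emph{Step 2: conclusion.} I would apply Step 1 twice. Taking the dual variables to be the hypothesized $(w,q,p)$: every feasible $\pi'$ gives a primal value no larger than the dual value at $(w,q,p)$, which by hypothesis equals the primal value at $\pi$; since $\pi$ is itself feasible, it maximizes \ref{primal}. Taking the primal measure to be the hypothesized $\pi$: every $(w',q',m')$ satisfying \ref{ZP} gives a dual value no smaller than the primal value at $\pi$, which equals the dual value at $(w,q,p)$; hence $(w,q,p)$ minimizes \ref{dual}. Exhibiting $\pi$ and $(w,q,p)$ gives primal and dual attainment, and the equality of their values is precisely the statement that there is no duality gap; thus strong duality holds.

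\emph{The main obstacle.} There is no real conceptual difficulty here --- the result is the usual linear-programming duality bookkeeping --- and the only points that need care are measure-theoretic: applying Tonelli to split integrals over $\Theta \times X$, promoting the set-indexed constraints \ref{martingale} and \ref{price} to integration against $q'$ and $m'$, and making sure the dual objective and the cross terms are finite. I would state these integrability caveats at the outset so that the chain of (in)equalities in Step 1 is unambiguous.
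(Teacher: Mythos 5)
Your argument is the standard weak-duality bookkeeping and it is correct: integrate \ref{ZP} against any feasible $\pi'$, use \ref{BP}, \ref{martingale}, \ref{price} to reduce the left side to the dual objective and kill the two cross terms, then apply the resulting inequality twice to conclude joint optimality and zero gap. The paper states this lemma without proof as a standard duality fact, and your proof is precisely the argument the authors are implicitly invoking; your added remarks about promoting the set-indexed constraints to integration against $q$ and $m$ and about finiteness of the cross terms are the right caveats to flag.
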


\begin{lemma}[Complementary Slackness]\label{cor:complementary-slackness}
    Suppose that strong duality holds, and let $(w, q, v)$ be any solution to \ref{dual}. Then, for $F$-almost every $\theta \in \Theta$, we have:
    \[
        w(\theta) = \sup_{x \in X} \; v_i(\theta,x) + q(x)(x - \theta) + m(x)\, \mathbf{1}_{\{ \hat{c}(\theta) > x \}}.
    \]

    Moreover, a feasible $\pi$ solves~\ref{primal} if and only if, for $\pi$-almost every $(\theta, x) \in \Theta \times X$, we have:
    \[
        w(\theta)= v_i(\theta,x) + q(x)(x - \theta) + m(x)\, \mathbf{1}_{\{ \hat{c}(\theta) > x \}}.
    \]
\end{lemma}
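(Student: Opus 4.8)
The plan is to obtain both claims from the standard complementary--slackness identity of linear programming duality. Write $(w,q,m)$ for the given dual solution and define, on $\Theta\times X$, the nonnegative \emph{slack}
\[
    g(\theta,x)\;:=\;w(\theta)-v_i(\theta,x)-q(x)(x-\theta)-m(x)\,\mathbf{1}_{\{\hat{c}(\theta)>x\}},
\]
which is $\ge 0$ at every $(\theta,x)$ precisely by the dual feasibility condition~(\ref{ZP}). The first step is to verify the identity: for every feasible $\pi$,
\[
    \int_{\Theta\times X} g\,d\pi \;=\; \int_0^1 w(\theta)f(\theta)\,d\theta \;-\; \int_{\Theta\times X} v_i\,d\pi ,
\]
obtained by integrating $g$ term by term: the $w$-term gives $\int w f$ because the $\Theta$-marginal of $\pi$ is $F$ by~(\ref{BP}); the $q$-term vanishes because~(\ref{martingale}) yields $\int q(x)(x-\theta)\,d\pi=0$; and the $m$-term vanishes because~(\ref{price}) yields $\int m(x)\mathbf{1}_{\{\hat{c}(\theta)>x\}}\,d\pi=0$.

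Next I would prove the ``if and only if'' statement. Since strong duality is assumed, the primal~(\ref{primal}) and dual~(\ref{dual}) have a common optimal value $V$, and dual-optimality of $(w,q,m)$ gives $\int_0^1 w(\theta)f(\theta)\,d\theta = V$. Plugging this into the identity, for any feasible $\pi$ we get $\int g\,d\pi = V - \int v_i\,d\pi\ge 0$ (the inequality because $\int v_i\,d\pi\le V$ for primal-feasible $\pi$, consistently with $g\ge 0$). Hence $\pi$ is primal-optimal $\iff \int v_i\,d\pi=V\iff \int g\,d\pi=0$, and since $g\ge 0$ this last condition is equivalent to $g=0$ $\pi$-almost everywhere, i.e.\ to the asserted equality $w(\theta)=v_i(\theta,x)+q(x)(x-\theta)+m(x)\mathbf{1}_{\{\hat{c}(\theta)>x\}}$ for $\pi$-a.e.\ $(\theta,x)$.

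Finally I would establish the pointwise formula for $w$. One direction is immediate from~(\ref{ZP}): $w(\theta)\ge \sup_{x\in X}\bigl[v_i(\theta,x)+q(x)(x-\theta)+m(x)\mathbf{1}_{\{\hat{c}(\theta)>x\}}\bigr]$ for every $\theta$. For the reverse, take a primal-optimal $\pi^{\star}$, which exists by primal attainment. By the previous step $g=0$ $\pi^{\star}$-a.e., so for $\pi^{\star}$-a.e.\ $(\theta,x)$ the quantity $v_i(\theta,x)+q(x)(x-\theta)+m(x)\mathbf{1}_{\{\hat{c}(\theta)>x\}}$ equals $w(\theta)$ and therefore attains the supremum over $x$; thus $w(\theta)=\sup_{x\in X}\bigl[\,\cdot\,\bigr]$ for $\pi^{\star}$-a.e.\ $(\theta,x)$. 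As this is a statement about $\theta$ alone and the $\Theta$-marginal of $\pi^{\star}$ is $F$ by~(\ref{BP}), the exceptional set of $\theta$'s has $F$-measure zero, which is the claim.

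The argument is essentially bookkeeping with $g\ge 0$ and the equality of primal and dual values; the only point needing care is the measure-theoretic justification that the constraints~(\ref{martingale}) and~(\ref{price}), stated only for indicator test sets $B\subseteq X$, extend to the integrals $\int q(x)(x-\theta)\,d\pi$ and $\int m(x)\mathbf{1}_{\{\hat{c}(\theta)>x\}}\,d\pi$ — a routine approximation-by-simple-functions step, valid because $v_i$ is bounded and $w,q,m$ lie in the function class over which the dual is posed, so that every integral appearing in the identity is well defined and finite.
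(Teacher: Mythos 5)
The paper does not actually prove Lemma~\ref{cor:complementary-slackness}: it is presented, together with Lemma~\ref{cor:strong-duality}, as one of ``two standard duality results'' and used without argument. Your proposal fills that gap with the textbook complementary-slackness argument, and it is correct: introduce the nonnegative slack $g(\theta,x)=w(\theta)-v_i(\theta,x)-q(x)(x-\theta)-m(x)\mathbf{1}_{\{\hat c(\theta)>x\}}$, integrate it against a feasible $\pi$ using (\ref{BP}), (\ref{martingale}) and (\ref{price}) to get $\int g\,d\pi=\int wf-\int v_i\,d\pi$, and then read off both halves of the lemma from $g\ge 0$ together with the equality of primal and dual values. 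The passage from ``$g=0$ $\pi^\star$-a.e.'' to ``$w(\theta)=\sup_x[\cdot]$ for $F$-a.e.\ $\theta$'' via the $\Theta$-marginal condition is correct and is exactly the right way to make the supremum statement precise; it could be written a touch more explicitly by observing that the set $N=\{\theta:w(\theta)>\sup_x[\cdot]\}$ satisfies $N\times X\subseteq\{g>0\}$, hence $F(N)=\pi^\star(N\times X)=0$.

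Two small points worth tightening. First, the paper has a typo in the lemma statement, writing $(w,q,v)$ where $(w,q,m)$ is meant; you correctly work with $(w,q,m)$. Second, the integrability caveat you flag at the end is the one genuine place where the argument leans on unstated assumptions: the dual problem~(\ref{dual}) is posed over ``measurable functions'' with no explicit boundedness or integrability requirement, so ``$w,q,m$ lie in the function class over which the dual is posed'' does not by itself guarantee that $\int w f\,d\theta$ and $\int q(x)(x-\theta)\,d\pi$, $\int m(x)\mathbf{1}_{\{\hat c(\theta)>x\}}\,d\pi$ are finite, nor that the indicator-set versions of (\ref{martingale}) and (\ref{price}) extend to these integrals. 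In the paper's applications the constructed multipliers are bounded (and $|x-\theta|\le 1$), so the extension is indeed routine by dominated convergence; but if you want the lemma at the stated level of generality you should say explicitly that the dual is taken over bounded (or suitably integrable) multipliers, which is the standard convention in the martingale-optimal-transport duality literature the paper invokes. With that understanding, your proof is complete.
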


To conclude this section, we introduce two examples of feasible joint distributions that will be central in the analysis: the \emph{reveal–pool} signal and the \emph{pool–reveal–pool} signal. 

\paragraph{Reveal–Pool Signal ($\pi^{\text{NAM}}$).} 
The \emph{reveal–pool} signal fully reveals inefficient types up to a cutoff $\underline{\theta}$. 
Each of the remaining inefficient types in $[\underline{\theta},\theta^*]$ is paired with a corresponding efficient type above $\theta^*$, so that the average quality of the pair exactly equals the production cost of the efficient type. 
The matching is \emph{negative assortative}: the first inefficient type to be pooled, $\underline{\theta}$, is matched to the highest efficient type, $1$; the next inefficient type is paired with the next-highest efficient type, and so on, until the cutoff $\theta^*$. 
At that point, type $\theta^*$ is revealed, (i.e., matched to itself).  

\paragraph{Pool–Reveal–Pool Signal ($\pi^{x^*-\text{NAM}}$).} 
The \emph{pool–reveal–pool} signal begins by pooling the very lowest inefficient types with efficient ones. 
Starting from type $0$, each inefficient type in $[0,\theta_1]$ is paired with an efficient type above $\theta^*$ so that their average quality equals the efficient type’s cost. 
After this initial pooling, there is an intermediate region $(\theta_1,\theta_2)$ of inefficient types who are fully revealed and therefore do not trade. Pooling then resumes: types in $[\theta_2,\theta^*]$ are again matched in a negative assortative way with efficient types, until reaching the cutoff $\theta^*$. 
As before, type $\theta^*$ is revealed.   

Figure~\ref{fig:nam_tnam} shows an example of the two types of signals.

\begin{figure}[H]
    \centering
    \begin{subfigure}[t]{0.48\textwidth}
        \centering
        \includegraphics[width=\linewidth]{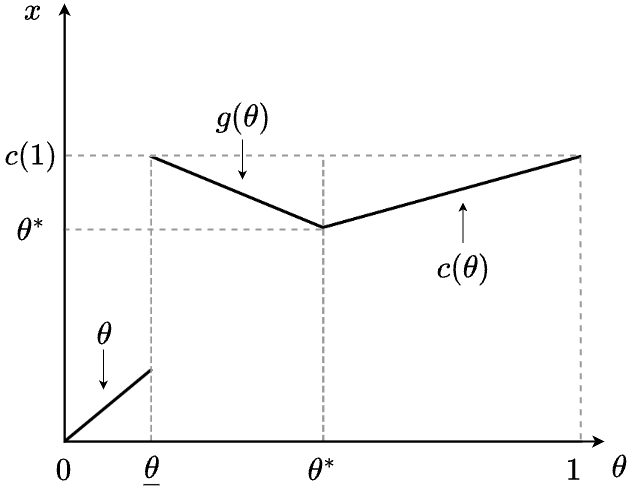}
        \caption{Reveal–pool signal ($\pi^{\text{NAM}}$).}
        \label{fig:nam}
    \end{subfigure}
    \hfill
    \begin{subfigure}[t]{0.48\textwidth}
        \centering
        \includegraphics[width=\linewidth]{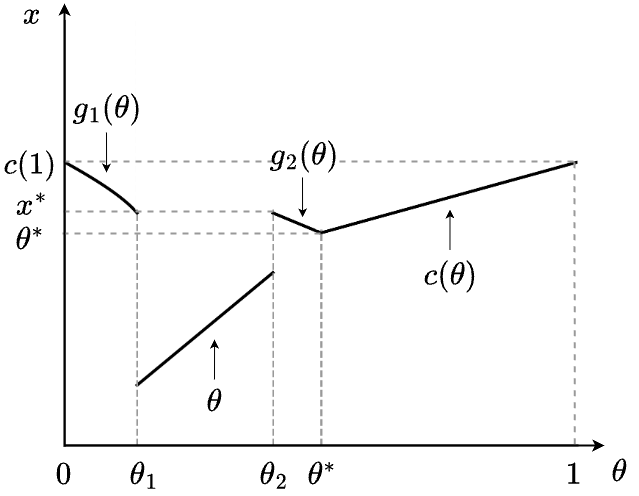}
        \caption{Pool–reveal–pool signal ($\pi^{x^*-\text{NAM}}$).}
        \label{fig:tnam}
    \end{subfigure}
    \caption{Two important feasible joint distributions: ``reveal–pool" and ``pool–reveal–pool".}
    \label{fig:nam_tnam}
\end{figure}


Note that in both signals all efficient types trade, and are matched to their cost of production. The shape of the functions $g$, $g_1$ and $g_2$ is determined by feasibility and in particular by the martingale condition: given a signal realization $x$, the buyer’s posterior mean must equal $x$. Looking for instance at $\pi^{\text{NAM}}$, any mean $x$ resulting in trade is generated by pairing an inefficient type $g^{-1}(x)$ with an efficient type $c^{-1}(x)$. Posterior weights are determined by Bayes’ rule and are proportional to the prior density scaled by the derivatives of $g$ and $c$, which capture how $\theta$-mass is stretched in $x$-space. 
For instance, under the (differentiable and monotonic) transformation $g$, a small mass around $\theta$, $f(\theta)d\theta$, gets mapped into some interval around $x=g(\theta)$ of length $dx=g'(\theta)d\theta$. Since mass is preserved under the transformation, the corresponding mass around $x$ then is given by
$$\frac{f(g^{-1}(x))}{|g'(g^{-1}(x))|}dx$$
At the same time, there will be an analogous contribution to the mass around $x$ by the efficient types, represented by
$$\frac{f(c^{-1}(x))}{|c'(c^{-1}(x))|}dx$$
The relative size of the two determines the weight given to the inefficient and efficient type, respectively. Intuitively, this means that since $c'$ is fixed, the crucial factor to guarantee feasibility is the slope of $g^{-1}$. If $g^{-1}$ is steep, a small set of inefficient types spreads thinly over many values of $x$, making their posterior weight negligible and pushing the mean toward the efficient type $c^{-1}(x)$. If $g^{-1}$ is flat, many inefficient types are concentrated in the same $x$, making the mean closer to $g^{-1}(x)$. The posterior mean is a weighted average of $c^{-1}(x)$ and $g^{-1}(x)$, and the martingale condition requires it to equal exactly $x$:
\begin{equation}
        x=g^{-1}(x)\frac{-f\left(g^{-1}(x)\right)\frac{\partial g^{-1}(x)}{\partial x}}{-f\left(g^{-1}(x)\right)\frac{\partial g^{-1}(x)}{\partial x}+f\left(c^{-1}(x)\right)\frac{\partial c^{-1}(x)}{\partial x}}+c^{-1}(x)\frac{f\left(c^{-1}(x)\right)\frac{\partial c^{-1}(x)}{\partial x}}{-f\left(g^{-1}(x)\right)\frac{\partial g^{-1}(x)}{\partial x}+f\left(c^{-1}(x)\right)\frac{\partial c^{-1}(x)}{\partial x}} \tag{ODE} \label{ODE}
    \end{equation}
This same ODE uniquely determines the decreasing bijections $g$, $g_1$ and $g_2$ that we use in the construction of the feasible signals. The functions differ only in terms of the initial/terminal conditions. In particular, $g$ and $g_2$ must match $\theta^*$ with itself: $g^{-1}(\theta^*)=\theta^*$ and $g_2^{-1}(\theta^*)=\theta^*$, while $g_1$ matches $0$ to $c(1)$: $g_2^{-1}(0)=c(1)$. Since they solve the same terminal value problem, $g$ and $g_2$ are identical, except for the initial point in the domain, which is $\underline{\theta}$ for $g$ and $\theta_2$ for $g_2$. Indeed, the main difference between $\pi^{x^*-\text{NAM}}$ and $\pi^{\text{NAM}}$ is that $\pi^{x^*-\text{NAM}}$ begins matching inefficient types with efficient types in a negative assortative way starting from the lowest type, $0$. However, because full trade is impossible (Assumption~\ref{assn:no_full_trade}), it is not feasible to match all inefficient types in this way. The matching must stop at some point, perfectly revealing those inefficient types for which market breakdown occurs—namely, the types in $(\theta_1,\theta_2)$. After this, $\pi^{x^*-\text{NAM}}$ resumes matching in a negative assortative way.

The existence and uniqueness of $g$, $g_1$ and $g_2$ follow from our assumptions on the density $f$ and cost function $c$, and is proved formally in the Appendix (Lemma \ref{lem:bijective-function}).

We can now formally define $\pi^{\text{NAM}}$ and $\pi^{x^*-\text{NAM}}$.

\begin{definition}[Reveal-Pool Negative Assortative Matching]\label{def:nam}
The \textbf{reveal-pool negative assortative matching} joint distribution, denoted by $\pi^{\text{NAM}} \in \Delta(\Theta \times X)$, is defined by:
\[
    d\pi^{\text{NAM}}(\theta,x)=
    \begin{cases}
        f(\theta)\delta_{\theta}(x) \,d\theta & \text{if } \theta \in [0,\underline{\theta}), \\
        f(\theta)\delta_{g_2(\theta)}(x) \,d\theta & \text{if } \theta \in [\underline{\theta},\theta^*], \\
        f(\theta)\delta_{c(\theta)}(x) \,d\theta & \text{if } \theta \in (\theta^*,1].
    \end{cases}
\]
\end{definition}

\begin{definition}[Pool-Reveal-Pool Negative Assortative Matching]\label{def:tnam}
Let $x^* \in (\theta^*,c(1))$, and define $\theta_1 = g^{-1}_1(x^*)$ and $\theta_2 = g^{-1}_2(x^*)$, so that $0 < \theta_1 < \theta_2 < \theta^*$.  
The \emph{pool-reveal-pool negative assortative matching} joint distribution, denoted $\pi^{x^*-\text{NAM}} \in \Delta(\Theta \times X)$, is given by:
\[
    d\pi^{x^*-\text{NAM}}(\theta,x)=
    \begin{cases}
        f(\theta)\delta_{g_1(\theta)}(x) \,d\theta & \text{if } \theta \in [0,\theta_1], \\
        f(\theta)\delta_{\theta}(x) \,d\theta & \text{if } \theta \in (\theta_1,\theta_2), \\
         f(\theta)\delta_{g_2(\theta)}(x) \,d\theta & \text{if } \theta \in [\theta_2,\theta^*], \\
        f(\theta)\delta_{c(\theta)}(x) \,d\theta & \text{if } \theta \in (\theta^*,1].
    \end{cases}
\]
\end{definition}

The following lemma formalizes that both $\pi^{x^*-\text{NAM}}$ and $\pi^{\text{NAM}}$ are feasible joint distributions:

\begin{lemma}[Feasibility of $\pi^{\text{NAM}}$ and $\pi^{x^*-\text{NAM}}$]\label{lem:nam-feasible}
    The joint distributions $\pi^{\text{NAM}}$ and $\pi^{x^*-\text{NAM}}$ are feasible; that is, they satisfy conditions~(\ref{BP}),~(\ref{martingale}), and~(\ref{price}).
\end{lemma}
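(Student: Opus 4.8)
The plan is to verify the three feasibility conditions---\ref{BP}, \ref{martingale}, and \ref{price}---directly from the definitions of $\pi^{\text{NAM}}$ and $\pi^{x^*-\text{NAM}}$, relying on the existence and properties of the decreasing bijections $g$, $g_1$, $g_2$ guaranteed by Lemma~\ref{lem:bijective-function}. Both signals are \emph{pure} (each type $\theta$ is mapped to a single posterior mean via a Dirac mass), so the $\theta$-marginal of $\pi$ is automatically $f$, giving \ref{BP} for free in both cases; I would state this in one line. The substance is in \ref{martingale} and \ref{price}.

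For the martingale condition \ref{martingale}, I would partition $X$ according to the three (resp.\ four) regimes in the definition. On the revealed region---$x=\theta<\underline{\theta}$ for $\pi^{\text{NAM}}$, and $x=\theta\in(\theta_1,\theta_2)$ for $\pi^{x^*-\text{NAM}}$---the conditional $\pi_x$ is $\delta_x$, so $\int_B(x-\theta)\,d\pi=0$ trivially on any $B$ inside that region. On the pooling region, a posterior mean $x\ge\theta^*$ receives mass only from the inefficient type $g^{-1}(x)$ (or $g_1^{-1}(x)$, $g_2^{-1}(x)$) and the efficient type $c^{-1}(x)$; here I would invoke the change-of-variables/mass-preservation argument sketched before the ODE display, showing that the Bayes weights on these two types are exactly $\frac{-f(g^{-1}(x))\,(g^{-1})'(x)}{\Delta(x)}$ and $\frac{f(c^{-1}(x))\,(c^{-1})'(x)}{\Delta(x)}$ with $\Delta(x)$ the sum, and that by construction $g$, $g_1$, $g_2$ solve \ref{ODE}, whose content is precisely that this weighted average equals $x$. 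One must also check that the pushforwards cover the $x$-ranges consistently---i.e., that the efficient types $(\theta^*,1]$ map onto $(\theta^*,c(1)]$ via $c$ and the pooled inefficient types map onto the same interval via $g$ (resp.\ that $[0,\theta_1]$ maps via $g_1$ onto $[x^*,c(1)]$ and $[\theta_2,\theta^*]$ via $g_2$ onto $[\theta^*,x^*]$), so that pooled inefficient and efficient mass land on exactly the same $x$-values; this is where the boundary conditions $g^{-1}(\theta^*)=\theta^*$, $g_2^{-1}(\theta^*)=\theta^*$, $g_1^{-1}(0)=c(1)$ do their work, and where matching $x^*=g_1(\theta_1)=g_2(\theta_2)$ ensures the two pooling blocks of $\pi^{x^*-\text{NAM}}$ abut correctly.

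For the prices-as-means condition \ref{price}, I would show that $\hat c(\theta)\le x$ holds $\pi$-almost surely, so the indicator $\mathbf 1_{\{\hat c(\theta)>x\}}$ is zero on the support and the integral vanishes on every $B$. This splits by regime: on the revealed inefficient region $x=\theta$, and since such $\theta$ are inefficient, $\hat c(\theta)=\min\{\theta,c(\theta)\}=\theta=x$, so the weak inequality holds. On the pooling region, the efficient type $c^{-1}(x)$ has $\hat c=c$ and $c(c^{-1}(x))=x$, fine; the pooled inefficient type $g^{-1}(x)$ satisfies $g^{-1}(x)<\theta^*\le x$ (because $g$ is a decreasing bijection onto an interval with left endpoint $\theta^*$, so $g^{-1}(x)\le\theta^*$, and for $x>\theta^*$ strictly below), hence $\hat c(g^{-1}(x))\le g^{-1}(x)<x$. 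On the efficient region $x=c(\theta)$ with $\theta>\theta^*$, we have $\hat c(\theta)=c(\theta)=x$. So \ref{price} holds in all cases.

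The main obstacle---really the only non-mechanical point---is the pooling-region martingale check: one has to be careful that the measure-theoretic pushforward of $f(\theta)\,d\theta$ under the various $\delta_{g(\theta)}$ and $\delta_{c(\theta)}$ maps genuinely produces the densities used in the informal derivation of \ref{ODE}, that $g$ and $c$ have the claimed ranges so the inefficient and efficient contributions to each $x$ line up, and that the solution to \ref{ODE} with the stated boundary data exists on the whole relevant interval and stays in $(0,\theta^*)$ (resp.\ the relevant subintervals)---all of which is precisely the content of Lemma~\ref{lem:bijective-function}, so I would cite it rather than re-derive it. Everything else is bookkeeping across the finitely many regimes.
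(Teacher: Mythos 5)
Your proposal is correct and follows essentially the same route as the paper's proof: \ref{BP} is immediate because the signal is pure, \ref{price} is a support check that $\hat c(\theta)\le x$ regime by regime, and \ref{martingale} is verified by splitting $X$ into the revealed and pooled regions and applying the change of variables together with the fact that $g$, $g_1$, $g_2$ solve \ref{ODE} (Lemma~\ref{lem:bijective-function}). The paper writes out the computation only for $\pi^{\text{NAM}}$ and states that $\pi^{x^*\text{-NAM}}$ is analogous; your added bookkeeping about how the ranges of $g_1|_{[0,\theta_1]}$ and $g_2|_{[\theta_2,\theta^*]}$ abut at $x^*$ is exactly the content of that ``analogous'' step, not a different argument.
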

The proof of Lemma~\ref{lem:nam-feasible} is provided in Appendix~\ref{sec:preliminary}.

\section{Results}\label{sec:main_results}

\subsection{Volume of trade}\label{sec:volume_trade}

We now present the two main results of the paper, which characterize the optimal joint distribution solving the primal problem~(\ref{primal}) when the designer’s objective is weighted volume of trade, with $\alpha(\theta)$ denoting the weight attached to type $\theta$.

\begin{proposition}\label{prop:alpha_increasing}
    Suppose that, for every $x \in [\theta^*,c(1)]$, the mapping 
    $\theta \mapsto \frac{\alpha(\theta)}{x-\theta}$ is strictly increasing on $[0,\theta^*]$. Then $\pi^{NAM}$ is the unique solution to the primal problem~(\ref{primal}).
\end{proposition}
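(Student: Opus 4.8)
The plan is to prove optimality of $\pi^{\text{NAM}}$ via the duality machinery in Lemmas~\ref{cor:strong-duality} and~\ref{cor:complementary-slackness}: I will exhibit dual multipliers $(w,q,m)$ satisfying the zero-profit inequality~\ref{ZP} such that the primal value under $\pi^{\text{NAM}}$ equals the dual value under $(w,q,m)$, and then argue uniqueness from the complementary slackness characterization. By Corollary~\ref{cor:no_means_below_theta_star}, any feasible $\pi$ only generates trade at means $x \geq \theta^*$, and $\pi^{\text{NAM}}$ uses exactly the means in $[\theta^*,c(1)]$. The designer's objective here is $v_1(\theta,x)=\alpha(\theta)\mathbf{1}_{\{x\ge\theta^*\}}$, so the dual constraint becomes $w(\theta)\ge \alpha(\theta)\mathbf{1}_{\{x\ge\theta^*\}}+q(x)(x-\theta)+m(x)\mathbf{1}_{\{\hat c(\theta)>x\}}$ for all $(\theta,x)$.

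First I would read off, from the structure of $\pi^{\text{NAM}}$, what complementary slackness forces: on the support, $w(\theta)=\alpha(\theta)+q(g_2(\theta))(g_2(\theta)-\theta)$ for $\theta\in[\underline\theta,\theta^*]$ matched to $x=g_2(\theta)$; $w(\theta)=\alpha(\theta)+q(c(\theta))(c(\theta)-\theta)$ for efficient $\theta>\theta^*$ matched to $x=c(\theta)$; and $w(\theta)=q(\theta)\cdot 0 + m(\theta)\cdot 0 = 0$ for fully-revealed inefficient $\theta\in[0,\underline\theta)$ where market breaks down (here $v_1=0$ since $x=\theta<\theta^*$, and $\hat c(\theta)=\theta= x$ so the indicator is $0$). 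These equations, combined with the requirement that at each $x\in[\theta^*,c(1)]$ both the inefficient partner $g_2^{-1}(x)$ and the efficient partner $c^{-1}(x)$ attain the supremum defining $w$, pin down $q$ on $[\theta^*,c(1)]$ up to the martingale/smoothness conditions — I expect $q$ to be characterized by an ODE obtained by differentiating the binding conditions (the usual ``envelope'' argument: the cross-partial/tangency of $\theta\mapsto \alpha(\theta)+q(x)(x-\theta)$ at the two matched types). The monotonicity hypothesis that $\theta\mapsto \alpha(\theta)/(x-\theta)$ is strictly increasing on $[0,\theta^*]$ for every $x\in[\theta^*,c(1)]$ is exactly what I would use to guarantee that $q(x)>0$ (the right sign) and that the negative-assortative pairing — rather than some other pairing — is the one selected: it makes $\alpha(\theta)+q(x)(x-\theta)$ single-peaked in a way compatible with matching the lowest available inefficient type to the highest $x$.

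The main work, and the main obstacle, is verifying the global inequality~\ref{ZP} off the support: I must choose $q$ on $[\theta^*,c(1)]$, extend it (and choose $m$) elsewhere, define $w(\theta)$ as the stated supremum, and then check $w(\theta)\ge \alpha(\theta)+q(x)(x-\theta)+m(x)\mathbf 1_{\{\hat c(\theta)>x\}}$ for \emph{all} pairs — in particular (i) for inefficient $\theta$ considered against means $x$ to which it is not matched, using the monotonicity of $\alpha(\theta)/(x-\theta)$ to show the chosen match dominates; (ii) for efficient $\theta$ against means other than $c(\theta)$; (iii) ensuring the $m(x)$ term, which only bites when $\hat c(\theta)>x$ i.e. for inefficient $\theta>x$, can be chosen (e.g. large negative, or appropriately signed) so that it never creates a profitable deviation — here I would set $m$ so that pooling an inefficient type above a mean is never attractive, consistent with constraint~\ref{price} being slack-penalized. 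I also need $w(\theta)=0$ to be consistent for the revealed low types, i.e. the supremum over $x$ is attained at $x=\theta$ with value $0$ and is not exceeded elsewhere — this is where Assumption~\ref{assn:no_full_trade} and the choice of the cutoff $\underline\theta$ enter, guaranteeing $\underline\theta>0$ and that revealing those types is genuinely optimal.

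Finally, for uniqueness: once strong duality holds with these multipliers, Lemma~\ref{cor:complementary-slackness} says every optimal $\pi$ is supported on the set where~\ref{ZP} binds. I would argue that, under the strict monotonicity hypothesis, this binding set is exactly the graph of $\pi^{\text{NAM}}$ — strictness rules out alternative matchings attaining equality, the martingale constraint~\ref{martingale} then forces the weights, and~\ref{BP} forces the cutoff $\underline\theta$ — so $\pi_\theta=\pi^{\text{NAM}}_\theta$ for $F$-a.e.\ $\theta$, which is the claimed uniqueness. The delicate points to get right are the exact ODE/boundary conditions defining $q$ (and checking $q$ is well-defined, finite, and of the correct sign on all of $[\theta^*,c(1)]$ using the density and cost regularity), and the case analysis for~\ref{ZP} at the boundary means $x=\theta^*$ and $x=c(1)$.
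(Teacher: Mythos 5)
Your plan matches the paper's proof structure closely: a two-step duality argument in which you (a) use complementary slackness under strong duality to pin down the necessary NAM structure of any solution, and (b) explicitly construct dual multipliers $(w,q,m)$ verifying~\ref{ZP} with matching primal and dual values, thereby certifying optimality and validating the uniqueness argument. The envelope/ODE derivation of $q$, the choice of $m$ to neutralize the (PM) indicator, the role of $\underline{\theta}$ and Assumption~\ref{assn:no_full_trade}, and the appeal to strictness of the monotonicity hypothesis to rule out alternative matchings are all present in the paper's argument.

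However, there is a concrete sign error that would derail the construction: you assert that the monotonicity hypothesis guarantees $q(x)>0$, calling this ``the right sign.'' In fact $q(x)<0$ on $[\theta^*,c(1)]$, and this is the sign the verification requires. The reason is not the monotonicity hypothesis at all: since Assumption~\ref{assn:no_full_trade} forces a positive measure of inefficient types to be revealed with $w(\theta)=0$, complementary slackness applied to such a revealed $\theta$ and any $x\geq\theta^*$ gives $0\geq\alpha(\theta)+q(x)(x-\theta)$, hence $q(x)\leq -\alpha(\theta)/(x-\theta)<0$ because $\alpha>0$ and $x>\theta$ (this is the paper's Lemma~\ref{lem:q_negative}). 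Concretely, the paper sets $q(x)=C\exp\bigl(-\int_{\theta^*}^{x}\frac{ds}{s-a(s)}\bigr)$ on $[\theta^*,c(1)]$ with $C=-\frac{\alpha(\underline{\theta})}{c(1)-\underline{\theta}}\exp\bigl(-\int_{\theta^*}^{c(1)}\frac{ds}{s-a(s)}\bigr)<0$, and $m(x)=q(x)(1-x)<0$. The monotonicity hypothesis on $\theta\mapsto\alpha(\theta)/(x-\theta)$ is used not for the sign of $q$ but for the \emph{interval} structure of the pooling region (every type above the cutoff $\underline\theta$ is pooled, every type below is revealed) and for the strictness that underlies uniqueness. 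If you attempt the~\ref{ZP} verification with $q>0$, the inequalities in Case~1 and Case~3 of the paper's Step~2 flip and the construction fails, so be careful to carry the correct sign through.

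One further minor point: your uniqueness sketch compresses several steps. In particular, in the paper the strict monotonicity of $q^*$ on $(\theta^*,c(1)]$ (deduced, not assumed) is what delivers strict submodularity and Topkis monotonicity of $x^*(\theta)$, and separate martingale arguments are needed to rule out $x^*(\theta)$ being a nonsingleton set or multiple types being matched to the same mean. You will want to make those two exclusion arguments explicit rather than relying on ``strictness rules out alternative matchings.''
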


\begin{proposition}\label{prop:alpha_convex}
Suppose that, for every $x \in [\theta^*,c(1)]$, the mapping $\theta \mapsto \frac{\alpha(\theta)}{x-\theta}$ is strictly convex on $[0,\theta^*]$, and that, for $x = c(1)$,
\[
    \frac{\alpha(0)}{c(1)} > \frac{\alpha(\underline{\theta})}{c(1)-\underline{\theta}}.
\]
Then there exists $x \in (\theta^*,c(1))$ such that $\pi^{x-\text{NAM}}$ is a solution to the primal problem~(\ref{primal}).  

Conversely, if $\pi$ is a solution to the primal problem~(\ref{primal}), then there exists $x \in (\theta^*,c(1))$ such that $\pi_{\theta} = \pi^{x-\text{NAM}}_{\theta}$ for $F$-almost every $\theta$.
\end{proposition}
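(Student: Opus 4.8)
Write-up proposal.

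The plan is to prove both halves with the duality apparatus of Lemmas~\ref{cor:strong-duality} and~\ref{cor:complementary-slackness}. For the existence claim I will select a threshold $x^*\in(\theta^*,c(1))$, exhibit a dual triple $(w^*,q^*,m^*)$ satisfying~(\ref{ZP}) whose value in~(\ref{dual}) equals the value of $\pi^{x^*-\text{NAM}}$ in~(\ref{primal}); since $\pi^{x^*-\text{NAM}}$ is feasible by Lemma~\ref{lem:nam-feasible}, Lemma~\ref{cor:strong-duality} then gives optimality. For the converse I will push an arbitrary primal optimum through the complementary-slackness condition of that same dual solution and read off its $\theta$-sections.

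\emph{Step 1 (choosing $x^*$).} Recall that $g_1,g_2$ solve the same ODE~(\ref{ODE}) (existence, uniqueness and strict monotonicity from Lemma~\ref{lem:bijective-function}), with $g_1(0)=c(1)$ and $g_2=g$ on overlapping domains so that $g_2^{-1}(\theta^*)=\theta^*$; Assumption~\ref{assn:no_full_trade} is precisely what keeps the pool started from $\theta=0$ from exhausting all inefficient types, so the comparison below is genuinely defined on all of $(\theta^*,c(1))$. Put
\[
 h(x)=\frac{\alpha\bigl(g_1^{-1}(x)\bigr)}{x-g_1^{-1}(x)}-\frac{\alpha\bigl(g_2^{-1}(x)\bigr)}{x-g_2^{-1}(x)},\qquad x\in(\theta^*,c(1)).
\]
As $x\downarrow\theta^*$, $g_2^{-1}(x)\uparrow\theta^*$, so the subtracted ratio blows up and $h(x)\to-\infty$; as $x\uparrow c(1)$, $g_1^{-1}(x)\to0$ and $g_2^{-1}(x)\to\underline\theta$, so $\lim_{x\uparrow c(1)}h(x)=\alpha(0)/c(1)-\alpha(\underline\theta)/(c(1)-\underline\theta)$, strictly positive by hypothesis. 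By continuity and the intermediate value theorem there is $x^*$ with $h(x^*)=0$; set $\theta_1=g_1^{-1}(x^*)$, $\theta_2=g_2^{-1}(x^*)$, and note $0<\theta_1<\theta_2<\theta^*$ since two solution curves of the first-order ODE~(\ref{ODE}) cannot cross and $g_1^{-1}(c(1))=0<\underline\theta=g_2^{-1}(c(1))$.

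\emph{Step 2 (the dual certificate).} Let $\gamma(x)=g_2^{-1}(x)$ on $[\theta^*,x^*]$ and $\gamma(x)=g_1^{-1}(x)$ on $[x^*,c(1)]$ — the inefficient type $\pi^{x^*-\text{NAM}}$ routes to price $x$ — and define $q^*$ on $[\theta^*,c(1)]$ as the solution of the first-order condition $q'(x)\bigl(x-\gamma(x)\bigr)+q(x)=0$ normalized by $q^*(x^*)=-\alpha(\theta_1)/(x^*-\theta_1)$ (which by $h(x^*)=0$ also equals $-\alpha(\theta_2)/(x^*-\theta_2)$). Since this ODE is linear homogeneous in $q$ and $x-\gamma(x)>0$, $q^*$ keeps the sign of $q^*(x^*)$, hence $q^*<0$ throughout (near $\theta^*$ it may be unbounded, but the products $q^*(x)(x-\gamma(x))$ that enter $w^*$ stay bounded because $x-\gamma(x)\to0$ and $c'(\theta^*)<1$). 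Extend $q^*$ off $[\theta^*,c(1)]$ keeping it negative, and choose $m^*$ negative enough on the relevant range that for every pair with $\hat c(\theta)>x$ the right-hand side of~(\ref{ZP}) is strictly below $w^*(\theta)$, so the binding constraints are exactly the allowed pairs $\{\hat c(\theta)\le x\}$, on which $m^*$ disappears. Finally set $w^*(\theta)=\sup_{x}\bigl[v_1(\theta,x)+q^*(x)(x-\theta)+m^*(x)\mathbf 1_{\{\hat c(\theta)>x\}}\bigr]$, so~(\ref{ZP}) holds by construction. Two things then must be checked: (i) $w^*$ coincides with the piecewise expression consistent with $\pi^{x^*-\text{NAM}}$ — $w^*(\theta)=\alpha(\theta)+q^*(c(\theta))(c(\theta)-\theta)$ on $(\theta^*,1]$ with the sup attained at the corner $x=c(\theta)$, $w^*(\theta)=\alpha(\theta)+q^*(g_i(\theta))(g_i(\theta)-\theta)$ on the two pools, $w^*(\theta)=0$ on $(\theta_1,\theta_2)$ — and, in particular, $w^*$ is continuous at $\theta_1$, $\theta_2$, $\theta^*$, which is built into the normalization of $q^*$, the equation $h(x^*)=0$, and $c(\theta^*)=\theta^*$, respectively; and (ii) $\int_0^1 w^*\,dF=\int v_1\,d\pi^{x^*-\text{NAM}}$, which follows from (i) because $\pi^{x^*-\text{NAM}}$ is supported on the equality set of~(\ref{ZP}) and obeys~(\ref{BP}),~(\ref{martingale}),~(\ref{price}). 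Lemma~\ref{cor:strong-duality} then concludes the existence part.

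The main obstacle is the global inequality in (i): $v_1(\theta,x)+q^*(x)(x-\theta)\le w^*(\theta)$ for \emph{every} inefficient $\theta$ and every $x\ge\theta^*$, with equality only at the price $\pi^{x^*-\text{NAM}}$ assigns to $\theta$. Differentiating along the ODE for $q^*$ gives $\partial_x[\alpha(\theta)+q^*(x)(x-\theta)]=q^*(x)\,\tfrac{\theta-\gamma(x)}{x-\gamma(x)}$, and since $q^*<0$ and $\gamma$ is decreasing this makes $x\mapsto\alpha(\theta)+q^*(x)(x-\theta)$ single-peaked on $[\theta^*,c(1)]$, with peak at $g_i(\theta)$ when $\theta$ lies in a pool and at $x^*$ when $\theta\in(\theta_1,\theta_2)$, and strictly decreasing on $[c(\theta),c(1)]$ for efficient $\theta$ (so the corner $x=c(\theta)$ is the unique best allowed price there); these steps are essentially bookkeeping. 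The one substantive point is that for $\theta\in(\theta_1,\theta_2)$ the peak value satisfies $\alpha(\theta)+q^*(x^*)(x^*-\theta)<0$, i.e.\ $\alpha(\theta)/(x^*-\theta)<\alpha(\theta_1)/(x^*-\theta_1)=\alpha(\theta_2)/(x^*-\theta_2)$ — and this is exactly where strict convexity of $\theta\mapsto\alpha(\theta)/(x^*-\theta)$ on $[0,\theta^*]$ enters: agreeing with the constant $-q^*(x^*)$ at the two endpoints $\theta_1,\theta_2$, a strictly convex function lies strictly below it on the open interior (in contrast to Proposition~\ref{prop:alpha_increasing}, where the non-traders sit at the bottom), and the strictness is what yields uniqueness. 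For the converse, Lemma~\ref{cor:complementary-slackness} forces any primal optimum $\pi$ to be carried by the contact set $\{(\theta,x):w^*(\theta)=v_1(\theta,x)+q^*(x)(x-\theta)+m^*(x)\mathbf 1_{\{\hat c(\theta)>x\}}\}$, which by the uniqueness just established is the union of the graphs $\theta\mapsto g_1(\theta)$ on $[0,\theta_1]$, $\theta\mapsto\theta$ on $(\theta_1,\theta_2)$, $\theta\mapsto g_2(\theta)$ on $[\theta_2,\theta^*]$, and $\theta\mapsto c(\theta)$ on $(\theta^*,1]$; hence $\pi_\theta=\pi^{x^*-\text{NAM}}_\theta$ for $F$-almost every $\theta$, which is the claim with $x=x^*$.
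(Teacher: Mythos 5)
Your proposal is correct and follows the same backbone as the paper's argument: select $x^*$ by the intermediate value theorem applied to the difference of ratios $t_x(g_1^{-1}(x))-t_x(g_2^{-1}(x))$, build the dual multiplier $q^*$ as the solution of the first-order ODE $q'(x)(x-\gamma(x))+q(x)=0$ normalized so that $q^*(x^*)=-\alpha(\theta_1)/(x^*-\theta_1)=-\alpha(\theta_2)/(x^*-\theta_2)$, and use strict convexity of $\theta\mapsto\alpha(\theta)/(x^*-\theta)$ to force strict dominance of revelation on the open interval $(\theta_1,\theta_2)$ (the key point where convexity enters). You streamline the paper in two ways worth noting: you bypass the paper's Step~1 (the lemma chain that derives the candidate structure under an \emph{assumed} strong duality) and instead obtain the converse directly from complementary slackness applied to the constructed certificate, using that the sup is uniquely attained so the contact set is a graph; and you replace the paper's explicit change-of-variables computation of $\int w^*\,dF$ by the cleaner observation that $\int w^*\,dF=\int w^*\,d\pi^{x^*\text{-NAM}}$ by (\ref{BP}), and the $q^*$ and $m^*$ contributions vanish by (\ref{martingale}) and (\ref{price}) respectively. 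Both shortcuts are sound given the unique-maximizer verification in your bookkeeping cases; the paper retains Step~1 mainly as a heuristic for discovering the dual. One caveat you rightly flag but should handle carefully in a full write-up is the blow-up of $q^*$ as $x\downarrow\theta^*$: you must verify that the products $q^*(x)(x-\theta)$ and $q^*(x)(x-\theta)+m^*(x)$ remain well-behaved (going to $-\infty$ rather than $+\infty$) so that (\ref{ZP}) is not vacuously broken near $\theta^*$, which your note about $x-\gamma(x)\to0$ and the sign of $q^*$ addresses but does not fully spell out.
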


The proofs of Propositions~\ref{prop:alpha_increasing} and \ref{prop:alpha_convex} are in Appendix~\ref{sec:proof_volume}. Technically, both results follow from duality (Lemma~\ref{cor:strong-duality}) and complementary slackness (Lemma~\ref{cor:complementary-slackness}). The argument proceeds as follows. We assume strong duality, and by manipulating the complementary slackness conditions we derive sharp necessary conditions that any optimal joint distribution must satisfy. These conditions pin down a negative assortative matching structure. We then construct a feasible dual solution supporting this distribution, which confirms not only that negative assortative matching solves the problem, but also that strong duality holds---implying uniqueness of the solution.

We next turn to the economic intuition. The ratio $\tfrac{\alpha(\theta)}{x-\theta}$ governs the solution: the numerator measures the value the designer assigns to type $\theta$ trading, while the denominator reflects how much the martingale constraint is relaxed when $\theta$ is matched with $x$. The closer $\theta$ is to $x$, the smaller the denominator and the greater the relaxation. Intuitively, the designer therefore prefers trade involving types with higher ratios.

When the ratio is increasing, higher types are more valuable (this is the case when $\alpha$ itself is increasing). This leads to the reveal–pool structure in Proposition~\ref{prop:alpha_increasing}. As $\alpha$ becomes decreasing at faster rates, the ratio can be locally decreasing. Note that, however, it cannot be globally decreasing: as $x \downarrow \theta^*$, the denominator vanishes and the ratio explodes for types close to $\theta^*$. Proposition~\ref{prop:alpha_convex} therefore captures the complementary case, where the designer values trade with very low types. Here the ratio first decreases but eventually must rise, and convexity offers a tractable way to model this shape.

We now turn to a more detailed economic explanation of the shape of the solution.

\paragraph{Who to pool and who to reveal}
To build intuition for the form of the optimal signal, recall that there are three forces at play. First, the payoff: the designer obtains a value of $\alpha(\theta)$ whenever type $\theta$ trades. Second, the martingale condition, which requires that prices equal the average type in any pool. Third, the prices-as-means constraint: every seller in a pool must be willing to trade at the induced price.  

Because of the constraint (\ref{price}), any pool containing inefficient types (those below $\theta^*$) must also include some efficient types (those above $\theta^*$). A pool with only inefficient types would yield means below the threshold $\theta^*$, thereby violating (\ref{price}). Moreover, every efficient type must be pooled; otherwise we would be wasting resources, since it could be matched with an inefficient type to increase trade probability. For the same reason, each pool must achieve a mean exactly equal to the cost of its highest type. If the mean were strictly higher, more inefficient types could be included in trade, without violating (\ref{price}). Since costs are strictly increasing, this implies that each pool must contain exactly one efficient type.  These simple observations tell us that the optimal signal matches each efficient type $\theta>\theta^*$ with a mean equal to its cost $c(\theta)$.  

We must then establish which \textit{inefficient} types should be allowed to trade. Let us consider the dual formulation of the problem and suppose a type $\theta' < \theta^*$ is pooled with some higher type in order to generate mean $x$. By complementary slackness, it must be strictly better to match $\theta'$ with $x$ than to reveal it (which would yield a payoff of $0$). This requires  
\[
\alpha(\theta')+q(x)(x-\theta')\;\geq\; 0.
\]  
Rearranging, we obtain  
\[
t_x(\theta') \equiv \frac{\alpha(\theta')}{x-\theta'} \;\geq\; -q(x).
\]  

The ratio $t_x(\theta)$ in left-hand side highlights how the payoff term $\alpha(\theta)$ and the martingale condition jointly determine the optimal solution. Including an inefficient type $\theta<\theta^*$ in a pool with mean $x\geq \theta^*$ generates a direct benefit of $\alpha(\theta)$ in the objective. But it also carries a cost: the farther $\theta$ lies below $x$, the harder it becomes to keep the average at $x$. This creates a bias in favor of inefficient types closer to $\theta^*$, since they are ``cheaper'' to include relative to the martingale condition. The overall shape of the optimal signal thus depends on how $\alpha(\theta)$ weights higher versus lower types. When $\alpha$ is increasing, for instance, the payoff and martingale forces are aligned, and the ratio $\frac{\alpha(\theta)}{x-\theta}$ is itself increasing. Hence, for any $\theta''>\theta'$,  
\[
\frac{\alpha(\theta'')}{x-\theta''}>-q(x) 
\quad \Rightarrow \quad 
\alpha(\theta'')+q(x)(x-\theta'')>0,
\]  
which implies that $\theta''$ should also be matched rather than revealed. This is show in Figure \ref{fig:alphaoverxtheta_increasing}.

\begin{figure}[H]
  \centering
  \begin{subfigure}{0.48\textwidth}
    \centering
    \includegraphics[width=\linewidth]{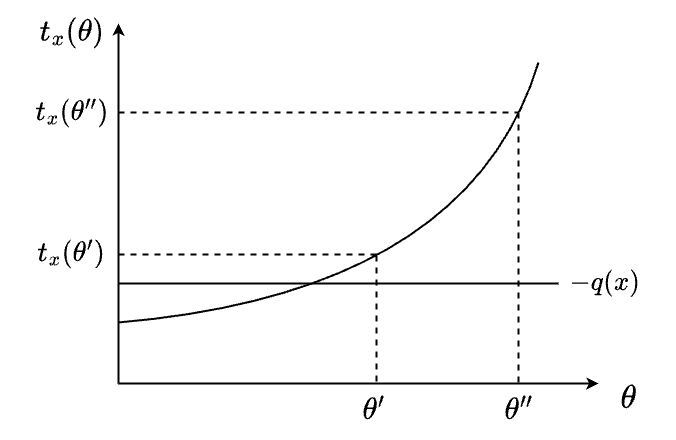}
    \caption{Increasing ratio $\alpha(\theta)/(x-\theta)$}
    \label{fig:alphaoverxtheta_increasing}
  \end{subfigure}
  \hfill
  \begin{subfigure}{0.47\textwidth}
    \centering
    \includegraphics[width=\linewidth]{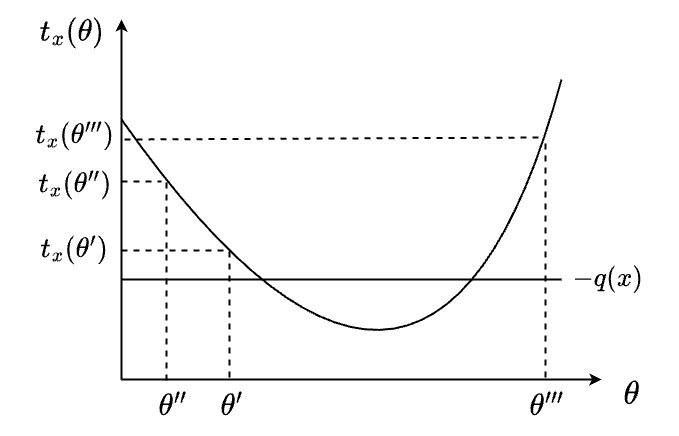}
    \caption{Convex ratio $\alpha(\theta)/(x-\theta)$}
    \label{fig:alphaoverxtheta_convex}
  \end{subfigure}
  \caption{Pooling intervals for different shapes of $\alpha(\theta)/(x-\theta)$.}
  \label{fig:alphaoverxtheta}
\end{figure}

By contrast, when $\alpha$ is decreasing, the two forces move in opposite directions: the designer prefers lower types, while the martingale constraint favors higher ones. Once again, it is the ratio $\frac{\alpha(\theta)}{x-\theta}$ that resolves the trade-off. If this ratio is strictly convex, then for very low types the payoff component dominates, so the designer pools them with efficient types. For intermediate types the martingale force dominates, so they are revealed. Finally, types just below $\theta^*$ are pooled again, since they are the cheapest inefficient types to induce to trade. Figure \ref{fig:alphaoverxtheta_convex} illustrates this case: if we pool some $\theta'$ to achieve a mean of $x$, then revealing $\theta''$ and $\theta'''$, for instance, would be sub-optimal, because it is strictly more profitable to match them with mean $x$.

The only thing left to determine is the shape of the optimal mapping in the pooling region for the inefficient types. We do this in two steps: we first show that the sign of the slope of the mapping is \textit{negative}, that is, the optimal signal displays negative assortative matching. This then allows us to use the martingale condition to pin down the exact shape of the function.

\paragraph{Negative assortative matching.} Intuitively, negative assortative matching is optimal because efficient sellers with a higher $\theta$ (those closer to $1$) have a comparative advantage in allowing \textit{lower} inefficient types (those closer to $0$) to trade. To see this, consider the discrete case 
$\Theta=\{\theta_{1},\dots,\theta_{N}\}$. For notational convenience, write 
$c_{i}\equiv c(\theta_{i})$ for the production cost of type $\theta_i$ and 
$f_{i}$ for its prior mass. Partition types into efficient 
$\mathcal H\equiv\{\theta_{j}: c_{j}<\theta_{j}\}$ and inefficient 
$\mathcal L\equiv\{\theta_{k}: c_{k}>\theta_{k}\}$. We use indices $j$ for efficient types 
and $k$ for inefficient types. 

From the considerations in the previous paragraph we know that the optimal signal assigns each efficient type $j$ to a pool 
containing itself and a mixture of inefficient types such that the posterior mean 
equals its cost. If $\pi_{jk}$ denotes the mass of inefficient type $k$ matched 
with efficient type $j$, then, the martingale condition requires
\begin{equation}\label{eq:martingale_resources}
    \underbrace{f_j(\theta_j-c_j)}_{ \text{resources available} }=\underbrace{\sum_{k\in\mathcal L}\pi_{jk}(c_j-\theta_k)}_{ \text{resources employed} },\qquad j\in\mathcal H,
\end{equation}
subject to feasibility $\sum_{j}\pi_{jk}\le f_k$ and $\pi_{jk}\ge 0$. Since all 
efficient types participate in trade, maximizing total trade is equivalent to 
maximizing $\sum_{j,k}\pi_{jk}$.

This setup can be interpreted as a production problem with $|\mathcal H|$ plants, corresponding to the efficient types,
each endowed with $f_j(\theta_j-c_j)$ units of resources. There are $|\mathcal L|$ products corresponding 
to the inefficient types, and producing one unit of product $k$ at plant $j$ 
requires $(c_j-\theta_k)$ units of resources. The planner’s objective is to 
maximize total output.

Now fix two plants $\theta_a<\theta_b$ and two products $\theta_1<\theta_2$. Suppose the 
matching has some degree of positive assortativeness, with 
$\pi_{b2}>0$ and $\pi_{a1}>0$. Consider the following swap: shift some mass 
from $\pi_{b2}$ to $\pi_{b1}$, adjusting within type $b$’s pool to preserve the 
martingale condition. Then, reallocate the corresponding amounts in type $a$’s 
pool so that feasibility is satisfied: add the mass removed from $\pi_{b2}$ to $\pi_{a2}$ and subtract the mass 
added to $\pi_{b1}$ from $\pi_{a1}$.

After a small increase in $\pi_{b1}$, denoted by $d\pi_{b1}$, the martingale condition must satisfy
\[
(c_b-\theta_{2})\,d\pi_{b2}+(c_b-\theta_{1})\,d\pi_{b1}=0
\quad\Rightarrow\quad
d\pi_{b2}=-\frac{c_b-\theta_{1}}{c_b-\theta_{2}}\,d\pi_{b1}.
\]
Thus, to marginally increase production of good $1$, plant $b$ must reduce production of 
type $2$ by
\[
\frac{c_b-\theta_{1}}{c_b-\theta_{2}},
\]
which is decreasing in $c_b$. Hence, it is relatively cheaper for higher-quality 
plants to produce lower-quality goods.

Turning to firm $a$, after using the fact that
$$d\pi_{a1}=-d\pi_{b1}, \ \ d\pi_{a2}=-d\pi_{b2}=\frac{c_b-\theta_1}{c_b-\theta_2},$$
the change in resource use (the right-hand side in Equation \ref{eq:martingale_resources}) is
\[
 -\Big[(c_a-\theta_{1})
-(c_a-\theta_{2})\frac{c_b-\theta_{1}}{c_b-\theta_{2}}\Big]d\pi_{b1}.
\]
Since 
\[
\frac{c_a-\theta_{1}}{c_a-\theta_{2}}>\frac{c_b-\theta_{1}}{c_b-\theta_{2}},
\]
the bracketed term is positive, implying that the total resource use in plant $a$’s 
pool decreases. This relaxation of the constraint makes it possible to accommodate 
more inefficient types while still satisfying feasibility and the martingale condition. Therefore, any allocation with positive assortativeness can be strictly improved 
by such swaps. 

In short, higher quality plants have a comparative advantage in producing lower quality products. Thus, to maximize total production, it is efficient that they focus exclusively on the production of those goods. 

Back to our persuasion setting, it follows that the optimal allocation must feature negative assortative matching: more efficient types are paired with less efficient ones. Thus, the mapping must be decreasing.  

\paragraph{The martingale condition.} The martingale condition further requires that, conditional on a signal $x$, the buyer’s posterior mean equals $x$. In this setting, each signal corresponds to a pool containing two types: the efficient type $c^{-1}(x)$ and the inefficient type $g^{-1}(x)$. Posterior weights are determined by Bayes’ rule and are proportional to the prior density scaled by the derivatives of $g$ and $c$, which capture how $\theta$-mass is stretched in $x$-space. As discussed in Section \ref{sec:optimaltransport} the feasible $g$ solves the differential equation (\ref{ODE}).

\subsection{Price and surplus}\label{sec:price_surplus}
We now characterize the class of optimal signals maximizing a convex combination of the trade price and producer surplus. Under the competitive market assumption, aggregate consumer surplus is zero, so this objective is equivalent to maximizing a combination of price and social surplus. This payoff is intended to capture, in a stylized way, the incentives of a large online platform that provides buyers with information about sellers’ products and earns a commission as a fraction of the trade price. At the same time, the platform may wish to leave some rents to sellers to prevent them from migrating to rival platforms.

For this section, it will be convenient to strengthen Assumption~\ref{assn:increasing_gainstop} as follows:
\begin{assumption}\label{assn:contraction}
The cost function $c:\Theta \to (0,\infty)$ is strictly increasing. Moreover, for every $\beta \in [0,1]$ there exists $\theta_{\beta}$ such that $(1-\beta)c(\theta_{\beta})=\theta_{\beta}$, with $(1-\beta)c(\theta)>\theta$ for $\theta<\theta_{\beta}$ and $(1-\beta)c(\theta)<\theta$ for $\theta>\theta_{\beta}$.
\end{assumption}
Assumption~\ref{assn:contraction} strengthens Assumption~\ref{assn:increasing_gainstop} by requiring that the threshold property holds not only for the cost function $c$, but also for every adjusted cost function $(1-\beta)c$ with $\beta \in [0,1]$. That is, for each $\beta$ there exists a threshold type $\theta_{\beta}$ such that all types below $\theta_{\beta}$ are inefficient and all types above are efficient. Recall that $\beta$ captures the relative weight placed on revenue versus efficiency. When $\beta=0$, the objective reduces to efficiency and $\theta_{\beta}=\theta^*$, where $\theta^*$ is the threshold type from Assumption~\ref{assn:increasing_gainstop}. When $\beta=1$, the objective reduces to revenue and $\theta_{\beta}=0$. Moreover, since $(1-\beta)c(\theta)$ is strictly decreasing in $\beta$ for each $\theta$, the threshold $\theta_{\beta}$ is strictly decreasing in $\beta$, ranging over the entire interval $[0,\theta^*]$.

With Assumption~\ref{assn:increasing_gainstop}, we can characterize the optimal solutions to the primal problem~(\ref{primal}) under assumptions on the shape of the designer’s objective. Moreover, we exploit the arguments from Propositions~\ref{prop:alpha_increasing} and \ref{prop:alpha_convex} to obtain these characterizations. We begin with the following result:

\begin{proposition}\label{prop:revenue-surplus}
Suppose Assumption~\ref{assn:contraction} holds, and let $\underline{\theta}$ be the first type to be matched under the reveal–pool negative assortative matching distribution, $\pi^{\text{NAM}}$. Then:
\begin{enumerate}
    \item If $\theta_\beta \leq \underline{\theta}$ and the mapping $\theta \mapsto \frac{\theta-(1-\beta)c(\theta)}{x-\theta}$ is strictly increasing on $[\theta_{\beta},\theta^*]$ for every $x \in [\theta^*,c(1)]$, then $\pi^{\text{NAM}}$ is the unique solution to the primal problem~(\ref{primal}).
    
    \item If $\theta_\beta > \underline{\theta}$, a feasible joint distribution $\pi$ solves the primal problem~(\ref{primal}) if and only if it satisfies:
    \begin{enumerate}
        \item[(i)] Types $\theta < \theta_\beta$ do not trade:
        \[
        \int_0^{\theta_{\beta}}\int_0^1 \mathbf{1}_{\{x \geq \theta^*\}}\, d\pi(\theta,x) = 0.
        \]
        \item[(ii)] Types $\theta \geq \theta_\beta$ trade:
        \[
        \int_{\theta_{\beta}}^1\int_0^1 \mathbf{1}_{\{x \geq \theta^*\}}\, d\pi(\theta,x) 
        = 1 - F(\theta_{\beta}).
        \]
    \end{enumerate}
\end{enumerate}
\end{proposition}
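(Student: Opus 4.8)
The strategy is to rely on Lemma~\ref{cor:strong-duality} and Lemma~\ref{cor:complementary-slackness}, exactly as in the proofs of Propositions~\ref{prop:alpha_increasing} and \ref{prop:alpha_convex}, but now reading the objective $v_2(\theta,x) = [x-(1-\beta)c(\theta)]\mathbf{1}_{\{x\ge\theta^*\}}$ through the lens of those earlier arguments. The key observation is that, on the region $x\ge\theta^*$ where trade occurs, we can split $v_2$ into a part that is constant across pools of the same mean, namely $x$, and a type-dependent part $-(1-\beta)c(\theta)$. Using the martingale constraint, for any feasible $\pi$ the contribution of the $x$-term to the objective equals the probability of trade (since $\mathbb{E}_\pi[x\mathbf{1}_{\{x\ge\theta^*\}}] = \mathbb{E}_\pi[\theta\mathbf{1}_{\{x\ge\theta^*\}}]$ is not quite it — more precisely one rewrites $x = \theta + (x-\theta)$ and integrates). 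The cleaner route is to note that once we know (from the same reasoning as in Section~\ref{sec:volume_trade}, via constraint~(\ref{price}) and the martingale condition) that every efficient type trades and is matched to a mean equal to its own cost, the only free choice is which inefficient types to pool. For an inefficient type $\theta'<\theta^*$ pooled to generate mean $x$, complementary slackness requires $x - (1-\beta)c(\theta') + q(x)(x-\theta') \ge 0$, i.e. $\frac{x-(1-\beta)c(\theta')}{x-\theta'} \ge -q(x)$, which plays the role of the ratio $t_x(\theta)$ from the volume-of-trade analysis with $\alpha(\theta)$ replaced by $x-(1-\beta)c(\theta)$.

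For part~1 ($\theta_\beta \le \underline\theta$), the plan is to observe that $x-(1-\beta)c(\theta) > 0$ precisely when $\theta < \theta_\beta$ (using Assumption~\ref{assn:contraction} and $x\ge\theta^*\ge\theta_\beta$), so the ``effective weight'' $\alpha(\theta) := x-(1-\beta)c(\theta)$ is positive on the relevant pooling range and the monotonicity hypothesis $\theta\mapsto\frac{x-(1-\beta)c(\theta)}{x-\theta}$ increasing puts us exactly in the hypothesis of Proposition~\ref{prop:alpha_increasing} (with the caveat that the ratio depends on $x$ in a more complex way than $\frac{\alpha(\theta)}{x-\theta}$, so the dual construction must be redone rather than quoted verbatim). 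I would then mirror the proof of Proposition~\ref{prop:alpha_increasing}: derive from complementary slackness that the optimal $\pi$ reveals inefficient types up to some cutoff and pools the rest negative-assortatively with means equal to efficient costs, argue the cutoff must coincide with $\underline\theta$ (the feasibility/martingale ``resource'' accounting pins it down, and $\theta_\beta\le\underline\theta$ guarantees the revealed low types indeed have negative effective weight so revealing them is optimal), and finally exhibit explicit dual multipliers $(w,q,m)$ satisfying~(\ref{ZP}) with equality on the support of $\pi^{\text{NAM}}$, which by Lemma~\ref{cor:strong-duality} certifies optimality and by Lemma~\ref{cor:complementary-slackness} gives uniqueness.

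For part~2 ($\theta_\beta > \underline\theta$), the characterization is now an ``only the trade set matters'' statement, which suggests the objective, restricted to feasible $\pi$, depends only on which types trade. The plan is to show that for any feasible $\pi$, $\int v_2\,d\pi = \int_{\{x\ge\theta^*\}}\beta c(\theta)\,d\pi + \int_{\{x\ge\theta^*\}}(x-\theta)\,d\pi$, and then to argue the second integral is pinned down: grouping by efficient type, the martingale ``resources'' identity~(\ref{eq:martingale_resources}) forces $\sum_k \pi_{jk}(c_j-\theta_k) = f_j(\theta_j-c_j)$, so $\int_{\{x\ge\theta^*\}}(x-\theta)\,d\pi$ telescopes to a constant independent of $\pi$ once we use that all efficient types trade (which itself follows as before — an untraded efficient type can be pooled in to strictly help). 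Hence maximizing $v_2$ reduces to maximizing $\int_{\{x\ge\theta^*\}}\beta c(\theta)\,d\pi$, i.e. (since $c$ is increasing and $\beta>0$) maximizing trade by the highest possible types, subject to feasibility. One then shows the maximal such trade set is exactly $[\theta_\beta,1]$: types below $\theta_\beta$ have $x-(1-\beta)c(\theta)<0$ for all feasible means so including them strictly lowers the objective (they should be revealed), while $\theta_\beta>\underline\theta$ means there are enough ``resources'' among efficient types to pool in every inefficient type in $[\theta_\beta,\theta^*]$ — this is where one invokes the construction behind $\underline\theta$ and a continuity/monotonicity argument in the cutoff. The converse direction (any $\pi$ with trade set $[\theta_\beta,1]$ is optimal) is then immediate from the reduction, since the objective only sees the trade set. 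The dual certificate for this part is constructed with $q$ chosen so that $-q(x)$ equals the borderline value making type $\theta_\beta$ indifferent, and $m$ absorbing the (PM) constraint as before.

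\textbf{Main obstacle.} The delicate point is the ``cutoff equals $\underline\theta$ / trade set equals $[\theta_\beta,1]$'' claims: I need to show that feasibility (the martingale resource accounting across all efficient types simultaneously) forces exactly the claimed trade set, neither more nor less. This requires carefully transcribing the discrete resource-allocation intuition of Section~\ref{sec:volume_trade} into the continuum, establishing that the total ``resources'' $\int_{\theta^*}^1 f(\theta)(\theta-c(\theta))\,d\theta$ exactly suffices to pool $[\theta_\beta,\theta^*]$ when $\theta_\beta>\underline\theta$ but not $[\,0,\theta^*]$ (Assumption~\ref{assn:no_full_trade}), and that the negative-assortative ODE~(\ref{ODE}) solution $g$ indeed has $g^{-1}(\theta^*)=\theta^*$ with the correct endpoint behavior — all of which hinges on the properties of $g$ established in Lemma~\ref{lem:bijective-function}. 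The second subtlety is that the relevant ratio $\frac{x-(1-\beta)c(\theta)}{x-\theta}$ genuinely depends on $x$ through the numerator (unlike $\frac{\alpha(\theta)}{x-\theta}$), so the dual multiplier $q(x)$ from the volume-of-trade proof cannot simply be reused; I expect to need a fresh but analogous construction, checking~(\ref{ZP}) separately on the revealed region, the pooled region, and the efficient region.
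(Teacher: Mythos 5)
Your high-level plan---reduce to the volume-of-trade framework and mimic Propositions~\ref{prop:alpha_increasing} and \ref{prop:alpha_convex}---is the paper's route, but you dismiss the one step that makes this clean and you introduce an arithmetic error in Part~2 that would derail the argument.

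The martingale identity you write down and then say ``is not quite it'' is in fact exactly right and is the whole point. Taking $B=[\theta^*,1]$ in constraint~(\ref{martingale}) gives $\int\int(x-\theta)\mathbf{1}_{\{x\ge\theta^*\}}\,d\pi=0$ for every feasible $\pi$, hence
\[
\int\int[x-(1-\beta)c(\theta)]\mathbf{1}_{\{x\ge\theta^*\}}\,d\pi
=\int\int[\theta-(1-\beta)c(\theta)]\mathbf{1}_{\{x\ge\theta^*\}}\,d\pi.
\]
This converts the objective to the volume-of-trade form with $\alpha(\theta)=\theta-(1-\beta)c(\theta)$, a function of $\theta$ alone. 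The $x$-dependent numerator you flag as your ``second subtlety'' and ``main obstacle'' simply does not appear---indeed the proposition's hypothesis is stated in terms of $\frac{\theta-(1-\beta)c(\theta)}{x-\theta}$, not $\frac{x-(1-\beta)c(\theta)}{x-\theta}$. After this rewriting, Part~1 is an application of Proposition~\ref{prop:alpha_increasing} restricted to $[\theta_\beta,1]$ (after revealing $[0,\theta_\beta)$, where $\alpha<0$), and no fresh dual construction is needed.

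In Part~2, the split $v_2=\beta c(\theta)+(x-\theta)$ is not an identity: $x-(1-\beta)c(\theta)=\beta c(\theta)+(x-c(\theta))$, which differs from your expression by $\theta-c(\theta)$. Moreover $\int(x-\theta)\mathbf{1}_{\{x\ge\theta^*\}}\,d\pi$ is identically zero by~(\ref{martingale})---there is nothing to telescope---while $\int(\theta-c(\theta))\mathbf{1}_{\{x\ge\theta^*\}}\,d\pi$ is \emph{not} pinned down and depends on which inefficient types trade, so your claim that the surplus piece is a constant is false, and the conclusion that the problem reduces to ``maximizing trade by the highest types'' does not follow. The correct reduction, via the same substitution as above, gives $\int[\theta-(1-\beta)c(\theta)]\mathbf{1}_{\{x\ge\theta^*\}}\,d\pi$; since the integrand is negative on $[0,\theta_\beta)$ and positive on $(\theta_\beta,1]$, the bound $\int_{\theta_\beta}^1[\theta-(1-\beta)c(\theta)]f(\theta)\,d\theta$ follows by dropping the negative piece and bounding the indicator by one, and it is attained iff the trade set is exactly $[\theta_\beta,1]$. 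That yields the full ``if and only if'' once one exhibits a single feasible $\pi$ attaining the bound (start the $g_2$-pooling at $\theta_\beta$ instead of $\underline\theta$, feasible precisely because $\theta_\beta>\underline\theta$); no dual certificate is needed for Part~2.
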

The proof of part 2 of Proposition~\ref{prop:revenue-surplus} is provided in Appendix~\ref{sec:proof_price_surplus}. Part 1 follows directly from Proposition~\ref{prop:alpha_increasing}. Specifically, the designer’s problem is:
\[
\max_{\pi \in \Delta(\Theta \times X)} \; \int_0^1\int_0^1 [x-(1-\beta)c(\theta)] \cdot \mathbf{1}_{\{x\ge \theta^*\}} \, d\pi(\theta, x) 
\quad \text{s.t. \ref{BP}, \ref{martingale}, and \ref{price}}.
\]
Because the martingale constraint (\ref{martingale}) implies that for all measurable $B \subseteq X$:
\[
\int_0^1\int_B x\,d\pi(\theta,x) = \int_0^1\int_B \theta\,d\pi(\theta,x),
\]
we can, without loss of optimality, rewrite the objective as:
\[
\int_0^1\int_0^1 [\theta-(1-\beta)c(\theta)] \cdot \mathbf{1}_{\{x\ge \theta^*\}} \, d\pi(\theta, x).
\]

This objective coincides with the weighted volume of trade objective when $\alpha(\theta)=\theta-(1-\beta)c(\theta)$. The difference is that in Proposition \ref{prop:alpha_increasing} we assumed $\alpha(\theta)>0$ for all $\theta$, whereas here $\alpha(\theta)<0$ for $\theta<\theta_\beta$ and $\alpha(\theta)> 0$ otherwise. Since types with $\alpha(\theta)<0$ reduce the objective, it is optimal to exclude them from trade. We can therefore fully reveal these inefficient types and, conditional on that, solve the designer’s problem considering only types above $\theta_\beta$.

If $\theta_\beta \leq \underline{\theta}$ and, for every $x \in [\theta^*,c(1)]$, the mapping $\theta \mapsto \frac{\theta-(1-\beta)c(\theta)}{x-\theta}$ is strictly increasing on $[\theta_{\beta},\theta^*]$, then the argument from Proposition \ref{prop:alpha_increasing} applies, and $\pi^{\text{NAM}}$ is the unique solution. If instead $\theta_\beta > \underline{\theta}$, then $\pi^{\text{NAM}}$ is not optimal, as it allows some types with $\alpha(\theta)<0$ to trade. In this case, we can modify $\pi^{\text{NAM}}$ so that matching starts at $\theta_\beta$ rather than $\underline{\theta}$, ensuring that a type trades if and only if $\alpha(\theta)\ge 0$. However, because other joint distributions can also achieve this, the reveal-pool negative assortative matching function is not longer uniquely optimal.

Finally, by the same logic, we can extend Proposition \ref{prop:alpha_convex} to the case where the mapping $\theta \mapsto \frac{\theta-(1-\beta)c(\theta)}{x-\theta}$ is convex. The modification required is that the pool–reveal–pool negative assortative matching distribution should begin pooling at $\theta_{\beta}$ rather than at $0$. In this way, we first reveal all types $\theta \in [0,\theta_{\beta})$, for which $\theta-(1-\beta)c(\theta)<0$. 

To construct the modified distribution, we use Lemma \ref{lem:bijective-function} to guarantee the existence and uniqueness of a strictly decreasing bijection $g_{\beta}:[\theta_{\beta},\bar{\theta}_{\beta}] \to [\theta^*,c(1)]$ satisfying the differential equation (\ref{ODE}). Whenever $\theta_{\beta} \leq \underline{\theta}$, we define the modified pool–reveal–pool structure as follows. For some $x_{\beta} \in (\theta^*,c(1))$, let $\theta_{1}=g_{\beta}^{-1}(x_{\beta})$ and $\theta_2=g^{-1}_2(x)$. Denote the resulting distribution by $\pi^{x_{\beta}-\text{NAM}}$, where:
\[
    d\pi^{x_{\beta}-\text{NAM}}(\theta,x)=
    \begin{cases}
        f(\theta)\delta_{\theta}(x) \,d\theta & \text{if } \theta \in [0,\theta_{\beta}), \\
        f(\theta)\delta_{g_{\beta}(\theta)}(x) \,d\theta & \text{if } \theta \in [\theta_{\beta},\theta_1], \\
        f(\theta)\delta_{\theta}(x) \,d\theta & \text{if } \theta \in (\theta_1,\theta_2), \\
        f(\theta)\delta_{g_2(\theta)}(x) \,d\theta & \text{if } \theta \in [\theta_2,\theta^*], \\
        f(\theta)\delta_{c(\theta)}(x) \,d\theta & \text{if } \theta \in (\theta^*,1].
    \end{cases}
\]
Thus, $\pi^{x_{\beta}-\text{NAM}}$ fully reveals types below $\theta_{\beta}$, while for types above $\theta_{\beta}$ it preserves the pool–reveal–pool negative assortative structure. 

We can now state the analogue of Proposition \ref{prop:alpha_convex} when the designer’s objective is a convex combination of price and surplus:

\begin{proposition}\label{prop:revenue-surplus_convex}
Suppose Assumption~\ref{assn:contraction} holds, and let $\underline{\theta}$ be the first type to be matched under the reveal–pool negative assortative matching distribution $\pi^{\text{NAM}}$. If $\theta_\beta \leq \underline{\theta}$, the mapping $\theta \mapsto \frac{\theta-(1-\beta)c(\theta)}{x-\theta}$ is strictly convex on $[\theta_{\beta},\theta^*]$ for every $x \in [\theta^*,c(1)]$, and for $x=c(1)$ we have
\[
\frac{\theta_{\beta}-(1-\beta)c(\theta_{\beta})}{c(1)-\theta_{\beta}}
>\frac{\underline{\theta}-(1-\beta)c(\underline{\theta})}{c(1)-\underline{\theta}},
\]
then there exists $x_{\beta} \in (\theta^*,c(1))$ such that $\pi^{x_{\beta}-\text{NAM}}$ solves the primal problem~(\ref{primal}). 

Conversely, if $\pi$ is a solution to the primal problem~(\ref{primal}), then there exists $x_{\beta} \in (\theta^*,c(1))$ such that $\pi_{\theta}^{x_{\beta}-\text{NAM}}=\pi_{\theta}$ for $F$-almost every $\theta$.
\end{proposition}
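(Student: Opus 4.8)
The strategy mirrors exactly the two-step template described for Propositions~\ref{prop:alpha_increasing} and \ref{prop:alpha_convex}, transplanted to the transformed objective. First I would reduce the convex-combination-of-price-and-surplus problem to a weighted-volume-of-trade problem. Using the martingale constraint~(\ref{martingale}) exactly as in the discussion following Proposition~\ref{prop:revenue-surplus}, the objective $\int\int[x-(1-\beta)c(\theta)]\mathbf{1}_{\{x\ge\theta^*\}}\,d\pi$ can be rewritten as $\int\int[\theta-(1-\beta)c(\theta)]\mathbf{1}_{\{x\ge\theta^*\}}\,d\pi$, so the designer behaves as if maximizing weighted volume of trade with weight $\alpha_\beta(\theta)\equiv\theta-(1-\beta)c(\theta)$. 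The crucial difference from Proposition~\ref{prop:alpha_convex} is that $\alpha_\beta$ is \emph{negative} on $[0,\theta_\beta)$ and positive on $(\theta_\beta,1]$, with $\alpha_\beta(\theta_\beta)=0$ by definition of $\theta_\beta$ (Assumption~\ref{assn:contraction}).

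Second, I would handle the negativity by splitting the type space at $\theta_\beta$. Since any pool with mean $x\ge\theta^*$ contributes $\alpha_\beta(\theta)$ for each pooled $\theta$, and $\alpha_\beta(\theta)<0$ for $\theta<\theta_\beta$, it is weakly better to fully reveal every type below $\theta_\beta$ (which, by Corollary~\ref{cor:no_means_below_theta_star}, means they do not trade and contribute $0$) — an argument made formally by the complementary-slackness inequality, exactly as in the ``who to pool and who to reveal'' discussion. Conditional on revealing $[0,\theta_\beta)$, the residual problem is a weighted-volume-of-trade problem on $[\theta_\beta,1]$ with strictly positive weight $\alpha_\beta$ restricted to $[\theta_\beta,\theta^*]$, and the hypotheses — strict convexity of $\theta\mapsto\alpha_\beta(\theta)/(x-\theta)$ on $[\theta_\beta,\theta^*]$ for every $x\in[\theta^*,c(1)]$, plus the corner inequality at $x=c(1)$ — are precisely the transplanted hypotheses of Proposition~\ref{prop:alpha_convex} with $0$ replaced by $\theta_\beta$. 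Invoking that proposition (or, more carefully, re-running its proof verbatim with domain $[\theta_\beta,\theta^*]$) yields that the residual optimum is a pool–reveal–pool negative assortative matching started at $\theta_\beta$, i.e.\ $\pi^{x_\beta-\text{NAM}}$ for some $x_\beta\in(\theta^*,c(1))$, and that it is essentially unique. Feasibility of $\pi^{x_\beta-\text{NAM}}$ follows from Lemma~\ref{lem:bijective-function} (existence/uniqueness of the decreasing bijection $g_\beta$ solving~(\ref{ODE}) with $g_\beta^{-1}(\theta^*)=\theta^*$) together with the same verification as in Lemma~\ref{lem:nam-feasible}.

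The converse (``any optimal $\pi$ has $\pi_\theta=\pi^{x_\beta-\text{NAM}}_\theta$ $F$-a.e.\ for some $x_\beta$'') follows from the same duality/complementary-slackness machinery: assuming strong duality (established by exhibiting the dual certificate), Lemma~\ref{cor:complementary-slackness} forces, for $\pi$-a.e.\ $(\theta,x)$, that $\theta<\theta_\beta$ is revealed, that every efficient type is pooled at a mean equal to its cost, and that the inefficient types that trade form exactly the two intervals $[\theta_\beta,\theta_1]$ and $[\theta_2,\theta^*]$ dictated by the convex shape of $t_x(\theta)=\alpha_\beta(\theta)/(x-\theta)$; the negative-assortative shape within each pool is then pinned down by the swap argument and the martingale condition via~(\ref{ODE}), exactly as in Proposition~\ref{prop:alpha_convex}.

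\textbf{Main obstacle.} The genuinely new work is constructing the dual multipliers $(w,q,m)$ that certify strong duality for $\pi^{x_\beta-\text{NAM}}$. Unlike the volume-of-trade case, here the ``revealed'' region $[0,\theta_\beta)$ carries a \emph{negative} payoff, so the dual inequality~(\ref{ZP}) at revealed types $\theta<\theta_\beta$ must be tight with $w(\theta)=v_2(\theta,\theta)=\theta-(1-\beta)c(\theta)<0$ while simultaneously dominating $v_2(\theta,x)+q(x)(x-\theta)+m(x)\mathbf{1}_{\{\hat c(\theta)>x\}}$ for all $x\ge\theta^*$ — in particular the multiplier $m(x)$ on~(\ref{price}) must be chosen to slacken the temptation to pool these loss-making types. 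I expect one can reuse the $q$ built in the proof of Proposition~\ref{prop:alpha_convex} on $[\theta_\beta,\theta^*]$ and patch it at the seam $\theta_\beta$ (where $\alpha_\beta$ vanishes, so continuity of the relevant envelope is automatic), but verifying global validity of~(\ref{ZP}) across the seam — especially the cross-terms matching an inefficient $\theta<\theta_\beta$ to an efficient $x$ — is where the care is needed. The corner hypothesis at $x=c(1)$ is exactly what guarantees the pool–reveal–pool (rather than reveal–pool) structure survives, i.e.\ that $x_\beta$ lies strictly inside $(\theta^*,c(1))$, and it enters the dual construction as the condition ensuring $q$ can be taken monotone on the initial pooling block.
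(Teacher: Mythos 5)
Your overall strategy — rewrite the objective via the martingale constraint to get $\alpha_\beta(\theta)=\theta-(1-\beta)c(\theta)$, reveal $[0,\theta_\beta)$ where $\alpha_\beta<0$, and re-run the argument of Proposition~\ref{prop:alpha_convex} on $[\theta_\beta,1]$ with $g_\beta$ in place of $g_1$ — is exactly the route the paper sketches (it gives no more than this sketch). The reduction, the role of the corner inequality at $x=c(1)$ in guaranteeing an interior $x_\beta$, and the appeal to Lemma~\ref{lem:bijective-function} for $g_\beta$ all match.

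Your diagnosis of the ``main obstacle,'' however, contains a concrete error. You assert that for revealed types $\theta<\theta_\beta$ the dual shadow price must equal $w(\theta)=v_2(\theta,\theta)=\theta-(1-\beta)c(\theta)<0$. This is wrong: a revealed inefficient type $\theta<\theta^*$ is matched to $x=\theta<\theta^*$, so the indicator $\mathbf{1}_{\{x\ge\theta^*\}}$ vanishes and $v_2(\theta,\theta)=0$. The tight value of~\ref{ZP} at the matched point is therefore $w(\theta)=0$, exactly as in every other case in the paper (see the $w$ constructed in Step~2 of the proofs of Propositions~\ref{prop:alpha_increasing} and~\ref{prop:alpha_convex}). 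Once you correct this, the supposed obstacle evaporates and in fact reverses: for $\theta<\theta_\beta$ and any $x\ge\theta^*$, the dual inequality becomes
\[
0 \;\ge\; \alpha_\beta(\theta) + q(x)(x-\theta),
\]
and since $\alpha_\beta(\theta)<0$, $q(x)<0$, and $x-\theta>0$, the right-hand side is strictly negative. So the negativity of the weight on $[0,\theta_\beta)$ makes the dual verification for the new region trivially slack rather than delicate. The only thing left to check across the ``seam'' is that $q$ can still be pinned down by the constant $C=-\alpha_\beta(\theta_1)/(x_\beta-\theta_1)<0$, which requires $\theta_1>\theta_\beta$ — and this is precisely what the corner hypothesis at $x=c(1)$ delivers. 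So the genuinely new content is smaller than you feared; no patching or extra care at the seam is needed.
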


\section{Multiple Intersections and Gains at the Bottom}\label{sec:extensions}
In this section we extend our main results to the case where Assumption~\ref{assn:increasing_gainstop} is violated. 
The cost function $c$ remains strictly increasing but may now intersect the $45^\circ$ line at an arbitrary finite number of points. 
To illustrate the logic of the construction, we focus on the case where the designer's objective is a weighted volume of trade, with weighting function $\alpha:\Theta\to(0,\infty)$ strictly increasing. 
When Assumption~\ref{assn:increasing_gainstop} holds, Proposition~\ref{prop:alpha_increasing} establishes that the unique optimal solution to the primal problem~(\ref{primal}) is the reveal--pool negative assortative matching distribution. 
We show that this negative assortative matching structure continues to characterize the optimal solution when there are multiple intersections, though in this case we must account for the presence of multiple intervals of efficient and inefficient types. 
Intuitively, the construction simply replicates the logic of our base case within each adjacent inefficient--efficient pair of intervals, which we make explicit through an example below.

\paragraph{Example: Three Intersections (Four Blocks)}

To illustrate the logic of the construction, consider the case where the cost function $c$ intersects the $45^\circ$ line three times at thresholds $\theta_1^* < \theta_2^* < \theta_3^*$. 
This partitions the type space $\Theta$ into four contiguous intervals that alternate between inefficient and efficient regions. 
Without loss of generality, suppose we are in the configuration:
\[
I_1 < E_2 < I_3 < E_4,
\]
where $I$ denotes an inefficient interval ($\theta<c(\theta)$) and $E$ an efficient interval ($\theta>c(\theta)$). 
We write $E_i=[\underline{\theta}_{E_i},\bar{\theta}_{E_i}]$ and analogously for inefficient intervals.

\begin{figure}[H]
    \centering
    \begin{subfigure}[t]{0.48\textwidth}
        \centering
        \includegraphics[width=\linewidth]{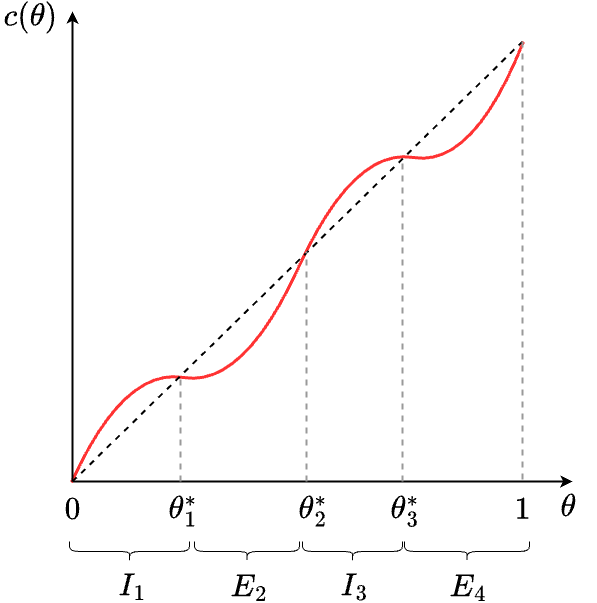}
        \caption{Cost function.}
        \label{fig:cost_multiple_cross}
    \end{subfigure}
    \hfill
    \begin{subfigure}[t]{0.48\textwidth}
        \centering
        \includegraphics[width=\linewidth]{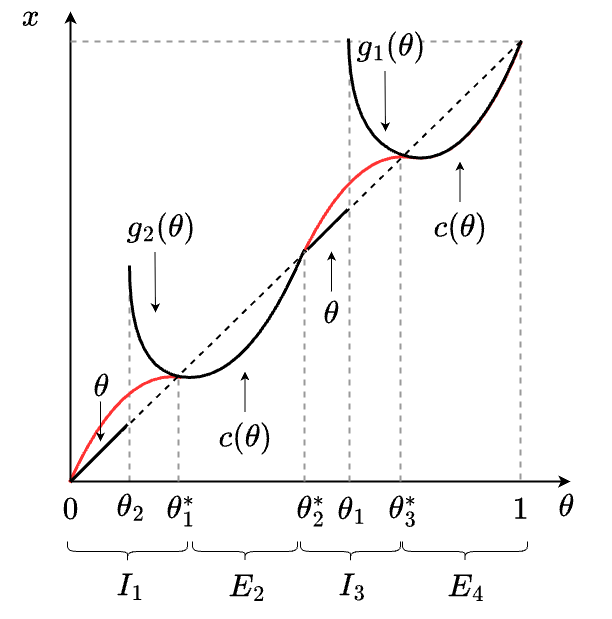}
        \caption{Cost function (in red) and optimal signal (in black).}
        \label{fig:signal_multiple_cross}
    \end{subfigure}
    \caption{Optimal signal for increasing $\alpha(\theta)$ when the cost function intersects the 45-degree line multiple times}
    \label{fig:multiple_cross}
\end{figure}

\paragraph{Step 1. Matching $I_3$ with $E_4$.}
We first replicate the logic of the case where Assumption \ref{assn:increasing_gainstop} holds on the highest adjacent pair $I_3<E_4$. 
We can use Lemma \ref{lem:bijective-function} to define the unique strictly decreasing bijection: $g_1:[\theta_1,\theta_3^*]\to[\theta_3^*,c(\bar{\theta}_{E_4})]$, and let $\theta'_1=\max\{\theta_1,\underline{\theta}_{I_3}\}$ to ensure the domain remains inside $I_3$. 
Then:
\begin{itemize}
    \item All types $\theta \in [\underline{\theta}_{I_3},\theta'_1)$ are \emph{revealed}, i.e.\ matched with $x=\theta$.
    \item All types $\theta \in [\theta'_1,\theta_3^*]$ are \emph{pooled} with efficient types via $x=g_1(\theta)$.
    \item All types $\theta \in [\theta_3^*,c^{-1}(g_1^{-1}(\theta'_1))]\subseteq E_4$ are matched with $x=c(\theta)$.
\end{itemize}
This step may leave a residual subset of efficient types in $E_4$, namely $[c^{-1}(g_1^{-1}(\theta'_1)),\bar{\theta}_{E_4}]$.

\paragraph{Step 2. Matching $I_1$ with $E_2$.}
Next we turn to the left pair $I_1<E_2$. Two possibilities arise:
\begin{enumerate}
    \item If $\theta_1<\underline{\theta}_{I_3}$, then $E_4$ left a residual. We use Lemma \ref{lem:bijective-function} again to construct another bijection $g_2:[\theta_2,\theta_2^*]\to[g_1(\theta'_1),c(\bar{\theta}_{E_4})]$,
    and pool the tail of $I_1$ with this residual of $E_4$. Assumption \ref{assn:no_full_trade} which states that no full trade is possible implies that $I_1$ cannot be entirely covered this way. We then proceed to pair the remainder of $I_1$ with $E_2$ via $g_3:[\theta_3,\theta_2]\to[\theta_2^*,c(\bar{\theta}_{E_2})]$.
    \item If $\theta_1\geq\underline{\theta}_{I_3}$, then there is no residual from $E_4$, and we immediately use the strictly decreasing bijection $g_3:[\theta_3,\theta_2^*]\to[\theta_2^*,c(\bar{\theta}_{E_2})]$,
    to match $I_1$ with $E_2$.
\end{enumerate}

\paragraph{Conclusion.} 
In either case, the construction terminates once all efficient intervals have been exhausted.
At each step, we simply apply the case where Assumption~\ref{assn:increasing_gainstop} holds to the rightmost available adjacent pair—revealing the tail of the inefficient block if necessary and pooling the remainder with the efficient block.

\subsection{General Greedy Construction Procedure}

We now formalize the construction for the case where $c$ intersects the $45^\circ$ line at finitely many points.

Let $0<\theta_1^*<\dots<\theta_m^*<1$ denote the intersection points of $c$ with the $45^\circ$ line. 
These partition $\Theta$ into $m+1$ contiguous intervals $B_1<\dots<B_{m+1}$, alternating between inefficient ($\theta<c(\theta)$) and efficient ($\theta>c(\theta)$) regions. 
Write each interval as $B_k=[\underline{\theta}_{B_k},\bar{\theta}_{B_k}]$, with $B \in \{I,E\}$, where $I$ denotes an inefficient interval and $E$ an efficient interval.

\begin{center}
  \captionof{algorithm}{Greedy Construction}\label{greedy}
\begin{algorithmic}[1]
\begin{enumerate}
    \item \textbf{Initialization.} Let $\mathbf{E}_1$ be the rightmost efficient block (i.e., $\mathbf{E}_1=E_{m+1}$ if $B_{m+1}=E$, else $\mathbf{E}_1=E_m$). Let $\mathbf{I}_1$ be the inefficient block immediately to the \emph{left} of $\mathbf{E}_1$ (i.e., $\mathbf{I}_1=I_m$ if $\mathbf{E}_1=E_{m+1}$, else $\mathbf{I}_1=I_{m-1}$). 
    \item Process the pair $(\mathbf{I}_k,\mathbf{E}_k)$, where $\mathbf{I}_k=[\underline{\theta}_{\mathbf{I}_k},\bar{\theta}_{\mathbf{I}_k}]$ and $\mathbf{E}_k=[\underline{\theta}_{\mathbf{E}_k},\bar{\theta}_{\mathbf{E}_k}]$:
    \begin{enumerate}
        \item Use Lemma \ref{lem:bijective-function} to construct the strictly decreasing bijection:
        \[
            g_k:[\theta_k,\bar{\theta}_{\mathbf{I}_k}] \to[\underline{\theta}_{\mathbf{E}_k},\,c(\bar{\theta}_{\mathbf{E}_k})],\qquad g_k(\bar{\theta}_{\mathbf{I}_k})=\underline{\theta}_{\mathbf{E}_k}.
        \]
        \item If $\theta_k \geq \underline{\theta}_{\mathbf{I}_k}$:
        \begin{enumerate}
            \item Reveal all types in $[\underline{\theta}_{\mathbf{I}_k},\theta_k)$ by assigning $x=\theta$.
            \item Pool the types in $[\theta_k,\bar{\theta}_{\mathbf{I}_k}]$ with the types in $[\underline{\theta}_{\mathbf{E}_k},\bar{\theta}_\mathbf{E_k}]$ using $g_k$, assigning:
            \[
            x=g_k(\theta)\quad\text{for }\theta\in[\theta_k,\bar{\theta}_{\mathbf{I}_k}],\qquad x=c(\theta)\quad\text{for }\theta\in[\underline{\theta}_{\mathbf{E}_k},\bar{\theta}_{\mathbf{E}_k}].
            \]
            \item Set $\mathbf{E}_{k+1}$ equal to the next inefficient interval to the left of $\mathbf{E}_{k}$. That is, $\mathbf{E}_{k+1}=E_j$, where $E_j<\mathbf{E}_{k}$ and $E_j>E_i$ for any other $E_i<\mathbf{E}_{k}$.
            \item If $\mathbf{E}_{k+1} > \mathbf{I}_k$ and $\theta_k>\underline{\theta}_{\mathbf{I}_k}$:
            \begin{enumerate}
                \item Set $\mathbf{I}_{k+1}=[\underline{\theta}_{\mathbf{I}_k},\theta_k]$.
            \end{enumerate}
            \item If $\mathbf{E}_{k+1} < \mathbf{I}_k$ or $\theta_k=\underline{\theta}_{\mathbf{I}_k}$:
            \begin{enumerate}
                \item Set $\mathbf{I}_{k+1}$ equal to the next inefficient interval to the left of $\mathbf{I}_{k}$. That is, $\mathbf{I}_{k+1}=I_j$, where $I_j<\mathbf{I}_{k}$ and $I_j>I_i$ for any other $I_i<\mathbf{I}_{k}$.
            \end{enumerate}
        \end{enumerate}
        \item If $\theta_k < \underline{\theta}_{\mathbf{I}_k}$:
        \begin{enumerate}
            \item Pool the types in $[\underline{\theta}_{\mathbf{I}_k},\bar{\theta}_{\mathbf{I}_k}]$ with the types in $[\underline{\theta}_{\mathbf{E}_k},c^{-1}(g(\underline{\theta}_\mathbf{I_k}))]$ using $g_k$, assigning:
            \[
            x=g_k(\theta)\quad\text{for }\theta\in[\underline{\theta}_{\mathbf{I}_k},\bar{\theta}_{\mathbf{I}_k}],\qquad x=c(\theta)\quad\text{for }\theta\in[\underline{\theta}_{\mathbf{E}_k},c^{-1}(g(\underline{\theta}_\mathbf{I_k}))].
            \]
            \item Set $\mathbf{E}_{k+1}$ equal to the residual types not used in the pooling of $\mathbf{E}_{k}$. That is, $\mathbf{E}_{k+1}=[c^{-1}(g(\underline{\theta}_\mathbf{I_k})),\bar{\theta}_{\mathbf{E}_k}]$.
            \item Set $\mathbf{I}_{k+1}$ equal to the next inefficient interval to the left of $\mathbf{I}_{k}$. That is, $\mathbf{I}_{k+1}=I_j$, where $I_j<\mathbf{I}_{k}$ and $I_j>I_i$ for any other $I_i<\mathbf{I}_{k}$.
        \end{enumerate}
    \end{enumerate}
    \item Repeat until there are no more efficient intervals $E_j$. All types in the remaining inefficient intervals are revealed by setting $x=\theta$.
\end{enumerate}
\end{algorithmic}
\end{center}

\begin{proposition}\label{prop:extension_multiple_crossings}
    Suppose that $c:\Theta \to (0,\infty)$ intersects the $45^\circ$ line at finitely many points, and that for every $x \in [\theta^*,c(1)]$ the mapping $\theta \mapsto \frac{\alpha(\theta)}{x-\theta}$
    is strictly increasing on $[0,\theta^*]$. 
    Then Algorithm~\ref{greedy} produces a joint distribution $\pi$ that is the uniquely optimal solution to the primal problem~(\ref{primal}).
\end{proposition}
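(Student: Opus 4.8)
The plan is to mirror the argument for Propositions~\ref{prop:alpha_increasing} and~\ref{prop:alpha_convex}: assume strong duality, use complementary slackness (Lemma~\ref{cor:complementary-slackness}) to extract necessary structural conditions on any optimal coupling, show these conditions force exactly the coupling produced by Algorithm~\ref{greedy}, and then exhibit a feasible dual solution that supports it, so that Lemma~\ref{cor:strong-duality} simultaneously certifies optimality, validates the assumed strong duality, and---again through complementary slackness---yields $F$-a.e.\ uniqueness. The preliminary step is to verify that the output $\pi$ of Algorithm~\ref{greedy} is feasible. Constraint~(\ref{BP}) holds because each type is assigned its full prior mass and the domains/ranges of the successive bijections $g_k$ are nested so that no type is assigned twice; constraint~(\ref{price}) holds because a revealed type gets $x=\theta\geq\hat c(\theta)$, an inefficient type pooled through $g_k$ gets $x=g_k(\theta)\geq\underline{\theta}_{\mathbf{E}_k}>\theta=\hat c(\theta)$, and an efficient type gets $x=c(\theta)=\hat c(\theta)$; and~(\ref{martingale}) holds pool by pool because each $g_k$ solves the ODE~(\ref{ODE}) with the stated endpoint conditions, exactly as in the proof of Lemma~\ref{lem:nam-feasible}. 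Existence, uniqueness and strict monotonicity of the $g_k$'s come from Lemma~\ref{lem:bijective-function}.

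\paragraph{Necessary conditions.} Assuming strong duality, I would reprove, now block by block, the four facts from Section~\ref{sec:volume_trade}: (i) every efficient type trades; (ii) every trading pool has mean equal to the cost of its highest type, which is efficient, and---since $c$ is strictly increasing---contains exactly one efficient type, so each efficient type $\theta$ is matched to the single mean $c(\theta)$; (iii) within any set of pooled inefficient types the matching is negative assortative, because the per-unit resource cost $c(\theta')-\theta$ of serving inefficient type $\theta$ from efficient type $\theta'$ has increasing differences in $(c(\theta'),-\theta)$, so any positive-assortative piece can be strictly improved by the swap of Section~\ref{sec:volume_trade}; and (iv) under the stated monotonicity of $\theta\mapsto\alpha(\theta)/(x-\theta)$, the traded inefficient types inside each block form an upper interval, with the untraded types at the bottom of the block.

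\paragraph{Forcing the greedy construction.} Combining (i)--(iv) with feasibility, I would then argue that the optimal coupling is exactly the one built by Algorithm~\ref{greedy}. By (i) the total efficient ``resources'' $\int_{\{\theta>c(\theta)\}}f(\theta)\,(\theta-c(\theta))\,d\theta$ are entirely spent drawing inefficient types into trade; the per-unit price of drawing inefficient type $\theta$ from efficient type $\theta'$ is $c(\theta')-\theta$, and both this price (via the supermodularity, i.e.\ the global NAM conclusion) and the weight $\alpha(\theta)$ favor spending the highest available efficient types on the inefficient types sitting just below the nearest threshold. Hence each efficient block must first be matched---negative-assortatively---with the inefficient block immediately to its left, and only its residual capacity, if any, spills downward to the next inefficient block; this is precisely the rightmost-pair-first, spill-down recursion of Algorithm~\ref{greedy}, and because the $g_k$ are uniquely pinned down by~(\ref{ODE}) and the endpoint conditions, the coupling is unique up to $F$-null sets. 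It remains to assemble the dual certificate: on each processed pair one transplants the dual solution from Proposition~\ref{prop:alpha_increasing} (translated to that pair's threshold and mean range), and then glues these blockwise pieces together with the values of $q$, $m$, $w$ on the revealed regions and on means $x<\theta_1^*$ so that~(\ref{ZP}) holds globally with equality on $\supp\pi$; Lemma~\ref{cor:strong-duality} then finishes the proof.

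\paragraph{Main obstacle.} I expect the crux to be the last two steps. ``Forcing'' needs a genuinely global exchange argument---handling the recursive bookkeeping of Algorithm~\ref{greedy} (residual efficient segments, the case split on whether $\theta_k\gtrless\underline{\theta}_{\mathbf{I}_k}$, and the order in which inefficient blocks are consumed)---rather than a purely blockwise one, and the dual construction must be glued so that the inequality in~(\ref{ZP}) is never violated at a ``cross-block'' pair $(\theta,x)$ with $\theta$ in one block and $x$ in the mean range associated with another. Everything else (feasibility and the single-block structural lemmas) is a routine adaptation of Sections~\ref{sec:optimaltransport}--\ref{sec:main_results}.
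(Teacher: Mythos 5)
Your proposal follows essentially the same plan as the paper's own proof, which is only a sketch: it too says to iterate, block by block, the duality-plus-complementary-slackness machinery developed for the single-crossing case, verify feasibility of the greedy output via Lemma~\ref{lem:bijective-function} and the ODE~(\ref{ODE}), and close with a dual certificate invoking Lemma~\ref{cor:strong-duality}. Your explicit two-step layout (structural necessary conditions under assumed strong duality, then an explicit dual witness) matches exactly what the appendix does for Propositions~\ref{prop:alpha_increasing} and~\ref{prop:alpha_convex}, and the obstacles you flag---a genuinely global exchange argument across blocks, and gluing the blockwise dual multipliers so that~\ref{ZP} holds at cross-block pairs $(\theta,x)$---are precisely the details the paper's sketch also leaves unaddressed, so this is not a gap relative to the paper's proof.

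Two small calibration remarks. First, the paper's sketch explicitly points to the proof of Proposition~\ref{prop:alpha_convex}, not Proposition~\ref{prop:alpha_increasing}, even though the hypothesis of the present statement (increasing ratio) matches the latter. This is not an accident: the dual multiplier $q$ in the multi-crossing case must be defined piecewise across the several mean ranges $[\underline{\theta}_{\mathbf{E}_k}, c(\bar{\theta}_{\mathbf{E}_k})]$, with kinks at the junctions, which is exactly the machinery built in Step~2 of Proposition~\ref{prop:alpha_convex} (where $q$ switches from $a$-based to $b$-based at the interior point $x^*$), whereas the $q$ in Proposition~\ref{prop:alpha_increasing} is a single smooth piece. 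So if you carry out the dual construction, you should transplant the Prop.~\ref{prop:alpha_convex} template. Second, a minor imprecision in your item~(iii): the per-unit resource cost $c(\theta')-\theta$ is additively separable in $(c(\theta'),-\theta)$ and therefore has no \emph{strict} increasing differences; the paper's negative-assortative swap argument rests instead on the fact that the marginal rate of transformation $(c_b-\theta_1)/(c_b-\theta_2)$ is decreasing in $c_b$ (equivalently, $\log(c(\theta')-\theta)$ is strictly submodular), not on supermodularity of the raw cost. This does not affect the validity of your plan, but the cited justification is the wrong one.
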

\begin{proof}[Proof sketch]
The optimality of the construction follows by iterating the argument used in the proof of Proposition~\ref{prop:alpha_convex}.
More precisely, for each efficient interval $E_j$ and all inefficient intervals to its left, one can apply the same reasoning as in Proposition~\ref{prop:alpha_convex} to establish that the reveal–pool negative assortative matching structure is uniquely optimal.
Repeating this argument across all efficient intervals shows that the greedy construction yields the unique optimal solution for any finite number of intersections
\end{proof}

Moreover, note that throughout this section we have been agnostic about both the number of intersections and the ordering of efficient and inefficient blocks. 
The construction therefore also covers the case where there is a single intersection $\theta^*$ such that all types below $\theta^*$ are efficient and all types above $\theta^*$ are inefficient. 
In other words, rather than having \emph{gains at the top} as in Assumption~\ref{assn:increasing_gainstop}, we now have \emph{gains at the bottom}. 
In this case, there are no adjacent inefficient--efficient blocks of the form $I_{k}<E_{k+1}$, and hence the optimal joint distribution simply reveals every type. 

Intuitively, the martingale  condition~(\ref{martingale}) and the prices-as-means condition~(\ref{price}) prevent any inefficient type $\theta$ from trading. 
Pooling $\theta$ with an efficient type $\theta'$ always produces a posterior mean $x<\theta$. 
As $c(\theta)>\theta>x$, $\theta$ would not accept to trade at price $x$. 
Thus, unlike in the gains at the top case, the designer cannot exploit efficient types to enable trade for inefficient ones. 
This reasoning applies regardless of the designer’s objective. 
We therefore obtain the following corollary.

\begin{corollary}\label{coro:extension_gainsatthebottom}
    Suppose the cost function $c$ is strictly increasing and there exists $\theta^*$ such that: 
    \[
    c(\theta^*)=\theta^*,\quad c(\theta)<\theta \;\;\text{for all } \theta<\theta^*,\quad c(\theta)>\theta \;\;\text{for all } \theta>\theta^*.
    \] 
    Then, for any designer's objective, the fully revealing joint distribution $\pi^{\text{id}}$ solves the primal problem \ref{primal}, where:
    \[
    d\pi^{\text{id}}(\theta,x)=f(\theta)\,\delta_{\theta}(x)\,d\theta.
    \]
\end{corollary}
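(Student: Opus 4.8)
The plan is to prove the corollary directly, by showing that \emph{every} feasible joint distribution $\pi$ yields a value of the designer's objective no larger than $\pi^{\text{id}}$; no appeal to duality is needed. First I would verify that $\pi^{\text{id}}$ is feasible: (\ref{BP}) is immediate, (\ref{martingale}) holds because $x=\theta$ under $\pi^{\text{id}}$, and (\ref{price}) holds because $\hat{c}(\theta)=\min\{\theta,c(\theta)\}\le\theta=x$. Recalling that under any feasible $\pi$ the competitive price equals the induced posterior mean $x$, the two objective classes take the form $v_1(\theta,x)=\alpha(\theta)\,\mathbf{1}_{\{x\ge c(\theta)\}}$ and $v_2(\theta,x)=[x-(1-\beta)c(\theta)]\,\mathbf{1}_{\{x\ge c(\theta)\}}$. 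The entire argument then rests on one structural fact: in the ``gains at the bottom'' configuration, constraints~(\ref{price}) and~(\ref{martingale}) together force the trading event to coincide $\pi$-almost everywhere with $\{x\le\theta^*\}$, and on that event $\theta\le\theta^*$.

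The structural lemma I would establish is the following. On the event $A=\{x>\theta^*\}$ (i.e.\ $B=(\theta^*,1]$ in~(\ref{martingale})), constraint~(\ref{price}) gives $\hat{c}(\theta)\le x$ for $\pi$-a.e.\ $(\theta,x)$; a short case check using $c(\theta^*)=\theta^*$ and strict monotonicity of $c$ shows that $\hat{c}(\theta)\le x$ together with $x>\theta^*$ implies $\theta\le x$ (trivial if $\theta\le\theta^*$; and if $\theta>\theta^*$ then $\hat c(\theta)=\theta$). Hence $x-\theta\ge 0$ $\pi$-a.e.\ on $A$, while~(\ref{martingale}) with $B=(\theta^*,1]$ gives $\int_A(x-\theta)\,d\pi=0$; therefore $x=\theta$ $\pi$-a.e.\ on $A$, so on $A$ type $\theta=x>\theta^*$ is revealed and, since $c(x)>x$, does not trade. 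On the complementary event $\{x\le\theta^*\}$, constraint~(\ref{price}) gives $\hat{c}(\theta)\le x\le\theta^*$, which forces $\theta\le\theta^*$ (otherwise $\hat c(\theta)=\theta>\theta^*$); since $F$ is atomless, $\theta<\theta^*$ $\pi$-a.e.\ there, so $\hat c(\theta)=c(\theta)\le x$ and trade does occur. Combining, $\mathbf{1}_{\{x\ge c(\theta)\}}=\mathbf{1}_{\{x\le\theta^*\}}$ $\pi$-a.e.\ and $\pi(\{x\le\theta^*,\ \theta>\theta^*\})=0$.

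Given the lemma, the comparison with $\pi^{\text{id}}$ is bookkeeping. For $v_1$, $\int\alpha(\theta)\mathbf{1}_{\{x\ge c(\theta)\}}\,d\pi=\int\alpha(\theta)\mathbf{1}_{\{x\le\theta^*\}}\,d\pi\le\int_{\{\theta\le\theta^*\}}\alpha(\theta)\,d\pi=\int_0^{\theta^*}\alpha(\theta)f(\theta)\,d\theta$, using $\alpha>0$, the $\pi$-a.e.\ inclusion $\{x\le\theta^*\}\subseteq\{\theta\le\theta^*\}$, and~(\ref{BP}); the final quantity is exactly the value under $\pi^{\text{id}}$. For $v_2$, the lemma lets me rewrite the value as $\int_{\Theta\times[0,\theta^*]}[x-(1-\beta)c(\theta)]\,d\pi$, and~(\ref{martingale}) with $B=[0,\theta^*]$ replaces $x$ by $\theta$ under the integral, giving $\int_{\Theta\times[0,\theta^*]}[\theta-(1-\beta)c(\theta)]\,d\pi$; since $\theta-(1-\beta)c(\theta)\ge\theta-c(\theta)\ge0$ on $[0,\theta^*]$ and $\{x\le\theta^*\}\subseteq\{\theta\le\theta^*\}$ $\pi$-a.e., this is at most $\int_0^{\theta^*}[\theta-(1-\beta)c(\theta)]f(\theta)\,d\theta$, which is again the value under $\pi^{\text{id}}$. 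Thus $\pi^{\text{id}}$ is a maximizer of~(\ref{primal}) for both objective classes (and in particular for every $\beta\in[0,1]$).

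The main obstacle is the structural lemma — turning the intuition ``any pool with mean above $\theta^*$ is degenerate'' into a rigorous $\pi$-a.e.\ statement. The two delicate points are (i) the elementary but genuinely case-based verification that $\hat c(\theta)\le x$ implies $\theta\le x$ when $x>\theta^*$ and $\theta\le\theta^*$ when $x\le\theta^*$, which uses both the definition $\hat c=\min\{\theta,c(\theta)\}$ and the single upward crossing hypothesis; and (ii) deducing $x=\theta$ $\pi$-a.e.\ on $A$ by combining the sign restriction $x-\theta\ge0$ with the martingale \emph{equality} over the rectangle $\Theta\times(\theta^*,1]$, which lets me avoid any disintegration argument. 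Everything else — feasibility of $\pi^{\text{id}}$ and the two objective comparisons — is routine.
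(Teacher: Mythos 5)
Your argument is correct, and it is a genuinely different — and more rigorous — route than the paper's. The paper obtains the corollary informally as a degenerate case of the greedy construction of Section~\ref{sec:extensions}: it observes that there are no adjacent $I_k < E_{k+1}$ blocks, so the algorithm reveals everything, and then gives a one-sentence intuitive explanation (any pool containing an inefficient $\theta > \theta^*$ has mean $x<\theta\le c(\theta)$, so $\theta$ refuses to trade). That argument leans on the regularity hypotheses of Proposition~\ref{prop:extension_multiple_crossings} (an increasing ratio $\alpha(\theta)/(x-\theta)$) and does not explicitly address the $v_2$ objective or verify that no pooling of efficient types can do better. Your proof instead works directly on the primal: you establish a clean structural lemma — combining (\ref{price}) with the martingale equality over the rectangle $\Theta\times(\theta^*,1]$ to force $x=\theta$ $\pi$-a.e.\ where $x>\theta^*$, and showing $\{x\le\theta^*\}\subseteq\{\theta\le\theta^*\}$ $\pi$-a.e.\ — and then a two-line upper-bound comparison against $\pi^{\text{id}}$ for each objective class. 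This is self-contained (no duality, no ODE machinery, no regularity assumptions on $\alpha$), handles $v_1$ and $v_2$ uniformly and for all $\beta\in[0,1]$, and correctly reverts to the original trading indicator $\mathbf{1}_{\{x\ge c(\theta)\}}$ rather than the gains-at-the-top reformulation $\mathbf{1}_{\{x\ge\theta^*\}}$, which is essential here. The one thing your argument does not deliver that duality would — uniqueness of the optimizer — is also not claimed in the corollary, so that is fine.

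One very small remark: in the $v_2$ comparison, after replacing $x$ by $\theta$ via (\ref{martingale}) on $B=[0,\theta^*]$, the step from $\int_{\{x\le\theta^*\}}[\theta-(1-\beta)c(\theta)]\,d\pi$ to $\int_{\{\theta\le\theta^*\}}[\theta-(1-\beta)c(\theta)]\,d\pi$ requires that the integrand be nonnegative on $\{\theta\le\theta^*\}\setminus\{x\le\theta^*\}$; you invoke $\theta-(1-\beta)c(\theta)\ge\theta-c(\theta)\ge 0$ there, which is correct precisely because $\theta\le\theta^*$ on that set. Worth spelling out, but the step is sound.
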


\section{Conclusion}\label{sec:conclusion}

We have studied how a policymaker or intermediary can use information disclosure to manipulate market outcomes in competitive markets with adverse selection. Motivated by settings where the objective extends beyond ensuring the efficiency of each individual trade—such as financial stability policies, initiatives aimed at increasing financial inclusion, and intermediated markets—we have characterized the disclosure policies that maximize volume of trade and a combination of price and producer surplus.

The presence of adverse selection implies that even after the designer's signal is observed by market participants, sellers' decisions to trade reveal additional information about the quality of their products. Thus, the designer faces endogenous constraints in the set of posterior beliefs she can induce. We address this by reformulating the problem as a martingale optimal transport exercise with a free marginal, and subject to an additional constraint that ensures markets do not infer additional information from sellers' willingness to trade. Complementary slackness under strong duality provides a sharp characterization of the support of the primal solution, allowing us to identify the set of candidate optimal signals. We then explicitly construct dual multipliers to support these signals as solutions. 

We show that, when the designer maximizes weighted volume of trade, the optimal signals feature negative assortative matching between inefficient and efficient sellers, taking either a reveal–pool or a pool–reveal–pool form depending on whether the objective places larger weight on higher or lower types. This signal remains uniquely optimal when the designer maximizes a weighted sum of transaction price and surplus, provided the weight on price is high enough. As the designer cares more about efficiency, any signal that fully discloses low‐quality types while pooling all middle types with some high‐quality types becomes optimal. Finally, we show that when the cost structure is such that there are multiple efficient and inefficient pools of sellers, the analysis applies within each inefficient-efficient pair of regions: the optimal policy repeats the reveal–then–pool construction, yielding separate negative-assortative pools.

\newpage
\bibliography{references.bib}
\bibliographystyle{aer}

\appendix

\newpage

\section{Preliminary Results}\label{sec:preliminary}

\begin{lemma}[Existence and Uniqueness of Decreasing Bijections]\label{lem:bijective-function}
   There exist unique functions  $g_1 : [0, \bar{\theta}] \to [\theta^*, c(1)]$ and $g_2 : [\underline{\theta}, \theta^*] \to [\theta^*, c(1)]$ that are strictly decreasing, differentiable, and bijective, where $0<\bar{\theta}<\underline{\theta}<\theta^*$, such that:
    \begin{equation}
        x=g_i^{-1}(x)\frac{-f\left(g_i^{-1}(x)\right)\frac{\partial g^{-1}(x)}{\partial x}}{-f\left(g_i^{-1}(x)\right)\frac{\partial g_i^{-1}(x)}{\partial x}+f\left(c^{-1}(x)\right)\frac{\partial c^{-1}(x)}{\partial x}}+c^{-1}(x)\frac{f\left(c^{-1}(x)\right)\frac{\partial c^{-1}(x)}{\partial x}}{-f\left(g_i^{-1}(x)\right)\frac{\partial g_i^{-1}(x)}{\partial x}+f\left(c^{-1}(x)\right)\frac{\partial c^{-1}(x)}{\partial x}},
    \end{equation}
    for $i=1,2$ and all $x \in [\theta^*,c(1)]$.
\end{lemma}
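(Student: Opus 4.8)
The plan is to interpret the displayed equation as an ordinary differential equation for the inverse function $h_i := g_i^{-1}$ on the interval $x \in [\theta^*, c(1)]$, and then invoke the Picard–Lindelöf (Cauchy–Lipschitz) theorem together with a monotonicity argument. First I would algebraically rearrange the fixed-point equation. Multiplying through by the common denominator $-f(h_i(x))\,h_i'(x) + f(c^{-1}(x))\,(c^{-1})'(x)$ and collecting terms, the equation becomes
\begin{equation*}
    \bigl(x - h_i(x)\bigr)\bigl(-f(h_i(x))\,h_i'(x)\bigr) = \bigl(c^{-1}(x) - x\bigr)\,f(c^{-1}(x))\,(c^{-1})'(x),
\end{equation*}
which solves explicitly for the derivative:
\begin{equation*}
    h_i'(x) = -\,\frac{\bigl(c^{-1}(x) - x\bigr)\,f(c^{-1}(x))\,(c^{-1})'(x)}{\bigl(x - h_i(x)\bigr)\,f(h_i(x))} \;=:\; \Phi(x, h_i(x)).
\end{equation*}
I would then check that on the relevant region — $x \in [\theta^*, c(1)]$ and $h_i(x) \in [0, \theta^*)$, so that $x - h_i(x) > 0$, while $c^{-1}(x) \ge \theta^* \ge x$ with $c^{-1}(x) > x$ strictly for $x > \theta^*$ (using Assumption~\ref{assn:increasing_gainstop}: $c$ lies below the diagonal above $\theta^*$) and $c^{-1}(x) = x$ exactly at $x = \theta^*$ — the right-hand side $\Phi$ is well-defined, continuous, and continuously differentiable in $h_i$ (hence locally Lipschitz), because $f$ is continuously differentiable and strictly positive, $c$ is continuously differentiable with $c' > 0$ so $c^{-1}$ is $C^1$, and the denominator $x - h_i$ stays bounded away from $0$ as long as the solution stays strictly below the diagonal.

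Next I would set up the initial value problems. For $g_2$ the natural condition is the terminal one $h_2(\theta^*) = \theta^*$; note that at $x = \theta^*$ the numerator of $\Phi$ vanishes (since $c^{-1}(\theta^*) = \theta^*$) while the denominator also vanishes (since $h_2(\theta^*) = \theta^*$), so a little care is needed at this endpoint — I would handle it by a limiting/L'Hôpital argument to show $\Phi$ extends continuously to $x = \theta^*$ with a finite value (using $c'(\theta^*) \ne 1$, which follows from $c$ crossing the diagonal from above), or alternatively solve the IVP started slightly inside the interval and take limits. Picard–Lindelöf then gives a unique local $C^1$ solution; I would extend it maximally to the right and argue it remains strictly decreasing (since $\Phi < 0$ whenever $h_i < x$ and $c^{-1}(x) > x$) and strictly above $0$, and that the solution necessarily reaches $x = c(1)$ with $h_2(c(1)) =: \underline{\theta} \in (0, \theta^*)$ — boundedness below away from the blow-up locus $\{h_i = x\}$ is what keeps the solution from escaping before $x = c(1)$. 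For $g_1$ I would instead impose $h_1(c(1)) = 0$ (equivalently $g_1(0) = c(1)$) and integrate leftward from $x = c(1)$; the same Lipschitz estimate gives local existence and uniqueness, and monotonicity together with the fact that $h_1$ stays in $(0, \theta^*)$ lets the solution be continued down to some $x = g_1(\bar\theta)$; one then shows the solution hits $h_1 = $ some value at $x = \theta^*$, or more precisely that $\bar\theta := h_1(\theta^*)$ satisfies $0 < \bar\theta < \underline{\theta}$, the latter inequality because $g_1$ and $g_2$ solve the same ODE but $g_1$ starts "earlier" (from $0$) and the trajectories are ordered by the comparison principle for scalar ODEs.

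The main obstacle I anticipate is the degenerate behavior at the endpoint $x = \theta^*$, where both numerator and denominator of $\Phi$ vanish simultaneously: standard Picard–Lindelöf does not apply at a point where the vector field is a priori $0/0$, so I would need either (i) an explicit asymptotic expansion of $h_i$ near $\theta^*$ of the form $h_i(x) = \theta^* - a(x - \theta^*) + o(x-\theta^*)$, plug it in, and solve for the slope $a$ via a quadratic whose coefficients involve $f(\theta^*)$ and $c'(\theta^*)$ (selecting the root that makes $h_i$ decreasing and staying below the diagonal), or (ii) a change of variables that desingularizes the ODE near that point. A secondary, more routine obstacle is verifying the ordering $\bar\theta < \underline{\theta} < \theta^*$ and that the maximal solutions actually span the full interval $[\theta^*, c(1)]$ rather than blowing up or hitting the diagonal prematurely; this I would dispatch with the comparison principle and an a priori bound showing $x - h_i(x)$ is bounded below by a positive constant on compact subintervals away from $\theta^*$, ruling out finite-time blow-up of $h_i'$.
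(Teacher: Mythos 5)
Your overall strategy matches the paper's closely: both treat the displayed relation as a first-order ODE for the inverse $h_i=g_i^{-1}$ on $[\theta^*,c(1)]$, both invoke Picard--Lindel\"of with terminal conditions $h_2(\theta^*)=\theta^*$ and $h_1(c(1))=0$, and both argue the solution stays strictly below the diagonal so it stays monotone. That part is fine.

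There are, however, two genuine gaps. First, you correctly flag the $0/0$ degeneracy at $(\theta^*,\theta^*)$ as the main obstacle but then leave it as a to-do (``either an asymptotic expansion \dots or a change of variables''). The paper actually commits to a resolution: it extends the vector field $G(x,a)=\frac{b'(x)f(b(x))}{f(a)}\cdot\frac{b(x)-x}{a-x}$ to the closed domain $[\theta^*,\infty)\times(-\infty,\theta^*]$ by setting $G(\theta^*,\theta^*)=0$, and argues continuity and local Lipschitzness there by using $c'(\theta^*)<1$ (equivalently $b'(\theta^*)>1$, a consequence of Assumption \ref{assn:increasing_gainstop}) so that the numerator $b(x)-x$ vanishes ``fast enough'' relative to $a-x$. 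Your proposal needs to actually carry out one of your two candidate routes; a sketch that says ``select the root of a quadratic'' is not a proof.

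Second, and more substantive: you never establish $\underline\theta>0$, and your argument for $\bar\theta<\underline\theta$ does not work. You claim $\bar\theta<\underline\theta$ follows from the scalar comparison principle because $h_1<h_2$ pointwise. But $\bar\theta=h_1(\theta^*)$ and $\underline\theta=h_2(c(1))$ are values of the two solutions at \emph{different} points of the $x$-axis, so pointwise ordering of $h_1$ and $h_2$ gives you nothing here (indeed $h_2$ is decreasing, so $h_2(\theta^*)>h_2(c(1))$, and the ordering could go either way a priori). The paper proves both $\underline\theta>0$ and $\bar\theta<\underline\theta$ by appealing to Assumption \ref{assn:no_full_trade}: if either inequality failed, one could construct a feasible joint distribution in which every type trades, contradicting the infeasibility of full trade. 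This is not an ODE fact at all --- it is the one place in the lemma where the economic primitives of the model enter, and your proposal omits it entirely.
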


\begin{proof}[Proof of Lemma~\ref{lem:bijective-function}]
   We show the existence and uniqueness of the function $g_2:[\underline{\theta},\theta^*] \to [\theta^*,c(1)]$, where $\underline{\theta} \in (0,\theta^*)$. The proof for the function $g_1$ is omitted, since it follows essentially the same argument. Note also that, since they solve the same terminal value problem, $g$ and $g_2$ are identical, except for the initial point in the domain, which is $\underline{\theta}$ for $g$ and $\theta_2$ for $g_2$.
    
   First, we extend the density function $f$ so that it is defined over the interval $(-\infty, 1]$, remains strictly positive everywhere, and satisfies $f \in C^{1}((-\infty, 1])$. Similarly, we extend the cost function $c$ so that it is defined over $[0, \infty)$, is strictly increasing, and satisfies $c \in C^{1}([0, \infty))$.

    Next, let $b(x):= c^{-1}(x)$. Since $c$ is strictly increasing and continuously differentiable, $b$ is also strictly increasing and continuously differentiable. Define the domain $D = [\theta^*, \infty) \times (-\infty, \theta^*]$, and let $G : D \to \mathbb{R}$ be the function defined by:
    \[
        G(x,a)=
        \begin{cases}
            \frac{b'(x)f(b(x))}{f(a)}\frac{b(x)-x}{a-x} & \text{if } (x,a) \in D \setminus \{(\theta^*,\theta^*)\}, \\
            0 & \text{if } (x,a)=(\theta^*,\theta^*).\
        \end{cases}
    \]
    
    The function $G$ is continuous because it is composed of continuous functions $b$ and $f$, and it satisfies $\lim_{(x,a) \to (\theta^*, \theta^*)} G(x,a) = 0$. Indeed, since $c(\theta) > \theta$ for all $\theta \in [0, \theta^*)$, $c(\theta^*) = \theta^*$, and $c(\theta) < \theta$ for all $\theta \in (\theta^*, 1]$, it follows that $c'(\theta) < 1$ in a neighborhood around $\theta^*$. This implies that $b'(\theta) > 1$ in a neighborhood around $\theta^*$, and therefore $b(x) - x$ converges to zero faster than $a - x$ as $(x,a) \to (\theta^*, \theta^*)$. Moreover, since both $b$ and $f$ are continuously differentiable, we conclude that $F$ is locally Lipschitz.

    Define the initial value problem given by:
    \begin{equation}
        a'(x)=G(x,a(x)), \quad a(\theta^*)=\theta^*. \tag{IVP} \label{eq:IVP}
    \end{equation}

    Since $G$ is continuous and locally Lipschitz, we can apply the Picard--Lindelöf Theorem to conclude that there exists a unique solution to~\ref{eq:IVP}, given by $a : [\theta^*, \theta^* + \varepsilon] \to (-\infty, \theta^*]$ for some $\varepsilon > 0$, where $a(x) < \theta^*$ and, since $b(x)>x$, $b'(x)>0$ and $f$ is strictly positive, we have $a'(x)=F(x,a(x))<0$ for all $x \in (\theta^*, \theta^* + \varepsilon]$.

    Since the solution remains in the domain $D$ and it is strictly decreasing, we can use a continuation argument to extend the unique solution to the entire interval $[\theta^*, c(1)]$, where $a(x) < \theta^* < x$ and $a'(x) = F(x, a(x)) < 0$ for all $x \in (\theta^*, c(1)]$.

   Let $\underline{\theta} = a^{-1}(c(1))$ and define $g_2(x) := a^{-1}(x)$. Then $g_2 : [\underline{\theta}, \theta^*] \to [\theta^*, c(1)]$ is the unique strictly decreasing, differentiable, and bijective function that satisfies:
    \[
        \frac{\partial g_2^{-1}(x)}{\partial x} = G(x, g_2^{-1}(x)), \quad \text{for all } x \in [\theta^*, c(1)].
    \]
    
    It remains to show that $\underline{\theta} \in (0, \theta^*)$. Suppose, for the sake of contradiction, that $\underline{\theta} \leq 0$. Then define the following $\pi \in \Delta(\Theta \times X)$:
     \[
        d\pi(\theta,x)=
        \begin{cases}
            f(\theta)\delta_{g_2(\theta)}(x) \,d\theta & \text{if } \theta \in [0,\theta^*], \\
            f(\theta)\delta_{c(\theta)}(x) \,d\theta & \text{if } \theta \in \left[\theta^*,c^{-1}(g(0))\right], \\
            f(\theta)\delta_{\theta}(x) \,d\theta & \text{if } \theta \in \left(c^{-1}(g(0)),1\right].
        \end{cases}
    \]    
    In essence, $\pi$ matches each type $\theta \in [0, \theta^*]$ with the posterior mean $g_2(\theta)$, each type $\theta \in \left[\theta^*, c^{-1}(g(0))\right]$ with the posterior mean $c(\theta)$, and each type $\theta \in \left(c^{-1}(g_2(0)), 1\right]$ with a posterior mean equal to its type.

   Since $\underline{\theta} \leq 0$, we have $g_2(0) \leq c(1)$. Moreover, it is easy to verify that $\pi$ satisfies conditions~(\ref{BP}),~(\ref{martingale}), and~(\ref{price}) (see the proof of Lemma~\ref{lem:nam-feasible} for details). This implies that $\pi$ is a feasible joint distribution in which all types trade, as each is matched with a posterior mean $x \geq \theta^*$, contradicting Assumption~\ref{assn:no_full_trade}.

   An equivalent argument shows the existence and uniqueness of $g_1:[0,\bar{\theta}] \to [\theta^*,c(1)]$ with $\bar{\theta} \in (0,\theta^*)$. We provide a brief argument showing that $\bar{\theta} < \underline{\theta}$. Suppose instead that $\bar{\theta} \geq \underline{\theta}$. By the intermediate value theorem, there exists $\hat{\theta} \in [\underline{\theta},\bar{\theta}]$ such that $g_1(\hat{\theta})=g_2(\hat{\theta})$. We can then construct a distribution $\pi \in \Delta(\Theta \times X)$ that matches each type $\theta \in [0,\hat{\theta}]$ with posterior mean $g_1(\theta)$, each type $\theta \in (\hat{\theta},\theta^*]$ with posterior mean $g_2(\theta)$, and each type $\theta \in (\theta^*,1]$ with posterior mean $c(\theta)$. The resulting $\pi$ is feasible (see the proof of Lemma~\ref{lem:nam-feasible}), and as in the previous case, all types get to trade---contradicting Assumption~\ref{assn:no_full_trade}.
\end{proof}

\begin{proof}[Proof of Lemma~\ref{lem:nam-feasible}]
    We show that $\pi^{\text{NAM}}$ satisfies conditions~(\ref{BP}),~(\ref{martingale}), and~(\ref{price}). The proof for $\pi^{x^*-\text{NAM}}$ is omitted, since it follows essentially the same argument.
    
    For every measurable $A \subseteq \Theta$ it is immediate that:
    \[
        \int_A \int_X d\pi^{\text{NAM}}(\theta,x)=\int_A f(\theta) \, d\theta.
    \]
    
    Therefore, $\pi^{\text{NAM}}$ satisfies condition~(\ref{BP}). Moreover, since no type $\theta$ is matched with a posterior mean $x < \hat{c}(\theta)$, it follows that for every measurable set $B \subseteq X$:
    \[
        \int_0^1 \int_B \mathbf{1}_{\{\hat{c}(\theta)>x\}}d\pi^{\text{NAM}}(\theta,x)=0.
    \]

    Thus, $\pi^{\text{NAM}}$ also satisfies condition~(\ref{price}).

    See that for every $[x_1,x_2] \subseteq [0,\underline{\theta})$ we have that:
    \[
        \int_0^1\int_{x_1}^{x_2}(x-\theta)\,d\pi^{\text{NAM}}(\theta,x)=\int_{x_1}^{x_2}(\theta-\theta)f(\theta)\,d\theta=0.
    \]

    Additionally, for every $[x_1,x_2] \subseteq [\underline{\theta},c(1)]$ it follows that:
    \begin{align*}
        \int_0^1\int_{x_1}^{x_2}(x-\theta)\,d\pi^{\text{NAM}}(\theta,x)&=\int_{\underline{\theta}}^{\theta^*}\int_{x_1}^{x_2}(x-\theta)f(\theta)\delta_{g_2(\theta)}(x)\,d\theta+\int_{\theta^*}^{1}\int_{x_1}^{x_2}(x-\theta)f(\theta)\delta_{c(\theta)}(x)\,d\theta \\
        &=\int_{g_2^{-1}(x_1)}^{g_2^{-1}(x_2)}(g_2(\theta)-\theta)f(\theta)\,d\theta+\int_{c^{-1}(x_1)}^{c^{-1}(x_2)}(c(\theta)-\theta)f(\theta)\,d\theta \\
        &=\int_{x_1}^{x_2}(g_2^{-1}(x)-x)f(g_2^{-1}(\theta))\frac{\partial g_2^{-1}(x)}{\partial x}\,dx+\int_{x_1}^{x_2}(x-c^{-1}(x))f(c^{-1}(\theta))\frac{\partial c^{-1}(x)}{\partial x}\,dx \\
        &=\int_{x_1}^{x_2}\left[\frac{\partial g_2^{-1}(x)}{\partial x}-\frac{\frac{\partial c^{-1}(x)}{\partial x}f(c^{-1}(\theta))}{f(g_2^{-1}(\theta))}\frac{c^{-1}(x)-x}{g_2^{-1}(x)-x}\right](g_2^{-1}(x)-x)f(g_2^{-1}(x))\,dx \\
        &=\int_{x_1}^{x_2}\left[\frac{\partial g_2^{-1}(x)}{\partial x}-G(x,g_2^{-1}(x))\right](g_2^{-1}(x)-x)f(g_2^{-1}(x))\,dx \\
        &=0,
    \end{align*}
    where the third equality follows from the change of variables $x = g_2(\theta)$ and $x = c(\theta)$, and the last equality follows from the fact that $g_2$ satisfies the differential equation (Lemma~\ref{lem:bijective-function}):
    \[
    \frac{\partial g_2^{-1}(x)}{\partial x}=G(x,g_2^{-1}(x)).
    \]

    Finally, since no $x \in [\underline{\theta},\theta^*) \cup (c(1),1]$ lies in the support of $\pi^{\text{NAM}}$, we can conclude that $\pi^{\text{NAM}}$ also satisfies condition~(\ref{martingale}).
\end{proof}

\section{Proofs of Section~\ref{sec:volume_trade}}\label{sec:proof_volume}

\paragraph{Outline of the Proofs.} To characterize optimal solutions to the primal problem~(\ref{primal}) (Propositions~\ref{prop:alpha_increasing} and~\ref{prop:alpha_convex}) we proceed in two steps:

\begin{enumerate}[label=\textbf{Step \arabic*:}, leftmargin=*]

    \item Assume, for this step only, that strong duality holds: there exist a solution $\pi^*$ to the primal~\ref{primal} and a solution $(w^*,q^*,m^*)$ to the dual~\ref{dual}, with zero duality gap. Using complementary slackness (Lemma~\ref{cor:complementary-slackness}), we prove a sequence of lemmas that pin down the structure of $\pi^*$. Specifically, we show that either $\pi^*_{\theta}=\pi_{\theta}^{\text{NAM}}$ for $F$-almost every $\theta\in\Theta$ (Propositions~\ref{prop:alpha_increasing}), or there exists $x^*\in[\theta^*,c(1)]$ such that $\pi^*_{\theta}=\pi_{\theta}^{x^*-\text{NAM}}$ for $F$-almost every $\theta\in\Theta$ (Proposition~\ref{prop:alpha_convex}).

    \item We explicitly construct functions $(w,q,m)$ that are feasible for the dual~\ref{dual} (i.e., they satisfy~\ref{ZP}) and for which the dual objective equals the primal objective evaluated at $\pi^{\text{NAM}}$ (Propositions~\ref{prop:alpha_increasing}) or at $\pi^{x^*-\text{NAM}}$ (Proposition~\ref{prop:alpha_convex}). By Lemma~\ref{cor:strong-duality}, this certifies optimality of the corresponding primal allocation and establishes strong duality, thereby validating the assumption used in Step~1.
\end{enumerate}

\noindent This ordering is intentional: the structural insight from Step~1 guides the construction of the dual multipliers in Step~2.

\paragraph{Notation.}
Throughout Step~1 we use the following notation. For each type $\theta$, let $x^*(\theta)$ denote the set of posterior means with which $\theta$ is matched under the optimal joint distribution $\pi^*$; equivalently,
$x^*(\theta)=\operatorname{supp}\,(\pi_{\theta})$. We say that $\theta$ is \emph{revealed} if $\theta \in x^*(\theta)$, and \emph{fully revealed} if $x^*(\theta)=\{\theta\}$. Conversely, we say that $\theta$ is \emph{pooled} with some $\theta'\neq\theta$ if there exists some $x \notin \{\theta,\theta'\}$ such that $x \in x^*(\theta)\cap x^*(\theta')$.

\subsection{Proof of Proposition \ref{prop:alpha_increasing}}\label{sec:proof_alpha_increasing}

\paragraph{Step 1.} Recall that for this step, we assume that strong duality holds. Therefore, let $\pi^*$ be any solution to the primal problem~(\ref{primal}), and let $(w^*, q^*, m^*)$ be any solution to the dual problem~(\ref{dual}). 

Moreover, recall that our \textbf{complementary slackness} result (Lemma~\ref{cor:complementary-slackness}), implies that for $F$-almost every $\theta \in \Theta$:
\[
    w^*(\theta) = \sup_{x \in X} \left\{ \alpha(\theta)\,\mathbf{1}_{\{x \geq \theta^*\}} + q^*(x)(x - \theta) + m^*(x)\,\mathbf{1}_{\{\hat{c}(\theta) > x\}} \right\},
\]
and for $\pi^*$-almost every $(\theta, x) \in \Theta \times X$:
\[
    w^*(\theta) = \alpha(\theta)\,\mathbf{1}_{\{x \geq \theta^*\}} + q^*(x)(x - \theta) + m^*(x)\,\mathbf{1}_{\{\hat{c}(\theta) > x\}}.
\]

We will leverage the complementary slackness result above to show that $\pi^*_{\theta}=\pi^{\text{NAM}}_{\theta}$ for $F$-almost every $\theta \in \Theta$.

\begin{lemma}\label{lem:q_negative}
The function $q^*$ satisfies $q^*(x) < 0$ for all $x \in [\theta^*,1]$.
\end{lemma}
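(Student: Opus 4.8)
The plan is to derive the sign of $q^*$ from the complementary slackness conditions, using the structure of the objective $v_1(\theta,x)=\alpha(\theta)\mathbf{1}_{\{x\ge\theta^*\}}$. First I would fix some $x\ge\theta^*$ that is actually used by the optimal $\pi^*$, i.e.\ that lies in the support of the $X$-marginal of $\pi^*$ (if no such $x$ existed the designer's value would be zero, which is dominated by $\pi^{\text{NAM}}$, contradicting optimality). Conditional on such an $x$, the martingale constraint (\ref{martingale}) forces $\operatorname{supp}\pi_x$ to contain some type $\theta_1<x$ and, unless $x$ reveals a single type, some type $\theta_2>x$; in either case there is at least one type $\theta$ with $\theta\ne x$ matched to $x$. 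For that $(\theta,x)$ pair, complementary slackness gives $w^*(\theta)=\alpha(\theta)+q^*(x)(x-\theta)+m^*(x)\mathbf{1}_{\{\hat c(\theta)>x\}}$, while the dual feasibility inequality (\ref{ZP}) evaluated at the same $\theta$ but at a posterior mean $x'<\theta^*$ gives $w^*(\theta)\ge v_1(\theta,x')+q^*(x')(x'-\theta)+m^*(x')\mathbf{1}_{\{\cdots\}}$ — this comparison is the lever that ties $q^*(x)$ to something negative.

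The cleaner route, which I expect to be the one used, is to exploit Corollary~\ref{cor:no_means_below_theta_star}: any posterior mean $x\ge\theta^*$ that is used must pool at least one efficient type $\theta>\theta^*$ (since a pool of only inefficient types cannot have mean $\ge\theta^*$), and that efficient type has $\hat c(\theta)=c(\theta)<\theta$ (wait — need $c(\theta)\le x$); at any rate $\hat c(\theta)\le x$ so the indicator $\mathbf{1}_{\{\hat c(\theta)>x\}}$ vanishes on the support, killing the $m^*$ term. So on $\operatorname{supp}\pi^*$ we have, for the efficient type $\theta_H>\theta^*$ pooled at $x$, $w^*(\theta_H)=\alpha(\theta_H)+q^*(x)(x-\theta_H)$, and for the inefficient type $\theta_L<\theta^*$ pooled at $x$, $w^*(\theta_L)=\alpha(\theta_L)+q^*(x)(x-\theta_L)$. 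Meanwhile the dual constraint at $(\theta_H, x')$ for any $x'<x$ with $x'\ge\theta^*$, or at $(\theta_H,\theta_H)$ (revealing $\theta_H$, which yields payoff $\alpha(\theta_H)$ since $\theta_H$ is efficient), gives $w^*(\theta_H)\ge \alpha(\theta_H)+q^*(\theta_H)(\theta_H-\theta_H)=\alpha(\theta_H)$. Subtracting, $q^*(x)(x-\theta_H)\ge 0$; since $x<\theta_H$ (the efficient type exceeds the pool mean, as the pool also contains an inefficient type below $\theta^*\le x$), we get $x-\theta_H<0$, hence $q^*(x)\le 0$. Strictness then follows because if $q^*(x)=0$ the constraint (\ref{ZP}) at $(\theta_L,x)$ would read $w^*(\theta_L)\ge\alpha(\theta_L)>0$ while revealing $\theta_L$ gives payoff $0$, i.e.\ $w^*(\theta_L)\ge 0$ from $(\theta_L,\theta_L)$ — these are compatible, so I'd instead argue strictness via the martingale/feasibility slack, or by noting that Assumption~\ref{assn:no_full_trade} forces some inefficient type to be strictly worse off matched than revealed, which pins $q^*(x)<0$.

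Finally, for $x$ in $[\theta^*,1]$ that are \emph{not} used by any optimal $\pi^*$, the inequality $q^*(x)<0$ is not automatic from complementary slackness alone; here I would invoke that $(w^*,q^*,m^*)$ is a \emph{minimizer} of the dual, and that if $q^*(x)\ge 0$ on a set of such $x$ one could perturb $q^*$ downward there without violating (\ref{ZP}) (since lowering $q^*(x)$ only relaxes the constraint for types $\theta<x$ and, combined with a compensating move, keeps the others feasible) while strictly lowering the dual objective — contradicting optimality. The main obstacle I anticipate is exactly this last point: extending the sign statement from the ``active'' posterior means to \emph{all} $x\in[\theta^*,1]$, which requires either a careful perturbation argument on the dual side or an explicit formula for $q^*$ that will only be available after the dual multipliers are constructed in Step~2; the cleanest exposition may well defer the full-range claim to the construction and here establish $q^*(x)<0$ only on the support, which is all that the subsequent structural lemmas actually use.
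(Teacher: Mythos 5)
Your main line of argument---reading off $q^*(x)$ from an efficient type $\theta_H>\theta^*$ pooled at a used mean $x$---has two gaps that you yourself flag: it only yields the weak inequality $q^*(x)\le 0$, and it only applies to $x$ in the support of the $X$-marginal of $\pi^*$. You then worry (correctly, given that route) about how to get strictness and how to extend to \emph{all} $x\in[\theta^*,1]$, and propose a dual perturbation argument. Neither of these patches is needed, and the perturbation approach is more delicate than the situation warrants.

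The paper's proof runs entirely through the idea you mention only in passing at the end. Assumption~\ref{assn:no_full_trade} guarantees that a positive-$F$-measure set of \emph{inefficient} types $\theta<\theta^*$ is fully revealed under $\pi^*$. For such a revealed $\theta$ we have $w^*(\theta)=0$: revealing $\theta$ gives $v_1(\theta,\theta)=\alpha(\theta)\mathbf{1}_{\{\theta\ge\theta^*\}}=0$, the martingale term vanishes, and $\hat c(\theta)=\theta$ kills the $m^*$ term. Now apply the \emph{supremum} form of complementary slackness (the first display in Lemma~\ref{cor:complementary-slackness}): for any $x\ge\theta^*$, since $\hat c(\theta)=\theta<\theta^*\le x$,
\[
0 = w^*(\theta) \;\ge\; \alpha(\theta)+q^*(x)(x-\theta),
\]
so $q^*(x)\le -\alpha(\theta)/(x-\theta)<0$ because $\alpha>0$ and $x>\theta$. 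Because the supremum in complementary slackness ranges over \emph{all} $x\in X$ (not just the support of $\pi^*$), this single inequality covers every $x\in[\theta^*,1]$ and is already strict---so both of your anticipated obstacles dissolve simultaneously. Your proposed route through $\theta_H$, by contrast, is genuinely not sufficient as written: it cannot be upgraded to strictness without essentially importing the revealed-type argument, and restricting to ``active'' means would weaken the statement in a way that the later lemmas (e.g.\ Lemmas~\ref{lem:alpha_pool_high} and \ref{lem:alpha_singleton_decreasing}) actually rely on avoiding.
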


\begin{proof}
    By Assumption \ref{assn:no_full_trade} a subset of types in $[0, \theta^*)$ is revealed, and this subset has positive measure under $F$. Then, by complementary slackness, for any $x \geq \theta^*$ and for almost every $\theta$ in that subset, we must have:
    \[
        0 \geq \alpha(\theta) + q^*(x)(x - \theta) \quad \Rightarrow \quad q^*(x) \leq -\frac{\alpha(\theta)}{x - \theta}<0,
    \]
    where the strict inequality follows from the fact that $\alpha$ is strictly positive.
\end{proof}

\begin{lemma}\label{lem:alpha_pool_high}
    Under the optimal joint distribution $\pi^*$, $F$-almost every $\theta \in (\theta^*, 1]$ is pooled (with probability one) with some type $\theta' \in [0, \theta^*)$ to achieve a posterior mean equal to its cost, $x^*(\theta) = c(\theta)$.
\end{lemma}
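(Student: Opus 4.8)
The plan is to use complementary slackness together with the earlier structural observations (Corollary~\ref{cor:no_means_below_theta_star} and the feasibility constraints) to argue that every efficient type must trade, and that whenever it trades its posterior mean equals its cost. First I would establish that $F$-almost every $\theta \in (\theta^*,1]$ is matched only with posterior means $x \geq \theta^*$. By Corollary~\ref{cor:no_means_below_theta_star}, any mean $x < \theta^*$ can only be matched with the single type $\theta = x$; since an efficient type $\theta > \theta^*$ cannot equal such an $x$, it is never assigned a mean below $\theta^*$. Hence under $\pi^*$ every efficient type trades with probability one, so the indicator $\mathbf{1}_{\{x \geq \theta^*\}}$ equals $1$ on the support of $\pi^*_\theta$, and the prices-as-means constraint~(\ref{price}) forces $\hat c(\theta) \le x$, i.e. $c(\theta) \le x$ for efficient $\theta$, so the term $m^*(x)\mathbf{1}_{\{\hat c(\theta) > x\}}$ vanishes there.

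The core of the argument is then a complementary-slackness comparison. Fix an efficient type $\theta$ and a mean $x \in x^*(\theta)$ in its support; by Lemma~\ref{cor:complementary-slackness},
\[
w^*(\theta) = \alpha(\theta) + q^*(x)(x-\theta).
\]
Meanwhile, for any other candidate mean $x' \ge \theta^*$ with $x' \ge c(\theta)$, the dual feasibility constraint~(\ref{ZP}) gives $w^*(\theta) \ge \alpha(\theta) + q^*(x')(x'-\theta)$. Subtracting, $q^*(x)(x-\theta) \ge q^*(x')(x'-\theta)$ for all admissible $x'$. Since $q^* < 0$ on $[\theta^*,1]$ by Lemma~\ref{lem:q_negative}, the map $x' \mapsto q^*(x')(x'-\theta)$ is being maximized at $x = x^*(\theta)$ subject to $x' \ge c(\theta)$; because $x'-\theta > 0$ and $q^*(x') < 0$, making $x'$ smaller (closer to $c(\theta)$) raises this expression whenever $q^*$ does not fall too fast — and in fact one shows the maximum over the feasible region is attained at the left endpoint $x' = c(\theta)$. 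I would make this precise by noting that any $x \in x^*(\theta)$ with $x > c(\theta)$ would be dominated: revealing the efficient type's own mass is never needed, and by the martingale/feasibility bookkeeping one can instead route $\theta$ to a pool with mean exactly $c(\theta)$ and free up ``resources'' (in the sense of Equation~\ref{eq:martingale_resources}), strictly slack relative to (\ref{price}); complementary slackness then rules out $x > c(\theta)$ on the support. Since $x \ge c(\theta)$ is forced and $x > c(\theta)$ is excluded, $x^*(\theta) = \{c(\theta)\}$.

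Finally, to see that the pooling partner $\theta'$ lies in $[0,\theta^*)$: the martingale condition~(\ref{martingale}) requires the posterior at $x = c(\theta) < \theta$ (the inequality because $\theta$ is efficient) to have mean $c(\theta)$, which is strictly below $\theta$, so the support of $\pi^*_{c(\theta)}$ cannot consist of $\theta$ alone — it must contain some type strictly below $c(\theta) < \theta^*$, hence an inefficient type. The main obstacle I anticipate is the step excluding $x > c(\theta)$: it requires carefully combining dual optimality with a feasibility/resource-reallocation argument rather than a one-line inequality, since $q^*$ is only known to be negative and not, say, monotone. I would handle it by the swap argument already sketched around Equation~\ref{eq:martingale_resources}, showing any mass an efficient type places on a mean above its cost can be reallocated to increase the objective while preserving (\ref{BP}), (\ref{martingale}), and (\ref{price}), contradicting optimality of $\pi^*$.
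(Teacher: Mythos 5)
Your high-level scaffolding is reasonable: use Corollary~\ref{cor:no_means_below_theta_star} to rule out means below $\theta^*$ for efficient types, then use complementary slackness to pin down $x^*(\theta)=c(\theta)$, then argue the pooling partner is inefficient. However, the two central steps as written contain genuine errors, and the approach diverges from the paper's in a way that exposes those gaps.

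First, the sign analysis in your complementary-slackness comparison is wrong. For an efficient type $\theta>\theta^*$ we have $c(\theta)<\theta$, so when you compare $x\in x^*(\theta)$ against an alternative $x'\in[c(\theta),x)$ (which is the relevant deviation to test), you have $x'-\theta<0$, not $>0$ as you assert. More fundamentally, the single inequality $q^*(x)(x-\theta)\ge q^*(x')(x'-\theta)$ from $\theta$'s problem alone does \emph{not} determine where the max is attained, because $q^*$ is only known to be negative, not monotone; you acknowledge this and retreat to a swap argument, but the swap argument is not carried out and is non-trivial to make rigorous in the continuous-type optimal transport setting. The paper instead closes this step with a clean dual argument you are missing: it writes down the complementary-slackness inequalities for \emph{both} types in the pool --- the efficient $\theta$ and its inefficient partner $\theta'$ --- comparing $x$ to any $x'\in[c(\theta),x)$. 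Because $0>x-\theta>x'-\theta$ while $x-\theta'>x'-\theta'>0$ (the partner is on the opposite side of $x$), the first inequality forces $q^*(x)<q^*(x')$ and the second forces $q^*(x)>q^*(x')$, a direct contradiction. That two-sided leverage is the key idea and it is purely dual; no primal perturbation is needed.

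Second, your final paragraph has a sign error that breaks the argument. You claim ``$c(\theta)<\theta^*$'' for efficient $\theta$, but since $c$ is strictly increasing and $c(\theta^*)=\theta^*$, we have $c(\theta)>\theta^*$ for $\theta>\theta^*$. So the fact that the pool at mean $c(\theta)$ must contain types strictly below $c(\theta)$ does \emph{not} imply it contains a type below $\theta^*$: types in $[\theta^*,c(\theta))$ are efficient and also lie below $c(\theta)$. The paper handles this separately and earlier: before pinning down $x=c(\theta)$, it shows that pooling two efficient types is impossible (by the martingale condition one of them must lie strictly below the mean, and complementary slackness would then force $q^*(x)\ge 0$, contradicting Lemma~\ref{lem:q_negative}). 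Your proposal omits this step, and your substitute for it is logically incorrect. Both gaps --- the missing two-inequality dual contradiction and the missing exclusion of efficient--efficient pooling --- need to be filled before the proof stands.
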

\begin{proof}
    Suppose, toward a contradiction, that a subset of types in $(\theta^*, 1]$ is revealed, and that this subset has positive $F$-measure. Then, by complementary slackness, for almost every $\theta$ in that subset, the value of revealing $\theta$ must be at least as high as the value of being matched to a posterior mean $x \geq c(\theta)$:
    \[
        \alpha(\theta) \geq \alpha(\theta) + q^*(x)(x - \theta).
    \]
    In particular, for $x \in [c(\theta), \theta)$, the inequality implies $q^*(x) \geq 0$, contradicting Lemma~\ref{lem:q_negative}.

   Next, suppose a type $\theta \in (\theta^*, 1]$ is pooled with another type $\theta' \in [\theta^*, 1]$. By the martingale condition~(\ref{martingale}), the resulting posterior mean $x$ must, for $F$-almost every such pooling, be strictly greater than at least one of the two types. Without loss of generality, assume $x > \theta$. For type $\theta$ to prefer this pooling over being revealed, it must be that:
    \[
        \alpha(\theta) + q^*(x)(x - \theta) \geq \alpha(\theta),
    \]
    which again implies $q^*(x) \geq 0$, contradicting Lemma~\ref{lem:q_negative}. Therefore, $F$-almost every $\theta \in (\theta^*, 1]$ must be pooled with some type $\theta' < \theta^*$. 

    Finally, suppose a type $\theta \in (\theta^*,1]$ is pooled with a type $\theta' \in [0,\theta^*)$ to form a posterior mean $x \neq c(\theta)$. By the prices-as-means constraint~(\ref{price}) and the martingale condition~(\ref{martingale}), for $F$-almost every such pooling, it must be that $x \in (c(\theta),\theta)$. Now, consider any $x' \in [c(\theta), x)$. By complementary slackness, we must have:
    \begin{align*}
        \alpha(\theta) + q^*(x)(x - \theta) &\geq \alpha(\theta) + q^*(x')(x' - \theta), \\
        \alpha(\theta') + q^*(x)(x - \theta') &\geq \alpha(\theta') + q^*(x')(x' - \theta').
    \end{align*}
    Since $0 > x-\theta > x'-\theta$ and $q(x)$, $q(x') < 0$, the first inequality implies $q(x) < q(x')$. On the other hand, because $x-\theta' > x'-\theta' > 0$ and $q(x)$, $q(x') < 0$, the second equality implies $q(x) > q(x')$, a contradiction.  
    Therefore, for $F$-almost every $\theta \in (\theta^*, 1]$, the posterior mean must satisfy $x = c(\theta)$.
\end{proof}

\begin{lemma}\label{lem:alpha_pooling_interval}
    There exists a type $\uwave{\theta} \in [0, \theta^*)$ such that under the optimal joint distribution $\pi^*$, $F$-almost every $\theta \in [0, \uwave{\theta})$ is fully revealed, and $F$-almost every $\theta \in [\uwave{\theta}, \theta^*)$ is pooled (with probability one).
\end{lemma}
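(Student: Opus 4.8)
The plan is to use the single-crossing hypothesis on $\theta\mapsto\alpha(\theta)/(x-\theta)$ together with complementary slackness to show that the set of inefficient types that are pooled forms an ``upper'' interval $[\uwave{\theta},\theta^*)$. First I would record the reduction implied by Corollary~\ref{cor:no_means_below_theta_star} and the prices-as-means constraint~(\ref{price}): for $F$-almost every inefficient $\theta\in[0,\theta^*)$ the conditional $\pi^*_\theta$ is supported on $\{\theta\}\cup[\theta^*,1]$. I can therefore split these types into a set $R$ of types \emph{revealed with positive probability} ($\pi^*_\theta(\{\theta\})>0$) and a set $P$ of types \emph{pooled with positive probability} ($\pi^*_\theta([\theta^*,1])>0$); on the full-measure set where complementary slackness holds one has $\pi^*_\theta(\{\theta\})+\pi^*_\theta([\theta^*,1])=1$, so $R\cup P$ is, up to an $F$-null set, all of $[0,\theta^*)$. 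I would also observe that if $\theta\in P$ is pooled at a mean $x'\ge\theta^*$, then the pool at $x'$ has mean $x'$ and contains $\theta<\theta^*$, hence contains efficient types above $x'$; by Lemma~\ref{lem:alpha_pool_high} each such efficient type is matched only with its own cost, which forces $x'=c(\theta'')$ for some efficient $\theta''$ and hence $x'\in[\theta^*,c(1)]$, precisely the range in which the hypothesis applies.

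The heart of the argument is a comparison between members of $R$ and $P$. Fix $\theta_R\in R$ and $\theta_P\in P$, and let $x'\ge\theta^*$ be a mean at which $\theta_P$ is pooled. Since $\theta_R<\theta^*$, revealing $\theta_R$ has value $0$ (both the $\alpha$-indicator and the $m^*$-term vanish because $\hat c(\theta_R)=\theta_R$), so complementary slackness gives $w^*(\theta_R)=0$; evaluating the dual inequality~(\ref{ZP}) at $(\theta_R,x')$ and using $q^*(x')<0$ (Lemma~\ref{lem:q_negative}) and $x'-\theta_R>0$ yields $\alpha(\theta_R)/(x'-\theta_R)\le -q^*(x')$. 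On the other hand, complementary slackness at the point where $\theta_P$ is pooled gives $w^*(\theta_P)=\alpha(\theta_P)+q^*(x')(x'-\theta_P)$ (again the $m^*$-term vanishes since $\hat c(\theta_P)=\theta_P<\theta^*\le x'$), and since $w^*(\theta_P)\ge 0$ — the value of revealing $\theta_P$ — we obtain $\alpha(\theta_P)/(x'-\theta_P)\ge -q^*(x')$. Chaining these gives $\alpha(\theta_R)/(x'-\theta_R)\le\alpha(\theta_P)/(x'-\theta_P)$, and since $\theta\mapsto\alpha(\theta)/(x'-\theta)$ is strictly increasing on $[0,\theta^*]$ this forces $\theta_R\le\theta_P$. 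Hence $\sup R\le\inf P$.

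Finally I would set $\uwave{\theta}:=\inf P$. This lies in $[0,\theta^*)$ because $P$ has positive $F$-measure: by Lemma~\ref{lem:alpha_pool_high} all efficient types are pooled with inefficient ones, and the efficient block $(\theta^*,1]$ has positive mass, so a positive-measure set of inefficient types must be pooled. For $F$-almost every $\theta<\uwave{\theta}$ we have $\theta\notin P$, hence $\pi^*_\theta([\theta^*,1])=0$, so $\theta$ is fully revealed; for $F$-almost every $\theta\in(\uwave{\theta},\theta^*)$ we have $\theta>\sup R$, hence $\theta\notin R$, so $\pi^*_\theta(\{\theta\})=0$ and $\theta$ is pooled with probability one. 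The single point $\uwave{\theta}$ is $F$-null and can be disregarded.

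I expect the main obstacle to be the measure-theoretic bookkeeping rather than the economics: one must pass carefully between the $\pi^*$-almost-everywhere complementary slackness identities and the $F$-almost-everywhere sup characterization, give a clean meaning to ``revealed/pooled with positive probability'' in terms of the disintegration, and justify that a pooled inefficient type's mean $x'$ lies in $[\theta^*,c(1)]$ so that the single-crossing hypothesis is available. The single-crossing comparison in the second paragraph is short and is the only place where the hypothesis of Proposition~\ref{prop:alpha_increasing} enters.
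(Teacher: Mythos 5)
Your proof is correct and rests on the same core mechanism as the paper's: the chain $\alpha(\theta_R)/(x'-\theta_R) \le -q^*(x') \le \alpha(\theta_P)/(x'-\theta_P)$ combined with strict monotonicity of the ratio. The paper organizes this as a proof by contradiction (supposing the pooled set $A$ is not an upper interval, extracting $A_1$ pooled below $A_2$ revealed, and showing $A_2$ would strictly prefer pooling), whereas you construct the cutoff directly as $\inf P$ (equivalently $\sup R$) after establishing $\sup R \le \inf P$; these are the same argument in two different packagings. Your observation that any pooling mean $x'$ lies in $[\theta^*,c(1)]$ (so that the single-crossing hypothesis is available) is implicit but not spelled out in the paper's version, and is a worthwhile clarification. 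The measure-theoretic caveat you flag is real but routine: one should take $\uwave{\theta}$ to be $\inf(P\cap G)$ for the full-$F$-measure set $G$ on which the complementary-slackness identities hold, and note that $\sup(R\cap G)=\inf(P\cap G)$ since $R\cup P$ covers $G$ up to a null set and $F$ has full support.
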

\begin{proof}
    Let $A$ denote the set of types in $[0, \theta^*)$ that are pooled. By Lemma~\ref{lem:alpha_pool_high}, $F(A)>0$. Suppose $A$ cannot be written as an interval of the form $[\uwave{\theta}, \theta^*)$ for some $\uwave{\theta} >0$. Then, we can find disjoint subsets $A_1, A_2 \subset [0, \theta^*)$, each with positive measure, such that: all types in $A_1$ are pooled, all types in $A_2$ are fully revealed, and for every $\theta_1 \in A_1$ and $\theta_2 \in A_2$, we have $\theta_2 > \theta_1$.

    Let $x \geq \theta^*$ denote the posterior mean assigned to some $\theta_1 \in A_1$. By complementary slackness, we have for $F$-almost every $\theta_1 \in A_1$:
    \[
       \alpha(\theta_1) + q^*(x)(x - \theta_1) \geq 0 \quad \Rightarrow \quad \frac{\alpha(\theta_1)}{x-\theta_1} \geq -q^*(x).
    \]
    
    Since, by assumption, $\theta \mapsto \frac{\alpha(\theta)}{x-\theta}$ is strictly increasing on the interval $[0,\theta^*]$. It follows that for almost every $\theta_2 \in A_2$:
    \[
        \frac{\alpha(\theta_2)}{x-\theta_2} > -q^*(x) \quad \Rightarrow \quad \alpha(\theta_2) + q^*(x)(x - \theta_2) >  0,
    \]
    which contradicts complementary slackness, since $\theta_2$ is revealed and the value of being matched with $x \geq \theta^*$ exceeds the value of being revealed. Therefore, $A = [\uwave{\theta}, \theta^*)$ and $F$-almost every type $\theta \in A$ is pooled with probability one.
\end{proof}

\begin{lemma}\label{lem:alpha_singleton_decreasing}
     Under the optimal joint distribution $\pi^*$, $x^*(\theta)$ is a singleton and strictly decreasing for $F$-almost every $\theta \in [\uwave{\theta}, \theta^*)$.
\end{lemma}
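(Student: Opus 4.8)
\textbf{Overall approach.} The plan is to exploit the complementary slackness characterization together with Lemma~\ref{lem:alpha_pool_high} and Lemma~\ref{lem:alpha_pooling_interval}, and the fact (established in Lemma~\ref{lem:q_negative}) that $q^*<0$ on $[\theta^*,1]$. By Lemma~\ref{lem:alpha_pool_high}, every efficient type $\theta'\in(\theta^*,1]$ is matched (with probability one) to the single posterior mean $c(\theta')$; and by Corollary~\ref{cor:no_means_below_theta_star}, any mean $x\ge\theta^*$ that occurs must be of the form $c(\theta')$ for exactly one efficient $\theta'$. Hence for $\theta\in[\uwave{\theta},\theta^*)$, every element of $x^*(\theta)$ lies in the set $\{c(\theta'):\theta'\in(\theta^*,1]\}=[\theta^*,c(1)]$. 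The two claims to prove are: (a) $x^*(\theta)$ is a singleton for $F$-a.e. such $\theta$; (b) the induced selection $\theta\mapsto x^*(\theta)$ is strictly decreasing.

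\textbf{Singleton via strict concavity/convexity of the correspondence.} For claim (a), I would argue as follows. Fix the function $\theta\mapsto Q(\theta):=\sup_{x\in[\theta^*,c(1)]}\{\alpha(\theta)+q^*(x)(x-\theta)\}$, which by complementary slackness equals $w^*(\theta)-$ (the reveal option is dominated on $[\uwave\theta,\theta^*)$, since those types are pooled). The maximand $x\mapsto \alpha(\theta)+q^*(x)(x-\theta)$ is, for fixed $\theta$, linear in the ``slope variable'' once one thinks of $q^*$ as a subgradient; the key observation is that if a type $\theta$ had two distinct optimal means $x_1<x_2$ in its support, then by complementary slackness $\alpha(\theta)+q^*(x_1)(x_1-\theta)=\alpha(\theta)+q^*(x_2)(x_2-\theta)$, i.e. $q^*(x_1)(x_1-\theta)=q^*(x_2)(x_2-\theta)$. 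Since both $x_i>\theta$ (as $x_i\ge\theta^*>\theta$) and both $q^*(x_i)<0$, this pins down a linear relation between $\theta$ and the pair $(x_1,x_2)$; I would show that this relation can hold only for an $F$-null set of $\theta$, using that $q^*$ is (as the derivative of a convex value function) monotone, hence the map $x\mapsto q^*(x)(x-\theta)$ restricted to where it attains its sup is essentially single-valued off a null set. Concretely: define $\phi(x):=-q^*(x)$, which is positive on $[\theta^*,c(1)]$; optimality of $x$ for $\theta$ means $x\in\arg\max_x\{\alpha(\theta)-\phi(x)(x-\theta)\}$; a standard envelope/monotone-comparative-statics argument (the cross-partial in $(\theta,x)$ of $-\phi(x)(x-\theta)$ is $\phi'(x)$, of one sign if $\phi$ is monotone) gives that the argmax is monotone in $\theta$ and single-valued except at countably many $\theta$, hence for $F$-a.e. $\theta$.

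\textbf{Strict monotonicity of the selection.} For claim (b): once $x^*(\cdot)$ is a.e. single-valued, write $x^*(\theta)$ for that value. I would show it is strictly decreasing by a revealed-preference (exchange) argument. Take $\uwave\theta\le\theta_1<\theta_2<\theta^*$ both pooled, with means $x_1=x^*(\theta_1)$, $x_2=x^*(\theta_2)$. Complementary slackness gives the four inequalities
\[
\alpha(\theta_i)+q^*(x_i)(x_i-\theta_i)\ \ge\ \alpha(\theta_i)+q^*(x_j)(x_j-\theta_i),\qquad i\neq j,
\]
i.e. $q^*(x_i)(x_i-\theta_i)\ge q^*(x_j)(x_j-\theta_i)$. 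Adding the two and rearranging yields $\big(q^*(x_1)-q^*(x_2)\big)\big((x_1-\theta_1)-(x_1-\theta_2)\big)\ge 0$ after the algebra — more carefully, subtracting gives a single-crossing inequality of the form $\big(q^*(x_2)-q^*(x_1)\big)(\theta_2-\theta_1)\le 0$ combined with monotonicity of $q^*$, forcing $x_1\ge x_2$. To upgrade to \emph{strict} inequality, I would use that if $x_1=x_2=:x$ on a positive-$F$-measure set $A$ of types in $[\uwave\theta,\theta^*)$, then by the martingale constraint~(\ref{martingale}) all of $A$ together with the unique efficient type $c^{-1}(x)$ must average to $x$; but $A$ has positive mass while $c^{-1}(x)$ is a single point, so the posterior mean at $x$ would be a nontrivial average that cannot equal $x$ unless $A$ is $F$-null (here one uses $f>0$, so positive-length intervals carry positive mass) — contradiction. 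Hence $x^*(\cdot)$ is strictly decreasing a.e.

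\textbf{Main obstacle.} The hard part will be the rigorous single-valuedness argument in claim (a): $q^*$ is only known to be measurable (from the dual), not a priori monotone or differentiable, so I cannot directly invoke smooth comparative statics. The clean route is to first establish monotonicity of the optimal selection \emph{as a correspondence} purely from the complementary-slackness inequalities (which is an order argument requiring no regularity of $q^*$), deduce that a monotone correspondence on an interval is single-valued outside a countable set, and only then combine with the martingale constraint to rule out the exceptional atoms; I expect most of the effort to go into making this ``monotone correspondence $\Rightarrow$ a.e. single-valued'' step airtight while carrying the $\pi^*$-a.e. versus $F$-a.e. bookkeeping.
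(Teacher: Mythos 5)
There is a genuine gap, and you have actually half-identified it in your ``main obstacle'' paragraph. Your exchange (cyclic-monotonicity) argument gives, for $\theta_1<\theta_2$ with $x_i\in x^*(\theta_i)$, only that $q^*(x_1)\geq q^*(x_2)$, i.e.\ that $q^*\circ x^*$ is decreasing in $\theta$. It does \emph{not} give $x_1\geq x_2$ unless you already know $q^*$ is increasing. So the plan to ``establish monotonicity of the optimal selection as a correspondence purely from the complementary-slackness inequalities (an order argument requiring no regularity of $q^*$)'' does not close: the order argument only transports monotonicity through $q^*$, and $q^*$-monotonicity is exactly what is at issue. You cannot conclude $x^*$ is monotone, hence you cannot invoke ``monotone correspondence on an interval is a.e.\ single-valued.''

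The paper's proof fills this gap by proving, as a \emph{first} step, that $q^*$ is strictly increasing on $(\theta^*,c(1)]$. The argument is not an exchange between two types but a comparison for a \emph{single} type across two means: if $\theta\in[\uwave\theta,\theta^*)$ is matched with $x'>\theta^*$, complementary slackness forces
\[
\alpha(\theta)+q^*(x')(x'-\theta)\ \ge\ \alpha(\theta)+q^*(x)(x-\theta)\qquad\text{for all }x\in[\theta^*,x'),
\]
and since $x'-\theta>x-\theta>0$ with $q^*(x'),q^*(x)<0$ (Lemma~\ref{lem:q_negative}), this can only hold if $q^*(x')>q^*(x)$. Because Lemma~\ref{lem:alpha_pool_high} guarantees every $x'\in(\theta^*,c(1)]$ is hit, $q^*$ is strictly increasing there. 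Only then is the objective $\alpha(\theta)+q^*(x)(x-\theta)$ strictly submodular in $(\theta,x)$, so Topkis delivers that $x^*$ is decreasing in the strong set order; the remainder (ruling out non-degenerate intervals and flat spots via the martingale constraint, as in your claim (b)) matches the paper's Cases 1 and 2. Inserting the $q^*$-monotonicity step is the one missing idea; the rest of your plan then goes through essentially as in the paper.
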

\begin{proof}
    We begin by showing that $q^*$ must be strictly increasing for every $x \in (\theta^*,c(1)]$. Consider a type $\theta\in [\uwave{\theta},\theta^*)$ matched with some $x'>\theta^*$. In order for the value of matching $\theta$ with $x'$ to be greater than the value of matching $\theta$ with any $x \in [\theta^*,x')$ we must have:
    \[
        \alpha(\theta)+q^*(x')(x'-\theta)\geq \alpha(\theta)+q^*(x)(x-\theta).
    \]
    Since $x'-\theta>x-\theta>0$ and $q^*(x')$, $q^*(x)<0$, to satisfy the above inequality we need $q^*(x')>q^*(x)$. Therefore, we must have $q^*$ strictly increasing for every $x \in (\theta^*,c(1)]$. 

    Moreover, by complementary slackness, for $F$-almost every $\theta \in [\uwave{\theta},\theta^*)$ we have that:
    \[
        x^*(\theta) \subseteq \arg\max_{x \in [\theta^*,1]} \; \{\alpha(\theta) + q^*(x)(x - \theta)\}.
    \]
   
   The fact that $q^*$ is strictly increasing implies that $\alpha(\theta) + q^*(x)(x - \theta)$ is strictly sub-modular in $(\theta,x)$. By Topkis' monotone comparative statics theorem we can then conclude that $x^*$ is decreasing in the strong set order for almost every $\theta \in [\uwave{\theta},\theta^*)$.

   To finalize the proof, we must rule out two possible cases:

   \medskip

    \noindent\textbf{Case 1: $x^*(\theta)$ is not a singleton for a positive $F$-measure subset of $[\uwave{\theta}, \theta^*)$.}  
    If this is the case, then by Lemma~\ref{lem:alpha_pool_high} and the fact that $x^*$ is strictly decreasing in the strong set order, it follows that for almost every such $\theta$, we have $x^*(\theta) = [x_1, x_2] \subset (\theta^*, c(1)]$. Moreover, $F$-almost every $x \in [x_1, x_2]$ is matched, in addition to type $\theta$, with type $c^{-1}(x) \in (\theta^*, 1]$. Then we have:
    \[
        \int_0^1\int_{x_1}^{x_2}(x - \theta)\,d\pi^*(\theta,x) = \int_{c^{-1}(x_1)}^{c^{-1}(x_2)}(c(\theta) - \theta)\,df(\theta) < 0,
    \]
    which violates the martingale condition~(\ref{martingale}).

    \medskip

    \noindent \textbf{Case 2: $x^*(\theta) \cap x^*(\theta') \neq \emptyset$ for a positive $F$-measure subset of $[\uwave{\theta}, \theta^*)$.}  
    If this is the case, since $x^*$ is a singleton and the monotonicity of $x^*$ in the strong set order, there exists an interval $[\theta_1, \theta_2] \subset [\uwave{\theta}, \theta^*)$ such that $x^*(\theta) = x^*(\theta') = x > \theta^*$ for $F$-almost every $\theta, \theta' \in [\theta_1, \theta_2]$. Moreover, by Lemma~\ref{lem:alpha_pool_high}, the posterior mean $x$ is matched, in addition to types in $[\theta_1, \theta_2]$, with type $c^{-1}(x) \in (\theta^*, 1]$. Thus, the posterior mean $x$ has positive measure equal to $F(\theta_2) - F(\theta_1)$, and:
    \[
        \int_{\theta_1}^{\theta_2}(x - \theta)\,d\pi^*(\theta,x) = \int_{\theta_1}^{\theta_2}(x - \theta)\,df(\theta) > 0,
    \]
    which again violates the martingale condition~(\ref{martingale}).
\end{proof}

\begin{lemma}\label{lem:alpha_negative_assortative}
    Let $g_2:[\underline{\theta}, \theta^*] \to [\theta^*, c(1)]$ be the unique strictly decreasing and bijection defined in Lemma~\ref{lem:bijective-function}. Then, under the optimal joint distribution $\pi^*$, we have $\uwave{\theta} = \underline{\theta}$ and $x^*(\theta) := g_2(\theta)$ for $F$-almost every $\theta \in [\underline{\theta}, \theta^*)$.
\end{lemma}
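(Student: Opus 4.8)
The plan is to combine the structural facts already established — that $F$-almost every $\theta\in(\theta^*,1]$ is pooled with exactly one inefficient type to a mean $c(\theta)$ (Lemma~\ref{lem:alpha_pool_high}), that the inefficient types that trade form an interval $[\uwave{\theta},\theta^*)$ (Lemma~\ref{lem:alpha_pooling_interval}), and that on this interval the matching map $x^*$ is single-valued and strictly decreasing (Lemma~\ref{lem:alpha_singleton_decreasing}) — with the martingale constraint~(\ref{martingale}) to pin down the map exactly. First I would argue that, since $x^*$ is a strictly decreasing bijection from $[\uwave{\theta},\theta^*)$ onto its image, and since by Corollary~\ref{cor:no_means_below_theta_star} every mean generated by a trading pool lies in $[\theta^*,c(1)]$, the image of $x^*$ must be an interval $[\theta^*,\bar x]$ for some $\bar x\le c(1)$; the upper endpoint $\theta^*$ is hit as $\theta\uparrow\theta^*$ because that is the only mean consistent with $\supp\pi_{\theta^*}$ containing $\theta^*$ (the pooled efficient partner $c^{-1}(x)\to\theta^*$), and $\uwave{\theta}$ maps to $\bar x$.

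Next I would set up the martingale identity on an arbitrary subinterval of means. Write $h:=(x^*)^{-1}:[\theta^*,\bar x]\to[\uwave{\theta},\theta^*]$ for the inverse, so that mean $x$ pools inefficient type $h(x)$ with efficient type $c^{-1}(x)$. For any $[x_1,x_2]\subseteq[\theta^*,\bar x]$, using the change of variables $\theta=h(x)$ and $\theta=c^{-1}(x)$ as in the proof of Lemma~\ref{lem:nam-feasible}, the martingale condition becomes
\[
\int_{x_1}^{x_2}\Bigl[(h(x)-x)f(h(x))h'(x)+(c^{-1}(x)-x)f(c^{-1}(x))\tfrac{\partial c^{-1}(x)}{\partial x}\Bigr]\,dx=0.
\]
Since this holds for all subintervals, the integrand vanishes pointwise, which is exactly the differential equation~(\ref{ODE}) satisfied by $g_2^{-1}$ in Lemma~\ref{lem:bijective-function} — i.e.\ $h'(x)=G(x,h(x))$. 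It remains to match the boundary condition: because $x^*$ hits $\theta^*$ in the limit as $\theta\uparrow\theta^*$, we have $h(\theta^*)=\theta^*$, which is precisely the terminal condition of~\ref{eq:IVP}. By the uniqueness half of Lemma~\ref{lem:bijective-function} (Picard--Lindelöf), $h\equiv g_2^{-1}$ on their common domain, hence $x^*=g_2$, and the left endpoint forces $\uwave{\theta}=\underline{\theta}$ (if $\uwave\theta>\underline\theta$ the matching covers strictly fewer inefficient types, and the efficient partner of $\uwave\theta$ would be $c^{-1}(g_2(\uwave\theta))<1$, leaving efficient types above it unmatched — contradicting Lemma~\ref{lem:alpha_pool_high} that $F$-a.e.\ efficient type trades; conversely $\uwave\theta<\underline\theta$ is impossible since $g_2$ is only defined down to $\underline\theta$ and extending it would violate Assumption~\ref{assn:no_full_trade} as in the proof of Lemma~\ref{lem:bijective-function}).

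The main obstacle is the boundary-condition argument: turning the informal statement "$x^*(\theta)\to\theta^*$ as $\theta\uparrow\theta^*$" into something rigorous given that $x^*$ is only characterized up to $F$-a.e.\ behavior. I would handle this by noting that a monotone function has at most countably many discontinuities, so its essential limit at $\theta^*$ from the left is well-defined; call it $x_0\ge\theta^*$. If $x_0>\theta^*$, then no mean in $(\theta^*,x_0)$ is used by any inefficient type near $\theta^*$; but such a mean $x\in(\theta^*,x_0)$, if used at all, must (by Lemma~\ref{lem:alpha_pool_high}) pool $c^{-1}(x)$ with some inefficient type $\theta'$, and the martingale condition forces $\theta'<x<c^{-1}(x)$, so $\theta'$ is bounded away from $\theta^*$ — this is consistent, so I instead rule out $x_0>\theta^*$ by a mass-balance count: the efficient types in $(\theta^*,c^{-1}(x_0))$ must be matched to inefficient types, and by negative-assortativity these are the inefficient types just below $\theta^*$, whose means would then lie in $(\theta^*,x_0)$, contradicting $x_0$ being the essential left-limit. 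Once $x_0=\theta^*$ is pinned down, the ODE uniqueness does the rest.
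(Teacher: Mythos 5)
Your proposal is correct and takes essentially the same approach as the paper: combine the earlier structural lemmas to get a strictly decreasing, single-valued match $x^*$ on $[\uwave{\theta},\theta^*)$, derive the ODE from the martingale constraint via a change of variables, impose the terminal condition $h(\theta^*)=\theta^*$, and invoke Picard--Lindel\"of uniqueness from Lemma~\ref{lem:bijective-function}. The only difference in emphasis is at the boundary: where you spend a paragraph ruling out $x_0>\theta^*$ via a mass-balance argument, the paper gets this for free by noting that $x^*$ is a bijection \emph{onto} $(\theta^*,c(1)]$ --- which follows directly from Lemma~\ref{lem:alpha_pool_high} (every efficient type generates a mean equal to its cost, so all of $(\theta^*,c(1)]$ is hit) together with Lemma~\ref{lem:alpha_pooling_interval} (each such mean is paired with exactly one inefficient type in $[\uwave{\theta},\theta^*)$). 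Surjectivity onto $(\theta^*,c(1)]$ then simultaneously pins the terminal condition at $\theta^*$ and forces $\uwave{\theta}=\underline{\theta}$; your separate argument for $\uwave{\theta}=\underline{\theta}$ via counting unmatched efficient types is correct but redundant once you phrase the image of $x^*$ this way.
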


\begin{proof}
    By Lemmas~\ref{lem:alpha_pool_high},~\ref{lem:alpha_pooling_interval}, and~\ref{lem:alpha_singleton_decreasing}, the mapping $x^* : [\uwave{\theta}, \theta^*) \to (\theta^*, c(1)]$ is $F$-almost everywhere strictly decreasing and is a bijection onto $(\theta^*,c(1)]$. 
    
    Moreover, by Lemma~\ref{lem:alpha_pool_high}, we have $x^*(\theta) = c(\theta)$ for $F$-almost every $\theta \in (\theta^*, 1]$.

    To satisfy the martingale condition~(\ref{martingale}), for every interval $[x_1, x_2] \subseteq [\theta^*, c(1)]$, we require:
    \begin{align*}
        \int_0^1\int_{x_1}^{x_2}(x - \theta)\,d\pi^*(\theta, x) 
        &= \int_{\uwave{\theta}}^{\theta^*} \int_{x_1}^{x_2}(x^*(\theta) - \theta) f(\theta)\,d\theta 
        + \int_{\theta^*}^{1} \int_{x_1}^{x_2}(c(\theta) - \theta) f(\theta)\,d\theta \\
        &= \int_{x^{*-1}(x_1)}^{x^{*-1}(x_2)}(x^*(\theta) - \theta) f(\theta)\,d\theta 
        + \int_{c^{-1}(x_1)}^{c^{-1}(x_2)}(c(\theta) - \theta) f(\theta)\,d\theta \\
        &= \int_{x_1}^{x_2} \left[ (x^{*-1}(x) - x) f(x^{*-1}(x)) \frac{\partial x^{*-1}(x)}{\partial x} 
        + (x - c^{-1}(x)) f(c^{-1}(x))  \frac{\partial c^{-1}(x)}{\partial x} \right] dx \\
        &= \int_{x_1}^{x_2} \left[ \frac{\partial x^{*-1}(x)}{\partial x} 
        \cdot \frac{\frac{\partial c^{-1}(x)}{\partial x} f(c^{-1}(x))}{f(x^{*-1}(x))} 
        \cdot \frac{c^{-1}(x) - x}{x^{*-1}(x) - x} \right] 
        (x^{*-1}(x) - x) f(x^{*-1}(x))\,dx \\
        &= \int_{x_1}^{x_2} \left[ \frac{\partial x^{*-1}(x)}{\partial x} 
        - G(x, x^{*-1}(x)) \right] (x^{*-1}(x) - x) f(x^{*-1}(x))\,dx \\
        &= 0,
    \end{align*}
    where the third equality follows by the change of variables $x = x^*(\theta)$ and $x = c(\theta)$ in each term.

    Recall that $g_2$ is the unique function satisfying the differential equation (Lemma~\ref{lem:bijective-function}):
    \[
        \frac{\partial g_2^{-1}(x)}{\partial x} = G(x, g_2^{-1}(x)).
    \]
    Therefore, we conclude that $x^*(\theta) = g_2(\theta)$ for $F$-almost every $\theta \in [\uwave{\theta}, \theta^*)$, and hence $\uwave{\theta} = \underline{\theta}$.
\end{proof}

To conclude, observe that Lemma~\ref{lem:alpha_pooling_interval} implies that, under $\pi^*$, $F$-almost every $\theta \in [0, \underline{\theta})$ is fully revealed. By Lemma~\ref{lem:alpha_negative_assortative}, $F$-almost every $\theta \in [\underline{\theta}, \theta^*)$ is uniquely matched with the posterior mean $g(\theta)$. Finally, by Lemma~\ref{lem:alpha_pool_high}, $F$-almost every $\theta \in (\theta^*, 1]$ is exclusively matched with the posterior mean $c(\theta)$. 

Therefore, $\pi^*_{\theta} = \pi^{\text{NAM}}_{\theta}$ for $F$-almost every $\theta \in \Theta$.

\paragraph{Step 2.} Recall that $g_2$ denotes the unique strictly decreasing, differentiable bijection defined in Lemma~\ref{lem:bijective-function}, and let $a(x):= g_2^{-1}(x)$. We will use $a$ to construct the functions $(w, q, m)$, which we will then show satisfy condition~\ref{ZP}, and such that the value of~\ref{dual} under these functions equals the value of~\ref{primal} under $\pi^{\text{NAM}}$.

To build intuition for constructing $(w,q,m)$, recall from Step~1 that, if strong duality holds and $\pi^*$ solves the primal problem~(\ref{primal}), then $x^*(\theta)=g_2(\theta)$ for $F$-almost every $\theta\in[\underline{\theta},\theta^*)$. By complementary slackness, for such $\theta$ we must have
\[
g_2(\theta)\in\arg\max_{x\in[\theta^*,\,c(1)]}\{\alpha(\theta)+q(x)(x-\theta)\}.
\]
Whenever $g_2(\theta)\in(\theta^*,c(1))$ and $q$ is differentiable, the first-order condition yields:
\[
q'(g_2(\theta))\bigl(g_2(\theta)-\theta\bigr)+q\bigl(g_2(\theta)\bigr)=0.
\]
Reparametrizing with $x=g_2(\theta)$ gives the differential equation:
\[
q'(x)\bigl(x-a(x)\bigr)+q(x)=0,\qquad x\in(\theta^*,c(1)).
\]
Solving the differential equation yields:
\[
q(x)=C\exp\left(-\int_{\theta^*}^{x}\frac{ds}{\,s-a(s)\,}\right),
\]
for some constant $C$. This furnishes the guiding form for $q$; we now proceed with the formal construction.

Define $(w, q, m)$ as follows:
\[
    q(x) = C\exp\left(-\int_{\theta^*}^{\min\{\max\{x,\theta^*\},c(1)\}}\frac{ds}{\,s-a(s)\,}\right), \quad \text{where} \quad C= \frac{-\alpha(\underline{\theta})}{c(1)-\underline{\theta}}\exp\left(-\int_{\theta^*}^{c(1)}\frac{ds}{\,s-a(s)\,}\right).
\]
Since $\alpha(\theta)>0$ for all $\theta$ it follows that $C < 0$ and hence $q(x) < 0$ for all $x \in [0, 1]$. Define:
\[
    m(x) := q(x)(1 - x),
\]
and:
\[
    w(\theta) =
    \begin{cases}
        0 & \text{if } \theta \in [0, \underline{\theta}), \\
        \alpha(\theta) + q(g_2(\theta))(g(\theta) - \theta) & \text{if } \theta \in [\underline{\theta}, \theta^*], \\
         \alpha(\theta) + q(c(\theta))(c(\theta) - \theta) & \text{if } \theta \in (\theta^*, 1].
    \end{cases}
\]

We now show that $(w, q, m)$ satisfy condition~\ref{ZP}. For this purpose, define:
\[
    y_\theta(x) := \alpha(\theta)\,\mathbf{1}_{\{x \geq \theta^*\}} + q(x)(x - \theta),
\]
so that for all $x \in [\theta^*, c(1)]$:
\[
    y'_\theta(x) := \frac{\partial y_\theta(x)}{\partial x} = \frac{\theta - a(x)}{x - a(x)}q(x).
\]
To prove that~\ref{ZP} holds, it suffices to show that for all $\theta$, the following condition is satisfied:
\begin{equation}
    w(\theta) = \max_{x \in X} \, y_\theta(x) + m(x)\,\mathbf{1}_{\{\hat{c}(\theta) > x\}}. \tag{ZP'} \label{ZP_alpha}
\end{equation}

\noindent \textbf{Case 1: $\theta \in [\underline{\theta}, \theta^*]$.} In this range, $y'_\theta(x) > 0$ for all $x \in [\theta^*, g_2(\theta))$, $y'_\theta(g_2(\theta)) = 0$, and $y'_\theta(x) < 0$ for all $x \in (g_2(\theta), c(1)]$. This is because $\theta = a(g(\theta))$, and since $a$ is strictly decreasing, $a(x) < \theta$ for $x > g_2(\theta)$, and $a(x) > \theta$ for $x < g_2(\theta)$. Moreover, $x - a(x) > 0$ and $q(x) < 0$. Therefore, $y_\theta$ is uniquely maximized at $g_2(\theta)$ over the interval $[\theta^*,c(1)]$.

Next, observe that for any $\theta \in [0,\theta^*]$:
\[
    y_{\theta}(c(1)) = \alpha(\theta) - \frac{\alpha(\underline{\theta})}{c(1)-\underline{\theta}}(c(1)-\theta).
\]
In particular, $y_{\theta}(c(1))>0$ if and only if $\frac{\alpha(\theta)}{c(1)-\theta} > \frac{\alpha(\underline{\theta})}{c(1)-\underline{\theta}}$. Since, by assumption, the mapping $\theta \mapsto \frac{\alpha(\theta)}{c(1)-\theta}$ is strictly increasing on $[0,\theta^*]$, it follows that $y_{\theta}(c(1))<0$ if $\theta \in [0,\underline{\theta})$, $y_{\underline{\theta}}(c(1))=0$, and $y_{\theta}(c(1))>0$ if $\theta \in (\underline{\theta},\theta^*]$.

Therefore, for all $x \in [0,\theta)$:
\[
    y_\theta(g_2(\theta)) \geq y_\theta(c(1)) \geq 0 
    > q(x)(x - \theta) + m(x) = C(1 - \theta)=y_{\theta}(x),
\]
where the strict inequality follows from $C<0$ and $1-\theta>0$.

Similarly, for all $x \in [\theta,\theta^*)$:
\[
    y_\theta(g_2(\theta)) \geq 0 \geq C(x - \theta)=y_{\theta}(x).
\]
where the last inequality follows from $C<0$ and $x-\theta\geq0$.

Finally, for any $x \in (c(1),1]$:
\[
    y_\theta(g_2(\theta)) \geq y_{\theta}(c(1))
    > \alpha(\theta) + q(c(1))(x-\theta)
    = y_{\theta}(x),
\]
where the strict inequality follows from $q(c(1))<0$ and $x-\theta>c(1)-\theta$.

Hence, $w(\theta) = y_\theta(g_2(\theta))$ satisfies~\ref{ZP_alpha} for all $\theta \in [\underline{\theta}, \theta^*]$.

\medskip

\noindent \textbf{Case 2: $\theta \in (\theta^*, 1]$.} Here, $y'_\theta(x) < 0$ for all $x \in [c(\theta), c(1)]$, since $\theta - a(x) > 0$ (as $a(x) \leq \theta^*$). So $y_\theta(x)$ is uniquely maximized at $c(\theta)$ over the interval $[c(\theta), c(1)]$.

Moreover, for all $x \in [0, c(\theta))$:
\begin{align*}
    y_\theta(c(\theta)) &= \alpha(\theta) + q(c(\theta))(c(\theta) - \theta)>\alpha(\theta)\\
    &> \alpha(\theta)\,\mathbf{1}_{\{x \geq \theta^*\}} + q(x)(x - \theta) + m(x) \\
    &= \alpha(\theta)\,\mathbf{1}_{\{x \geq \theta^*\}} + q(x)(1 - \theta)=y_{\theta}(x).
\end{align*}

For $x \in (c(1), 1]$, we similarly have:
\[
    y_\theta(c(\theta))\geq y_\theta(c(1)) > \alpha(\theta)+q(c(1))(x-\theta)=y_{\theta}(x).
\]

Therefore, $w(\theta) = y_\theta(c(\theta))$ satisfies~\ref{ZP_alpha} for all $\theta \in (\theta^*, 1]$.

\medskip

\noindent \textbf{Case 3: $\theta \in [0, \underline{\theta})$.} In this case, $y'_\theta(x) > 0$ for all $x \in [\theta^*, c(1)]$ since $\theta - a(x) < 0$ (as $a(x) \geq \underline{\theta}$). Thus, $y_\theta(x)$ is uniquely maximized at $c(1)$ in the interval $[\theta^*, c(1)]$. Additionally, we already showed that for any $\theta \in [0, \underline{\theta})$ we have that $y_\theta(c(1)) < 0$.

For all $x \in [0, \theta)$, $y_{\theta}(x)=q(x)(x - \theta) + m(x) = C(1 - \theta) < 0$, and for $x \in [\theta, \theta^*)$, $y_{\theta}(x)= C(x - \theta) \leq 0$.

Finally, for all $x \in (c(1),1]$:
\[
    0>y_{\theta}(c(\theta))> \alpha(\theta)+q(c(1))(x-\theta)=y_{\theta}(x).
\]

Therefore, $w(\theta) = 0$ satisfies~\ref{ZP_alpha} for all $\theta \in [0, \underline{\theta})$.

We now verify that the values of the primal and dual problems coincide. The value of~\ref{primal} under $\pi^{\text{NAM}}$ is:
\[
    V^* = \int_{0}^1 \int_{0}^1 \alpha(\theta)\,\mathbf{1}_{\{x \geq \theta^*\}}\,d\pi^{NAM}(\theta, x) 
    = \int_{\underline{\theta}}^{1} \alpha(\theta)\,d\theta.
\]
The value of~\ref{dual} under $(w, q, m)$ is:
\begin{align*}
    \int_0^1 w(\theta)f(\theta)\,d\theta &= V^* 
    + \int_{\underline{\theta}}^{\theta^*} q(g_2(\theta))(g_2(\theta) - \theta)f(\theta)\,d\theta 
    + \int_{\theta^*}^{1} q(c(\theta))(c(\theta) - \theta)f(\theta)\,d\theta \\
    &= V^* + \int_{\theta^*}^{c(1)} q(x) \left[(g_2^{-1}(x) - x)f(g_2^{-1}(x))\frac{\partial g_2^{-1}(x)}{\partial x}\,dx + (x - c^{-1}(x))f(c^{-1}(x))\frac{\partial c^{-1}(x)}{\partial x}\right]\,dx \\
    &=V^*+\int_{\theta^*}^{c(1)} q(x) \left[ \frac{\partial g_2^{-1}(x)}{\partial x} - \frac{\frac{\partial c^{-1}(x)}{\partial x}f(c^{-1}(x))}{f(g_2^{-1}(x))}\frac{c^{-1}(x) - x}{g_2^{-1}(x) - x} \right] (g_2^{-1}(x) - x)f(g^{-1}(x))\,dx \\
    &=V^*+\int_{\theta^*}^{c(1)} q(x)\left[\frac{\partial g_2^{-1}(x)}{\partial x} - G(x, g_2^{-1}(x))\right](g_2^{-1}(x) - x)f(g_2^{-1}(x))\,dx\\
    &=V^*.
\end{align*}
The second equality follows from the change of variables $x = g_2(\theta)$ and $x = c(\theta)$, and the last equality follows from the fact that $g_2$ satisfies the differential equation (Lemma~\ref{lem:bijective-function}).

Hence, by Lemma~\ref{cor:strong-duality}, $\pi^{\text{NAM}}$ solves~\ref{primal}, $(w, q, m)$ solves~\ref{dual}, and strong duality holds.

\subsection{Proof of Proposition \ref{prop:alpha_convex}}

\paragraph{Step 1.} Suppose that strong duality holds. Let $\pi^*$ be any solution to the primal problem~(\ref{primal}), and let $(w^*,q^*,m^*)$ be any solution to the dual problem~(\ref{dual}).

We follow the reasoning of Step 1 in the proof of Proposition~\ref{prop:alpha_increasing} and rely on some of the lemmas established there (Section~\ref{sec:proof_alpha_increasing}) to show that there exists $x^* \in [\theta^*,c(1)]$ such that $\pi^*_{\theta}=\pi^{x^*-\text{NAM}}$ for $F$-almost every $\theta$.

Observe that in the proofs of Lemmas~\ref{lem:q_negative} and~\ref{lem:alpha_pool_high} we did not rely on any assumption regarding the behavior of the function $\theta \mapsto \frac{\alpha(\theta)}{x-\theta}$. Therefore, these lemmas remain valid and imply that $q^*(x)<0$ for all $x \in [\theta^*,1]$ and that, under the optimal joint distribution $\pi^*$, $F$-almost every $\theta \in (\theta^*,1]$ is pooled with some type $\theta'$ so as to achieve a posterior mean equal to its cost, $x^*(\theta)=c(\theta)$. Combining these facts, we obtain the following new result:

\begin{lemma}\label{lem:alpha_disjoint_interval}
    There exist types $0<\theta_1<\theta_2<\theta^*$ such that under the optimal joint distribution $\pi^*$, $F$-almost every $\theta \in (\theta_1, \theta_2)$ is fully revealed, and $F$-almost every $\theta \in [0,\theta_1] \cup [\theta_2,\theta^*)$ is pooled (with probability one).
\end{lemma}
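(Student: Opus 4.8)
The plan is to read off the structure of $\pi^*$ from complementary slackness (Lemma~\ref{cor:complementary-slackness}), using the two facts already established in this proof: $q^*(x)<0$ for all $x\in[\theta^*,1]$ (Lemma~\ref{lem:q_negative}) and that $F$-a.e.\ efficient type is matched with a posterior mean equal to its cost (Lemma~\ref{lem:alpha_pool_high}). Write $t_x(\theta):=\alpha(\theta)/(x-\theta)$. Since revealing an inefficient type $\theta$ yields dual value $0$, and the indicator term in~\ref{ZP} vanishes whenever $\hat{c}(\theta)=\theta\le\theta^*\le x$, complementary slackness gives: for $F$-a.e.\ inefficient $\theta$ that is pooled with a mean $x\ge\theta^*$, $t_x(\theta)\ge -q^*(x)$; and for $F$-a.e.\ inefficient $\theta$ that is revealed (with positive probability), $t_x(\theta)\le -q^*(x)$ for every $x\ge\theta^*$. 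Also, Assumption~\ref{assn:no_full_trade} forces the set of revealed inefficient types to have positive $F$-measure, so it is not empty.

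The heart of the argument is that the revealed set $R\subseteq[0,\theta^*)$ is, up to an $F$-null set, an interval. If not, one can pick $\theta_a<\theta_b<\theta_c$ with positive-$F$-measure neighborhoods such that $\theta_a,\theta_c$ are revealed and $\theta_b$ is pooled with some mean $x_b\ge\theta^*$. Evaluating the three slackness inequalities at the \emph{same} mean $x_b$ gives $t_{x_b}(\theta_a)\le -q^*(x_b)$, $t_{x_b}(\theta_c)\le -q^*(x_b)$ and $t_{x_b}(\theta_b)\ge -q^*(x_b)$; but $\theta_b$ is a strict convex combination of $\theta_a$ and $\theta_c$, so strict convexity of $\theta\mapsto t_{x_b}(\theta)$ on $[0,\theta^*]$ forces $t_{x_b}(\theta_b)< -q^*(x_b)$, a contradiction. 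Hence $R=(\theta_1,\theta_2)$ up to null, for some $0\le\theta_1<\theta_2\le\theta^*$ with $F((\theta_1,\theta_2))>0$, and $A=[0,\theta_1]\cup[\theta_2,\theta^*)$.

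It remains to place this interval strictly inside $(0,\theta^*)$. The upper endpoint is easy: $q^*(\theta^*)$ is finite while $t_{\theta^*}(\theta)=\alpha(\theta)/(\theta^*-\theta)\to+\infty$ as $\theta\uparrow\theta^*$, so there is $\delta>0$ with $\alpha(\theta)+q^*(\theta^*)(\theta^*-\theta)>0$ on $(\theta^*-\delta,\theta^*)$; since the indicator term in~\ref{ZP} vanishes at $x=\theta^*$, this yields $w^*(\theta)>0$, so revealing is strictly suboptimal there and $\theta_2<\theta^*$. For the lower endpoint one uses the boundary hypothesis $\alpha(0)/c(1)>\alpha(\underline{\theta})/(c(1)-\underline{\theta})$: if a neighborhood of $0$ were revealed, then by the previous paragraph $R=[0,\theta_2)$ and every inefficient type in $[\theta_2,\theta^*)$ is pooled, i.e.\ the configuration is reveal--pool. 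The arguments of Lemmas~\ref{lem:alpha_singleton_decreasing} and~\ref{lem:alpha_negative_assortative} --- which rely only on $q^*<0$ and the martingale condition, not on the shape of $t_x$ --- then identify the matching on $[\theta_2,\theta^*)$ with the bijection $g_2$, so $\theta_2=\underline{\theta}$ and $\pi^*=\pi^{\text{NAM}}$. But under the boundary hypothesis $\pi^{\text{NAM}}$ is not optimal: transferring an infinitesimal mass of type $\underline{\theta}$ out of the pool at mean $c(1)$ while adding a compensating mass of type $0$ that keeps that mean fixed changes the objective by a positive multiple of $\alpha(0)/c(1)-\alpha(\underline{\theta})/(c(1)-\underline{\theta})>0$ (this perturbation preserves~\ref{BP},~\ref{martingale} and~\ref{price}), contradicting optimality of $\pi^*$. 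Hence $\theta_1>0$, and the ``fully revealed / pooled with probability one'' dichotomy on $R$ and $A$ follows exactly as in Lemma~\ref{lem:alpha_singleton_decreasing}, ruling out split supports via the martingale condition.

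The main obstacle is that distinct pooled inefficient types are matched with distinct posterior means, so the strict-convexity argument cannot be run at a single fixed $x$; the resolution is to run it at the mean $x_b$ attached to the \emph{sandwiched} type. The other delicate point is the lower endpoint, which cannot be settled purely from complementary slackness and instead requires detouring through the reveal--pool characterization of Proposition~\ref{prop:alpha_increasing} and exhibiting an explicit improving perturbation of $\pi^{\text{NAM}}$.
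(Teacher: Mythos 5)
Your proof is correct, and it takes a genuinely different route from the paper's in two places.

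For the interval structure of the revealed set $R$, you run a direct ``sandwich'' argument: if $\theta_a<\theta_b<\theta_c$ with $\theta_a,\theta_c$ revealed and $\theta_b$ pooled at $x_b$, the three slackness inequalities $t_{x_b}(\theta_a)\le -q^*(x_b)$, $t_{x_b}(\theta_c)\le -q^*(x_b)$, $t_{x_b}(\theta_b)\ge -q^*(x_b)$ clash with strict convexity, since $\theta_b$ is a strict convex combination of $\theta_a,\theta_c$. The paper instead introduces the unique minimizer $\theta_x$ of $t_x$, observes that $\theta=\theta_{x(\theta)}$ on a set of positive measure would force full trade, and then splits into two cases according to whether pooled types lie to the left or to the right of the minimizer. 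Your version avoids this case analysis entirely and is arguably cleaner; the paper's version gives the same dichotomy but bundles the endpoint arguments into the case split.

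For the lower endpoint $\theta_1>0$, both arguments first reduce to the reveal--pool configuration via Lemmas~\ref{lem:alpha_singleton_decreasing} and~\ref{lem:alpha_negative_assortative} (and you are right that these only use $q^*<0$, Topkis, and the martingale ODE, not the shape of $t_x$), so $\theta_2=\underline{\theta}$ and $\pi^*=\pi^{\text{NAM}}$. From there the routes diverge: the paper stays in the dual, using continuity of $t_x$ near $x=c(1)$ together with $t_{c(1)}(0)>t_{c(1)}(\underline{\theta})$ to show that types near $0$ would have to be pooled, contradicting the supposition; you go primal and construct an improving perturbation of $\pi^{\text{NAM}}$ around mean $c(1)$. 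Your perturbation computation is right and yields exactly the sign condition in the boundary hypothesis, but it carries a technical burden the dual argument avoids: the ``pool at mean $c(1)$'' has zero $\pi^{\text{NAM}}$-mass, so the swap must be done on a thin strip $[c(1)-\epsilon,c(1)]$ with a limiting argument. Finally, note that the paper's Case~1 gets $\theta_1=\sup S^->0$ ``for free'' without the boundary hypothesis, whereas your argument always invokes it; since the hypothesis is assumed in Proposition~\ref{prop:alpha_convex} anyway, nothing is lost.

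Two small points worth flagging but not gaps: (i) to make the sandwich argument literal you should select $\theta_a,\theta_b,\theta_c$ from positive-measure subsets on which complementary slackness holds, as you indicate; (ii) the passage from ``revealed/pooled'' to ``fully revealed/pooled with probability one'' is glossed in both proofs and relies on the singleton-support argument of Lemma~\ref{lem:alpha_singleton_decreasing}, which you invoke correctly.
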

\begin{proof}
    Let $A \subset [0,\theta^*)$ denote the set of pooled types. By Lemma~\ref{lem:alpha_pool_high}, $F(A)>0$. For each $x \in [\theta^*,c(1)]$, define $t:[\theta^*,c(1)] \times [0,\theta^*] \to \mathbb{R}$ and write $t_x(\theta):=t(x,\theta)$. By assumption, each $t_x$ is strictly convex and therefore admits a unique minimizer $\theta_x \in [0,\theta^*]$.

    For $\theta \in A$, let $x(\theta) \ge \theta^*$ denote the posterior mean assigned to $\theta$ under the optimal joint distribution $\pi^*$. By complementary slackness, for $F$-almost every $\theta \in A$:
    \[
        \alpha(\theta) + q^*(x(\theta))\bigl(x(\theta) - \theta\bigr) \ge 0
        \quad\Rightarrow\quad
        t_{x(\theta)}(\theta)\;=\;\frac{\alpha(\theta)}{x(\theta)-\theta}\;\ge\;-\,q^*(x(\theta)).
    \]

    It cannot be that $\theta=\theta_{x(\theta)}$ for $F$-almost every $\theta \in A$. Indeed, if $\theta=\theta_{x(\theta)}$, then $t_{x(\theta)}(\theta')>t_{x(\theta)}(\theta)\ge -q^*(x(\theta))$ for all $\theta' \in [0,\theta^*]$, which implies:
    \[
        \alpha(\theta') + q^*(x(\theta))\bigl(x(\theta)-\theta'\bigr) > 0,
    \]
    for all $\theta' \in [0,\theta^*]$. Hence almost every type below $\theta^*$ would be pooled, contradicting Assumption~\ref{assn:no_full_trade}. Consequently, a positive $F$-measure of types in $A$ must satisfy either $\theta<\theta_{x(\theta)}$ or $\theta>\theta_{x(\theta)}$.

     \medskip

    \noindent\textbf{Case 1: $\;F(\{\theta\in A:\theta<\theta_{x(\theta)}\})>0$.}
    Let $S^-:=\{\theta\in A:\theta<\theta_{x(\theta)}\}$ and set $\theta_1:=\sup S^-$. For any $\theta\in S^-$ and any $\theta'<\theta$, strict convexity yields $t_{x(\theta)}(\theta')>t_{x(\theta)}(\theta)\ge -q^*(x(\theta))$, so $\alpha(\theta')+q^*(x(\theta))\bigl(x(\theta)-\theta'\bigr)>0$, i.e., $\theta'$ is pooled. Taking limits along $S^-$ shows that $F$-almost every type in $[0,\theta_1]$ is pooled. By Assumption~\ref{assn:no_full_trade}, we must have $\theta_1<\theta^*$.

    We now show that there exists $\theta_2<\theta^*$ such that $F$-almost every type in $[\theta_2,\theta^*)$ is pooled. Suppose not. Then, by Lemma~\ref{lem:alpha_pool_high}, any $x\in(\theta^*,c(1)]$ must be matched with some $\theta \in [0,\theta_1]$. However, since $\alpha$ is strictly positive and bounded there is $0 <M<\infty$ such that $\lim_{x\to\theta^*} t_x(\theta)<M$ for all $\theta\in[0,\theta_1]$, while $\lim_{x\to\theta^*} t_x(\theta^*)=\infty$. By continuity of $t_x$, there exists $\varepsilon>0$ and $\theta_2<\theta^*$ such that, for all $x\in(\theta^*,\theta^*+\varepsilon]$ and all $\theta'\in[\theta_2,\theta^*)$, $t_x(\theta')>t_x(\theta)$ for every $\theta\in[0,\theta_1]$.
    
    If some $\theta\in[0,\theta_1]$ is matched with such an $x$, then $t_x(\theta)\ge -q^*(x)$, hence $t_x(\theta')>-q^*(x)$, which makes matching $\theta'$ with $x$ strictly more valuable than revealing $\theta'$. Thus $[\theta_2,\theta^*)$ must be pooled $F$-almost everywhere, a contradiction to the supposition.

    \medskip
    
    \noindent \textbf{Case 2: $\;F(\{\theta\in A:\theta>\theta_{x(\theta)}\})>0$.}
    Let $S^+:=\{\theta\in A:\theta>\theta_{x(\theta)}\}$ and set $\theta_2:=\inf S^+$. For any $\theta\in S^+$ and any $\theta'>\theta$, strict convexity implies $t_{x(\theta)}(\theta')>t_{x(\theta)}(\theta)\ge -q^*(x(\theta))$, so $\alpha(\theta')+q^*(x(\theta))\bigl(x(\theta)-\theta'\bigr)>0$, i.e., $\theta'$ is pooled. Hence $F$-almost every type in $[\theta_2,\theta^*)$ is pooled. By Assumption~\ref{assn:no_full_trade}, we must have $\theta_2>0$.

    We next show that there exists $\theta_1>0$ such that $F$-almost every type in $[0,\theta_1]$ is pooled. Suppose not. By Lemma~\ref{lem:alpha_pool_high}, any $x\in(\theta^*,c(1)]$ must then be matched with some $\theta\in[\theta_2,\theta^*)$. Using Lemmas~\ref{lem:alpha_singleton_decreasing} and~\ref{lem:alpha_negative_assortative}, it follows that $\theta_2=\underline{\theta}$ and $x(\theta)=g_2(\theta)$ for $\theta\in[\underline{\theta},\theta^*)$, where $g_2$ is the strictly decreasing bijection defined in Lemma~\ref{lem:bijective-function}. Moreover, $x(\underline{\theta})=g_2(\underline{\theta})=c(1)$ and, by assumption, $t_{c(1)}(0)>t_{c(1)}(\underline{\theta})$. By continuity, there exist $\varepsilon>0$ and $\theta_1>0$ such that, for all $\theta'\in[0,\theta_1]$ and all $\theta\in[\underline{\theta},\underline{\theta}+\varepsilon]$, $t_{g(\theta)}(\theta') \;>\; t_{g(\theta)}(\theta)$.
    
    If $\theta$ is matched with $g(\theta)$, then $t_{g(\theta)}(\theta)\ge -q^*(g(\theta))$, hence $t_{g(\theta)}(\theta')>-q^*(g(\theta))$, which makes matching $\theta'$ with $g(\theta)$ strictly more valuable than revealing $\theta'$. Thus $[0,\theta_1]$ must be pooled $F$-almost everywhere, a contradiction to the supposition.

    \medskip
    Combining the two cases, we conclude that there exist $0<\theta_1<\theta_2<\theta^*$ such that $A=[0,\theta_1] \,\cup\, [\theta_2,\theta^*)$,
    up to an $F$-null set.
\end{proof}

Once we have shown Lemma~\ref{lem:alpha_disjoint_interval}, we can apply the same argument as in Lemma~\ref{lem:alpha_singleton_decreasing} to conclude that, under the optimal joint distribution $\pi^*$, $x^*(\theta)$ takes singleton values and is strictly decreasing for $F$-almost every $\theta \in [0,\theta_1] \cup [\theta_2,\theta^*)$. We can then establish the following new result:

\begin{lemma}\label{lem:alpha_truncated_negative_assortative}
    Let $g_1:[0,\bar{\theta}]\to[\theta^*,c(1)]$ and $g_2:[\underline{\theta},\theta^*]\to[\theta^*,c(1)]$ denote the unique strictly decreasing bijections defined in Lemma~\ref{lem:bijective-function}. Under the optimal joint distribution $\pi^*$, up to an $F$-null set we have $x^*(\theta)=g_1(\theta)$ on $[0,\theta_1]$ and $x^*(\theta)=g_2(\theta)$ on $[\theta_2,\theta^*)$, and moreover $x^*(\theta_1)=x^*(\theta_2)$.
\end{lemma}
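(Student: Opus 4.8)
The plan is to mirror the argument of Lemma~\ref{lem:alpha_negative_assortative}, now run separately on the two pooling branches $[0,\theta_1]$ and $[\theta_2,\theta^*)$ identified in Lemma~\ref{lem:alpha_disjoint_interval}. From that lemma (and the adaptation of Lemma~\ref{lem:alpha_singleton_decreasing} discussed immediately after it) I take as given that, under $\pi^*$: $q^*$ is strictly increasing on $(\theta^*,c(1)]$; $F$-almost every efficient type $\theta\in(\theta^*,1]$ is matched with the single mean $c(\theta)$ (Lemma~\ref{lem:alpha_pool_high}); and $x^*$ restricted to each of $[0,\theta_1]$ and $[\theta_2,\theta^*)$ is single-valued and strictly decreasing $F$-a.e. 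Write $R_1$ and $R_2$ for the essential ranges of $x^*$ on these two intervals, and $a_1,a_2$ for the corresponding inverse maps.

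The first and main part is to pin down the range structure, namely that $R_1=[r,c(1)]$ and $R_2=[\theta^*,r]$ for a common point $r=x^*(\theta_1)=x^*(\theta_2)$. I would establish this from two facts. \emph{Covering:} the pushforward of $\pi^*$ onto $X$ has support equal to $[\theta^*,c(1)]$ on the trade region, since the efficient types contribute a component with strictly positive density $f(c^{-1}(x))\,(c^{-1})'(x)$ there; and for $\pi^*$-a.e.\ such $x$ the martingale constraint~(\ref{martingale}) forces the conditional mean to equal $x<c^{-1}(x)$, which is impossible unless some inefficient type is also matched with $x$. Hence $R_1\cup R_2\supseteq[\theta^*,c(1)]$ up to a Lebesgue-null set. \emph{Ordering:} since $q^*$ is strictly increasing, $\alpha(\theta)+q^*(x)(x-\theta)$ is strictly submodular in $(\theta,x)$, so by Topkis' theorem $x^*$ is decreasing in the strong set order on the entire pooled set $[0,\theta_1]\cup[\theta_2,\theta^*)$; as $\theta_1<\theta_2$, this gives $\inf R_1=x^*(\theta_1)\ge x^*(\theta_2)=\sup R_2$ (the end equalities following from monotonicity on each branch). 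Combining the two facts, any positive-measure gap strictly between $\sup R_2$ and $\inf R_1$, or inside either range, would be disjoint from $R_1\cup R_2$ and contradict covering; this forces each branch of $x^*$ to be continuous, $R_1=[r,c(1)]$ and $R_2=[\theta^*,r]$ with $r:=x^*(\theta_1)=x^*(\theta_2)$, and also $x^*(0)=c(1)$, $\lim_{\theta\uparrow\theta^*}x^*(\theta)=\theta^*$. In particular $x^*(\theta_1)=x^*(\theta_2)$, which is the last assertion.

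It remains to identify each branch. Applying the martingale constraint~(\ref{martingale}) to subintervals of $R_1$ and changing variables exactly as in the proofs of Lemmas~\ref{lem:nam-feasible} and~\ref{lem:alpha_negative_assortative}---the only types mapping into the interior of $R_1$ being one type in $[0,\theta_1]$ and the efficient type $c^{-1}(x)$---yields that $a_1$ solves $a_1'(x)=G\bigl(x,a_1(x)\bigr)$ on $[r,c(1)]$, with boundary value $a_1(c(1))=0$ coming from $x^*(0)=c(1)$; the same computation on $R_2$ gives $a_2'(x)=G\bigl(x,a_2(x)\bigr)$ on $[\theta^*,r]$ with $a_2(\theta^*)=\theta^*$. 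These are precisely the initial value problems characterizing $g_1^{-1}$ and $g_2^{-1}$ in Lemma~\ref{lem:bijective-function}, so Picard--Lindelöf uniqueness (applied as there, using that $G$ is locally Lipschitz, including at $(\theta^*,\theta^*)$) gives $a_1=g_1^{-1}$ on $[r,c(1)]$ and $a_2=g_2^{-1}$ on $[\theta^*,r]$, i.e.\ $x^*=g_1$ on $[0,\theta_1]$ and $x^*=g_2$ on $[\theta_2,\theta^*)$ up to an $F$-null set.

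I expect the middle step to be the real obstacle: showing the two ranges tile $[\theta^*,c(1)]$ and meet only at $r$. This is where the martingale constraint (which forces an inefficient partner for every efficient type, giving the covering) must be combined with the global monotone comparative statics (which forbids the two branches from interleaving) and the resulting continuity argument. Once the clean picture $R_1=[r,c(1)]$, $R_2=[\theta^*,r]$ is secured, the ODE identification is routine and borrows directly from the already-established Lemmas~\ref{lem:bijective-function}, \ref{lem:nam-feasible}, and~\ref{lem:alpha_negative_assortative}.
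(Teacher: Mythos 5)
Your proof follows essentially the same route as the paper's: combine Lemmas~\ref{lem:alpha_pool_high}, \ref{lem:alpha_disjoint_interval}, and~\ref{lem:alpha_singleton_decreasing} to conclude that $x^*$ restricted to the pooled set is a strictly decreasing bijection onto $(\theta^*,c(1)]$ (hence the two branch images tile the interval and meet at a common point $r=x^*(\theta_1)=x^*(\theta_2)$), then use the martingale constraint plus a change of variables to show $x^{*-1}$ satisfies the ODE~(\ref{ODE}), and finally invoke Picard--Lindel\"of uniqueness against the boundary values $g_1^{-1}(c(1))=0$ and $g_2^{-1}(\theta^*)=\theta^*$ from Lemma~\ref{lem:bijective-function}. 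The only difference is presentational: you unpack the ``bijection onto $(\theta^*,c(1)]$'' step into an explicit covering argument (each mean $x>\theta^*$ must have an inefficient partner) and an ordering argument (global Topkis monotonicity), whereas the paper states this as an immediate consequence of the cited lemmas.
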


\begin{proof}
  By Lemmas~\ref{lem:alpha_pool_high}, \ref{lem:alpha_disjoint_interval}, and~\ref{lem:alpha_singleton_decreasing}, the assignment $x^*:[0,\theta_1]\,\cup\,[\theta_2,\theta^*) \to (\theta^*,c(1)]$ is strictly decreasing $F$-almost everywhere and is a bijection onto $(\theta^*,c(1)]$. Consequently, the images $x^*([0,\theta_1])$ and $x^*([\theta_2,\theta^*))$ are adjacent intervals whose union is $(\theta^*,c(1)]$, and therefore their boundary values coincide $x^*(\theta_1)=x^*(\theta_2)$.

    Additionally, by Lemma~\ref{lem:alpha_pool_high}, $x^*(\theta)=c(\theta)$ for $F$-almost every $\theta\in(\theta^*,1]$.
    
    To satisfy the martingale condition~(\ref{martingale}), for every interval $[x_1, x_2] \subseteq [c(1),x^*]$, we require:
    \begin{align*}
        \int_0^1\int_{x_1}^{x_2}(x - \theta)\,d\pi^*(\theta, x) 
        &= \int_{0}^{\theta^*} \int_{x_1}^{x_2}(x^*(\theta) - \theta) f(\theta)\,d\theta 
        + \int_{\theta^*}^{1} \int_{x_1}^{x_2}(c(\theta) - \theta) f(\theta)\,d\theta \\
        &= \int_{x^{*-1}(x_1)}^{x^{*-1}(x_2)}(x^*(\theta) - \theta) f(\theta)\,d\theta 
        + \int_{c^{-1}(x_1)}^{c^{-1}(x_2)}(c(\theta) - \theta) f(\theta)\,d\theta \\
        &= \int_{x_1}^{x_2} \left[ (x^{*-1}(x) - x) f(x^{*-1}(x)) \frac{\partial x^{*-1}(x)}{\partial x} 
        + (x - c^{-1}(x)) f(c^{-1}(x))  \frac{\partial c^{-1}(x)}{\partial x} \right] dx \\
        &= \int_{x_1}^{x_2} \left[ \frac{\partial x^{*-1}(x)}{\partial x} 
        \cdot \frac{\frac{\partial c^{-1}(x)}{\partial x} f(c^{-1}(x))}{f(x^{*-1}(x))} 
        \cdot \frac{c^{-1}(x) - x}{x^{*-1}(x) - x} \right] 
        (x^{*-1}(x) - x) f(x^{*-1}(x))\,dx \\
        &= \int_{x_1}^{x_2} \left[ \frac{\partial x^{*-1}(x)}{\partial x} 
        - G(x, x^{*-1}(x)) \right] (x^{*-1}(x) - x) f(x^{*-1}(x))\,dx \\
        &= 0,
    \end{align*}
    where the third equality follows by the change of variables $x = x^*(\theta)$ and $x = c(\theta)$ in each term.

    Recall that $g_1:[0,\bar{\theta}]\to[\theta^*,c(1)]$ is the strictly decreasing bijection from Lemma~\ref{lem:bijective-function}, and let $b:=h^{-1}:[\theta^*,c(1)]\to[0,\bar{\theta}]$ denote its inverse. By Lemma~\ref{lem:bijective-function}, $b$ is the unique solution to the partial differential equation given by $b'(x)=G(x,b(x))$.

    By uniqueness together with $b(h(\theta))=\theta$, it follows that $x^*(\theta)=g_1(\theta)$ for $F$-almost every $\theta\in[0,\theta_1]$. An analogous argument—applied to $g_2$ (on any interval $[x_1,x_2]\subseteq[x^*,c(1)]$) yields $x^*(\theta)=g_2(\theta)$ for $F$-almost every $\theta\in[\theta_2,\theta^*)$.
\end{proof}

\paragraph{Step 2.} Recall that $g_1$ and $g_2$ denote the functions defined in Lemma~\ref{lem:bijective-function}, and let $a(x) := g^{-1}_2(x)$ and $b(x) := g_1^{-1}(x)$. 

Define $t:[\theta^*,c(1)] \times [0,\theta^*] \to \mathbb{R}$ by $t_x(\theta):=t(x,\theta) = \frac{\alpha(\theta)}{x-\theta}$. The mapping $x \mapsto t_x(b(x)) - t_x(a(x))$ is continuous and, by assumption, is strictly positive at $x = c(1)$ because $\frac{\alpha(0)}{c(1)} - \frac{\alpha(\underline{\theta})}{c(1)-\underline{\theta}} > 0$. Moreover, it is strictly negative for $x$ sufficiently close to $\theta^*$, since $\lim_{x \to \theta^*} t_x(a(x)) = -\infty$.

By the intermediate value theorem, there exists $x^* \in (0,\theta^*)$ such that $t_{x^*}(b(x^*)) - t_{x^*}(a(x^*)) = 0$. Let $b(x^*) = \theta_1$ and $a(x^*) = \theta_2$, where $0 < \theta_1 < \theta_2 < \theta^*$. The assumption that $t_x$ is strictly convex for each $x \in [\theta^*,c(1)]$ then implies that $t_{x^*}(\theta) < t_{x^*}(\theta_1) = t_{x^*}(\theta_2)$ for all $\theta \in (\theta_1,\theta_2)$, and $t_{x^*}(\theta) > t_{x^*}(\theta_1) = t_{x^*}(\theta_2)$ for all $\theta \in [0,\theta_1) \cup (\theta_2,\theta^*]$.

We use $a$, $b$, $\theta_1$, $\theta_2$, and $x^*$ to construct the functions $(w,q,m)$. We then show that these functions satisfy condition~\ref{ZP} and that the value of~\ref{dual} under them equals the value of~\ref{primal} under $\pi^{x^*-\text{NAM}}$. For intuition on this construction, refer to Step~2 of the proof of Proposition \ref{prop:alpha_increasing} (Section \ref{sec:proof_alpha_increasing}).

Define $(w, q, m)$ as follows:
\[
    q(x) =\begin{cases}
        C\exp\left({-\int_{x^*}^{\max\{x,\theta^*\}}\frac{ds}{s-a(s)}}\right) & \text{if } x \in [0, x^*], \\
        C\exp\left({-\int_{x^*}^{\min\{x,c(1)\}}\frac{ds}{s-b(s)}}\right) & \text{if } x \in [x^*,1],
    \end{cases} 
\]
where: 
\[
C= \frac{-\alpha(\theta_1)}{x^*-\theta_1}=\frac{-\alpha(\theta_2)}{x^*-\theta_2}.
\]
Since $\alpha(\theta)>0$ for all $\theta$ it follows that $C< 0$ and hence $q(x) < 0$ for all $x \in [0, 1]$. Define:
\[
    m(x) := q(x)(1 - x),
\]
and:
\[
    w(\theta) =
    \begin{cases}
        \alpha(\theta)+q(g_1(\theta))(g_1(\theta)-\theta) & \text{if } \theta \in [0, \theta_1], \\
        0 & \text{if } \theta \in (\theta_1, \theta_2), \\
        \alpha(\theta) + q(g_2(\theta))(g_2(\theta) - \theta) & \text{if } \theta \in [\underline{\theta}, \theta^*], \\
         \alpha(\theta) + q(c(\theta))(c(\theta) - \theta) & \text{if } \theta \in (\theta^*, 1].
    \end{cases}
\]

We now show that $(w, q, m)$ satisfy condition~\ref{ZP}. For this purpose, define:
\[
    y_\theta(x) := \alpha(\theta)\,\mathbf{1}_{\{x \geq \theta^*\}} + q(x)(x - \theta),
\]
so that for all $x \in [\theta^*, c(1)]$:
\[
    y'_\theta(x) := \frac{\partial y_\theta(x)}{\partial x} =\begin{cases} \frac{\theta - a(x)}{x - a(x)}q(x) & \text{if } x \in [\theta^*, x^*),\\
    \frac{\theta - b(x)}{x - b(x)}q(x) & \text{if } x \in (x^*,c(1)].
    \end{cases}
\]

To prove that~\ref{ZP} holds, it suffices to show that for all $\theta$, the following condition is satisfied:
\begin{equation}
    w(\theta) = \max_{x \in X} \, y_\theta(x) + m(x)\,\mathbf{1}_{\{\hat{c}(\theta) > x\}}. \tag{ZP'} \label{ZP_alpha2}
\end{equation}

\noindent \textbf{Case 1: $\theta \in [0, \theta_1]$.} In this range, $y'_\theta(x) > 0$ for all $x \in [\theta^*, x^*]$. This is because $\theta<\theta_2 \leq a(x)$, $x-a(x)>0$ and $q(x)<0$. Then, $y_\theta$ is uniquely maximized at $x^*$ over the interval $[\theta^*,x^*]$

On the other hand, $y'_\theta(x) > 0$ for all $x \in [x^*, g_1(\theta))$, $y'_\theta(g_1(\theta)) = 0$, and $y'_\theta(x) < 0$ for all $x \in (g_1(\theta), c(1)]$. This is because $\theta = b(g_1(\theta))$, and since $b$ is strictly decreasing, $b(x) < \theta$ for $x > g_1(\theta)$, and $b(x) > \theta$ for $x < b(\theta)$. Moreover, $x - b(x) > 0$ and $q(x) < 0$. Therefore, $y_\theta$ is uniquely maximized at $g_1(\theta)$ over $[x^*,c(1)]$, and, as noted in the previous paragraph, this maximum also holds over the entire interval $[\theta^*,c(1)]$.

Next, observe that for any $\theta \in [0,\theta^*]$:
\[
    y_{\theta}(x^*) = \alpha(\theta) - \frac{\alpha(\theta_1)}{x^*-\theta_1}\,(x^*-\theta)=\alpha(\theta) - t_{x^*}(\theta_1)\,(x^*-\theta)=\alpha(\theta) - t_{x^*}(\theta_2)\,(x^*-\theta).
\]
In particular, $y_{\theta}(x^*)>0$ if and only if $t_{x^*}(\theta)>t_{x^*}(\theta_1)=t_{x^*}(\theta_2)$, which we now is true if and only if $\theta \in [0,\theta_1) \cup (\theta_2,\theta^*]$. Therefore, $y_{\theta}(x^*)<0$ if $\theta \in (\theta_1,\theta_2)$, $y_{\theta_1}(x^*)=y_{\theta_2}(x^*)=0$, and $y_{\theta}(x^*)>0$ if $\theta \in [0,\theta_1) \cup (\theta_2,\theta^*]$.

Therefore, for all $x \in [0,\theta)$:
\[
    y_\theta(g_1(\theta)) \geq y_\theta(x^*) \geq 0 
    > q(x)(x - \theta) + m(x) = C(1 - \theta)=y_{\theta}(x),
\]
where the strict inequality follows from $C<0$ and $1-\theta>0$.

Similarly, for all $x \in [\theta,\theta^*)$:
\[
    y_\theta(g_1(\theta)) \geq 0 \geq C(x - \theta)=y_{\theta}(x).
\]
where the weak inequality follows from $C<0$ and $x-\theta \geq 0$.

Finally, for any $x \in (c(1),1]$:
\[
    y_\theta(g_1(\theta)) \geq y_{\theta}(c(1))
    > \alpha(\theta) + q(c(1))(x-\theta)
    = y_{\theta}(x),
\]
where the strict inequality follows from $q(c(1))<0$ and $x-\theta>c(1)-\theta$.

Hence, $w(\theta) = y_\theta(g_1(\theta))$ satisfies~\ref{ZP_alpha2} for all $\theta \in [0, \theta_1]$.

\noindent \textbf{Case 2: $\theta \in [\theta_2,\theta^*]$.} We omit the proof of this case, since a symmetric argument to that used in case 1 shows that $w(\theta) = y_\theta(g_2(\theta))$ satisfies~\ref{ZP_alpha2} for all $\theta \in [\theta_2, \theta^*]$.

\noindent \textbf{Case 3: $\theta \in (\theta^*, 1]$.} Here, $y'_\theta(x) < 0$ for all $x \in [c(\theta), c(1)]$, since $\theta - a(x),\theta-b(x) > 0$ (as $a(x)$, $b(x) \leq \theta^*$). So $y_\theta(x)$ is uniquely maximized at $c(\theta)$ over the interval $[c(\theta), c(1)]$.

Moreover, for all $x \in [0, c(\theta))$:
\begin{align*}
    y_\theta(c(\theta)) &= \alpha(\theta) + q(c(\theta))(c(\theta) - \theta)>\alpha(\theta)\\
    &> \alpha(\theta)\,\mathbf{1}_{\{x \geq \theta^*\}} + q(x)(x - \theta) + m(x) \\
    &= \alpha(\theta)\,\mathbf{1}_{\{x \geq \theta^*\}} + q(x)(1 - \theta)=y_{\theta}(x).
\end{align*}

For $x \in (c(1), 1]$, we similarly have:
\[
    y_\theta(c(\theta))\geq y_\theta(c(1)) > \alpha(\theta)+q(c(1))(x-\theta)=y_{\theta}(x).
\]

Therefore, $w(\theta) = y_\theta(c(\theta))$ satisfies~\ref{ZP_alpha2} for all $\theta \in (\theta^*, 1]$.

\noindent \textbf{Case 4: $\theta \in (\theta_1,\theta_2)$.} In this case, $y'_\theta(x) < 0$ for all $x \in [x^*, c(1)]$ since $\theta - b(x) \geq \theta-\theta_1> 0$. On the other hand, $y'_\theta(x) > 0$ for all $x \in [0, x^*]$ since $\theta - a(x) \leq \theta-\theta_2< 0$  Thus, $y_\theta(x)$ is uniquely maximized at $x^*$ in the interval $[\theta^*, c(1)]$.

Additionally, we already showed that for any $\theta \in (\theta_1, \theta_2)$ we have that $y_\theta(x^*) < 0$.

For all $x \in [0, \theta)$, $y_{\theta}(x)=q(x)(x - \theta) + m(x) = C(1 - \theta) < 0$, and for $x \in [\theta, \theta^*)$, $y_{\theta}(x)= C(x - \theta) \leq 0$.

Finally, for all $x \in (c(1),1]$:
\[
    0>y_{\theta}(c(\theta))> \alpha(\theta)+q(c(1))(x-\theta)=y_{\theta}(x).
\]

Therefore, $w(\theta) = 0$ satisfies~\ref{ZP_alpha2} for all $\theta \in (\theta_1, \theta_2)$.

We now verify that the values of the primal and dual problems coincide. The value of~\ref{primal} under $\pi^{x^*-\text{NAM}}$ is:
\[
    V^* = \int_{0}^1 \int_{0}^1 \alpha(\theta)\,\mathbf{1}_{\{x \geq \theta^*\}}\,d\pi^{NAM}(\theta, x) 
    = \int_{0}^{\theta_1} \alpha(\theta)\,d\theta+\int_{\theta_2}^{1} \alpha(\theta)\,d\theta.
\]

The value of~\ref{dual} under $(w, q, m)$ is:
\begin{align*}
    \int_0^1 w(\theta)f(\theta)\,d\theta =& V^* 
    + \int_{0}^{\theta_1} q(g_1(\theta))(g_1(\theta) - \theta)f(\theta)\,d\theta+ + \int_{\theta_2}^{\theta^*} q(g_2(\theta))(g_2(\theta) - \theta)f(\theta)\,d\theta\\
    &+ \int_{\theta^*}^{1} q(c(\theta))(c(\theta) - \theta)f(\theta)\,d\theta \\
    =& V^* + \int_{\theta^*}^{x^*} q(x) \left[(g_2^{-1}(x) - x)f(g_2^{-1}(x))\frac{\partial g_2^{-1}(x)}{\partial x}\,dx + (x - c^{-1}(x))f(c^{-1}(x))\frac{\partial c^{-1}(x)}{\partial x}\right]\,dx \\
    &+\int_{x^*}^{c(1)} q(x) \left[(g_1^{-1}(x) - x)f(g_1^{-1}(x))\frac{\partial g_1^{-1}(x)}{\partial x}\,dx + (x - c^{-1}(x))f(c^{-1}(x))\frac{\partial c^{-1}(x)}{\partial x}\right]\,dx\\
    =&V^*+\int_{\theta^*}^{x^*} q(x)\left[\frac{\partial g_2^{-1}(x)}{\partial x} - G(x, g_2^{-1}(x))\right](g_2^{-1}(x) - x)f(g_2^{-1}(x))\,dx\\
    &+\int_{x^*}^{c(1)} q(x)\left[\frac{\partial g_1^{-1}(x)}{\partial x} - G(x, g_1^{-1}(x))\right](g_1^{-1}(x) - x)f(g_1^{-1}(x))\,dx\\
    =&V^*.
\end{align*}
The second equality follows from the change of variables $x = g_1(\theta)$, $x=g_2(\theta)$ and $x = c(\theta)$, and the last equality follows from the fact that $g_1$ and $g_2$ satisfy the differential equation (Lemma~\ref{lem:bijective-function}). 

Hence, by Lemma~\ref{cor:strong-duality}, $\pi^{x^*-\text{NAM}}$ solves~\ref{primal}, $(w, q, m)$ solves~\ref{dual}, and strong duality holds.

\section{Proofs of Section \ref{sec:price_surplus}}\label{sec:proof_price_surplus}

\begin{proof}[Proof of Part 2 of Proposition \ref{prop:revenue-surplus}]
    Recall that the designer’s problem is:
    \[
        \max_{\pi \in \Delta(\Theta \times X)} 
        \int_0^1\int_0^1 [x-(1-\beta)c(\theta)] \cdot \mathbf{1}_{\{x\ge \theta^*\}} \, d\pi(\theta, x) 
        \quad \text{s.t. \ref{BP}, \ref{martingale}, and \ref{price}}.
    \]
    The martingale constraint (\ref{martingale}) implies that for all measurable $B \subseteq X$:
    \[
        \int_0^1\int_B x\,d\pi(\theta,x) = \int_0^1\int_B \theta\,d\pi(\theta,x).
    \]
    Hence, without loss of optimality, we can rewrite the objective as:
    \[
        \int_0^1\int_0^1 [\theta-(1-\beta)c(\theta)] \cdot \mathbf{1}_{\{x\ge \theta^*\}} \, d\pi(\theta, x).
    \]
    Note that $\theta-(1-\beta)c(\theta)>0$ if $\theta>\theta_{\beta}$ and $\theta-(1-\beta)c(\theta)<0$ if $\theta<\theta_{\beta}$. Therefore, an upper bound on the designer’s objective is:
    \[
        \int_{0}^1\int_0^1[\theta-(1-\beta)c(\theta)]\cdot\mathbf{1}_{\{x\geq\theta^*\}}\,d\pi(\theta,x) 
        \;\leq\; \int_{\theta_{\beta}}^1[\theta-(1-\beta)c(\theta)]f(\theta)\,d\theta,
    \]
    where the inequality holds for any $\pi$ satisfying (\ref{BP}), (\ref{martingale}), and (\ref{price}).  

    This bound is attained if and only if, under $\pi$, $f$-almost every type in $[0,\theta_{\beta}]$ does not trade, and $f$-almost every type in $[\theta_{\beta},1]$ does trade. Moreover, such a $\pi$ exists. Indeed, since $\underline{\theta} \geq \theta_{\beta}$, we can construct it by modifying $\pi^{\text{NAM}}$: rather than starting pooling at $\underline{\theta}$ via $g_2$, we start pooling at $\theta_{\beta}$. Formally:
    \[
        d\pi(\theta,x)=
        \begin{cases}
            f(\theta)\delta_{\theta}(x) \,d\theta & \text{if } \theta \in [0,\theta_{\beta}), \\
            f(\theta)\delta_{g_2(\theta)}(x) \,d\theta & \text{if } \theta \in [\theta_{\beta},\theta^*], \\
            f(\theta)\delta_{c(\theta)}(x) \,d\theta & \text{if } \theta \in (\theta^*,1].
        \end{cases}
    \]
\end{proof}

\end{document}